\newcommand{\etal}{~\textit{et al.}\xspace}
\newcommand{\Figure}[1]{Figure~\ref{fig:#1}}
\newcommand{\Section}[1]{Section~\ref{sec:#1}}
\newcommand{\Lemma}[1]{Lemma~\ref{lem:#1}}
\newcommand{\Proposition}[1]{Proposition~\ref{prop:#1}}
\newcommand{\Theorem}[1]{Theorem~\ref{thm:#1}}
\newtheorem{theorem}{Theorem}
\newtheorem{lemma}[theorem]{Lemma}
\newtheorem{proposition}[theorem]{Proposition}
\newtheorem{corollary}[theorem]{Corollary}
\newtheorem{definition}[theorem]{Definition}
\DeclareMathAlphabet{\mathcal}{OMS}{cmsy}{m}{n} 
\newcommand{\longcatfont}[1]{\textsc{#1}\xspace}
\newcommand{\catfont}[1]{\mathcal{#1}}
\newcommand{\functorfont}[1]{\mathrm{#1}\xspace}
\newcommand{\cat}[1]{\catfont{#1}}
\newcommand{\functor}[1]{\functorfont{#1}}
\newcommand{\op}[1]{{#1}^{op}}
\newcommand{\inv}[1]{#1^{-1}}
\newcommand{\F}{\ensuremath{\mathbb{F}}\xspace}
\newcommand{\shows}{\ensuremath{\vdash}}
\newcommand{\mprod}{\mathbin{\text{\footnotesize \ensuremath{\otimes}}}}
\newcommand{\aprod}{\mathbin{\text{\footnotesize \ensuremath{\&}}}}
\newcommand{\msum}{\mathbin{\text{\footnotesize \ensuremath{\parr}}}}
\newcommand{\asum}{\mathbin{\text{\footnotesize \ensuremath{\oplus}}}}
\newcommand{\lolto}{\ensuremath{\multimap}}
\newcommand{\dualize}[1]{\ensuremath{{#1^{\bot}}}}
\newif \ifcomments \commentstrue
\newcommand{\jennifer}[1]{\textbf{\textcolor{red}{[ #1 --- Jennifer]}}}
\newcommand{\steve}[1]{\textbf{\textcolor{red}{[ #1 --- Steve]}}}
\newcommand{\fixme}[1]{\textbf{\textcolor{red}{[ Fixme: #1]}}}
\newcommand{\note}[1]{\textbf{\textcolor{blue}{[ Note: #1 ]}}}
\newcommand{\jennifer}[1]{}
\newcommand{\steve}[1]{}
\newcommand{\fixme}[1]{}
\newcommand{\note}[1]{}
\newcommand{\Set}{\longcatfont{Set}}
\newcommand{\Rel}{\longcatfont{Rel}}
\newcommand{\FinVect}{\longcatfont{FinVect}}
\newcommand{\FinSet}{\longcatfont{FinSet}}
\newcommand{\FinBoolAlg}{\longcatfont{FinBoolAlg}}
\newcommand{\FinPoset}{\longcatfont{FinPoset}}
\newcommand{\FinLat}{\longcatfont{FinLat}}
\newcommand{\Hom}[1]{\ensuremath{\textrm{Hom}(#1)}}
\renewcommand{\star}[1]{{\ensuremath{#1^\ast}}}
\newcommand{\lowerstar}[1]{{\ensuremath{#1_\ast}}}
\newcommand{\mode}[1]{\mathsf{#1}}
\newcommand{\bangfunctor}[1]{\ensuremath{\lceil #1 \rceil}}
\newcommand{\whynotfunctor}[1]{\ensuremath{\lfloor #1 \rfloor}}
\newcommand{\interp}[1]{\llbracket #1 \rrbracket}
\newcommand{\ottdrule}[4][]{{\displaystyle\frac{\begin{array}{l}#2\end{array}}{#3}\quad\ottdrulename{#4}}}
\newcommand{\ottpremise}[1]{ #1 \\}
\newenvironment{ottdefnblock}[3][]{ \framebox{\mbox{#2}} \quad #3 \\[0pt]}{}
\newcommand{\ottnt}[1]{\mathit{#1}}
\newcommand{\ottkw}[1]{\mathbf{#1}}
\newcommand{\ottsym}[1]{#1}
\newcommand{\ottdrulename}[1]{\textsc{#1}}
\newcommand{\ottdrulelinXXAxiom}[1]{\ottdrule[#1]{%
}{
\ottnt{X}  \shows  \ottnt{X}}{%
{\ottdrulename{lin\_Axiom}}{}%
}}
\newcommand{\ottdrulelinXXMConjL}[1]{\ottdrule[#1]{%
\ottpremise{  \Gamma ,  \ottnt{X_{{\mathrm{1}}}} ^{ \mode{m} }   ,  \ottnt{X_{{\mathrm{2}}}} ^{ \mode{m} }    \shows  \Delta \quad  \mode{m}  \in \{   \mode{L}   ,   \mode{P}   \} }%
}{
 \Gamma ,  \ottsym{(}    \ottnt{X_{{\mathrm{1}}}}    \mprod    \ottnt{X_{{\mathrm{2}}}}    \ottsym{)} ^{ \mode{m} }    \shows  \Delta}{%
{\ottdrulename{lin\_MConjL}}{}%
}}
\newcommand{\ottdrulelinXXMConjR}[1]{\ottdrule[#1]{%
\ottpremise{\Gamma_{{\mathrm{1}}}  \shows   \Delta_{{\mathrm{1}}} ,  \ottnt{X_{{\mathrm{1}}}} ^{ \mode{m} }   \quad \Gamma_{{\mathrm{2}}}  \shows   \Delta_{{\mathrm{2}}} ,  \ottnt{X_{{\mathrm{2}}}} ^{ \mode{m} }   \quad  \mode{m}  \in \{   \mode{L}   ,   \mode{P}   \} }%
}{
 \Gamma_{{\mathrm{1}}} , \Gamma_{{\mathrm{2}}}   \shows    \Delta_{{\mathrm{1}}} , \Delta_{{\mathrm{2}}}  ,  \ottsym{(}    \ottnt{X_{{\mathrm{1}}}}    \mprod    \ottnt{X_{{\mathrm{2}}}}    \ottsym{)} ^{ \mode{m} }  }{%
{\ottdrulename{lin\_MConjR}}{}%
}}
\newcommand{\ottdrulelinXXOneL}[1]{\ottdrule[#1]{%
\ottpremise{\Gamma  \shows  \Delta \quad  \mode{m}  \in \{   \mode{L}   ,   \mode{P}   \} }%
}{
 \Gamma ,  1_{ \mode{m} }    \shows  \Delta}{%
{\ottdrulename{lin\_OneL}}{}%
}}
\newcommand{\ottdrulelinXXOneR}[1]{\ottdrule[#1]{%
\ottpremise{ \mode{m}  \in \{   \mode{L}   ,   \mode{P}   \} }%
}{
 \cdot   \shows   1_{ \mode{m} } }{%
{\ottdrulename{lin\_OneR}}{}%
}}
\newcommand{\ottdrulelinXXMDisjL}[1]{\ottdrule[#1]{%
\ottpremise{ \Gamma_{{\mathrm{1}}} ,  \ottnt{X_{{\mathrm{1}}}} ^{ \mode{m} }    \shows  \Delta_{{\mathrm{1}}} \quad  \Gamma_{{\mathrm{2}}} ,  \ottnt{X_{{\mathrm{2}}}} ^{ \mode{m} }    \shows  \Delta_{{\mathrm{2}}} \quad  \mode{m}  \in \{   \mode{L}   ,   \mode{C}   \} }%
}{
  \Gamma_{{\mathrm{1}}} , \Gamma_{{\mathrm{2}}}  ,  \ottsym{(}    \ottnt{X_{{\mathrm{1}}}}    \msum    \ottnt{X_{{\mathrm{2}}}}    \ottsym{)} ^{ \mode{m} }    \shows   \Delta_{{\mathrm{1}}} , \Delta_{{\mathrm{2}}} }{%
{\ottdrulename{lin\_MDisjL}}{}%
}}
\newcommand{\ottdrulelinXXMDisjR}[1]{\ottdrule[#1]{%
\ottpremise{\Gamma  \shows    \Delta ,  \ottnt{X_{{\mathrm{1}}}} ^{ \mode{m} }   ,  \ottnt{X_{{\mathrm{2}}}} ^{ \mode{m} }   \quad  \mode{m}  \in \{   \mode{L}   ,   \mode{C}   \} }%
}{
\Gamma  \shows   \Delta ,  \ottsym{(}    \ottnt{X_{{\mathrm{1}}}}    \msum    \ottnt{X_{{\mathrm{2}}}}    \ottsym{)} ^{ \mode{m} }  }{%
{\ottdrulename{lin\_MDisjR}}{}%
}}
\newcommand{\ottdrulelinXXBotL}[1]{\ottdrule[#1]{%
\ottpremise{ \mode{m}  \in \{   \mode{L}   ,   \mode{C}   \} }%
}{
 \bot_{ \mode{m} }   \shows   \cdot }{%
{\ottdrulename{lin\_BotL}}{}%
}}
\newcommand{\ottdrulelinXXBotR}[1]{\ottdrule[#1]{%
\ottpremise{\Gamma  \shows  \Delta \quad  \mode{m}  \in \{   \mode{L}   ,   \mode{C}   \} }%
}{
\Gamma  \shows   \Delta ,  \bot_{ \mode{m} }  }{%
{\ottdrulename{lin\_BotR}}{}%
}}
\newcommand{\ottdrulelinXXAConjLOne}[1]{\ottdrule[#1]{%
\ottpremise{ \Gamma , \ottnt{A}   \shows  \Delta}%
}{
 \Gamma , \ottnt{A}  \aprod  \ottnt{B}   \shows  \Delta}{%
{\ottdrulename{lin\_AConjL1}}{}%
}}
\newcommand{\ottdrulelinXXAConjLTwo}[1]{\ottdrule[#1]{%
\ottpremise{ \Gamma , \ottnt{B}   \shows  \Delta}%
}{
 \Gamma , \ottnt{A}  \aprod  \ottnt{B}   \shows  \Delta}{%
{\ottdrulename{lin\_AConjL2}}{}%
}}
\newcommand{\ottdrulelinXXAConjR}[1]{\ottdrule[#1]{%
\ottpremise{\Gamma  \shows   \Delta , \ottnt{A}  \quad \Gamma  \shows   \Delta , \ottnt{B} }%
}{
\Gamma  \shows   \Delta , \ottnt{A}  \aprod  \ottnt{B} }{%
{\ottdrulename{lin\_AConjR}}{}%
}}
\newcommand{\ottdrulelinXXTopR}[1]{\ottdrule[#1]{%
}{
\Gamma  \shows   \Delta ,  \top  }{%
{\ottdrulename{lin\_TopR}}{}%
}}
\newcommand{\ottdrulelinXXADisjL}[1]{\ottdrule[#1]{%
\ottpremise{ \Gamma , \ottnt{A}   \shows  \Delta \quad  \Gamma , \ottnt{B}   \shows  \Delta}%
}{
 \Gamma , \ottnt{A}  \asum  \ottnt{B}   \shows  \Delta}{%
{\ottdrulename{lin\_ADisjL}}{}%
}}
\newcommand{\ottdrulelinXXADisjROne}[1]{\ottdrule[#1]{%
\ottpremise{\Gamma  \shows   \Delta , \ottnt{A} }%
}{
\Gamma  \shows   \Delta , \ottnt{A}  \asum  \ottnt{B} }{%
{\ottdrulename{lin\_ADisjR1}}{}%
}}
\newcommand{\ottdrulelinXXADisjRTwo}[1]{\ottdrule[#1]{%
\ottpremise{\Gamma  \shows   \Delta , \ottnt{B} }%
}{
\Gamma  \shows   \Delta , \ottnt{A}  \asum  \ottnt{B} }{%
{\ottdrulename{lin\_ADisjR2}}{}%
}}
\newcommand{\ottdrulelinXXZeroL}[1]{\ottdrule[#1]{%
}{
 \Gamma , \ottsym{0}   \shows  \Delta}{%
{\ottdrulename{lin\_0L}}{}%
}}
\newcommand{\ottdrulelinXXWeakeningL}[1]{\ottdrule[#1]{%
\ottpremise{\Gamma  \shows  \Delta}%
}{
 \Gamma , \ottnt{P}   \shows  \Delta}{%
{\ottdrulename{lin\_WeakeningL}}{}%
}}
\newcommand{\ottdrulelinXXContractionL}[1]{\ottdrule[#1]{%
\ottpremise{  \Gamma , \ottnt{P}  , \ottnt{P}   \shows  \Delta}%
}{
 \Gamma , \ottnt{P}   \shows  \Delta}{%
{\ottdrulename{lin\_ContractionL}}{}%
}}
\newcommand{\ottdrulelinXXFBangL}[1]{\ottdrule[#1]{%
\ottpremise{ \Gamma , \ottnt{P}   \shows  \Delta}%
}{
 \Gamma ,  F_\oc  \, \ottnt{P}   \shows  \Delta}{%
{\ottdrulename{lin\_FBangL}}{}%
}}
\newcommand{\ottdrulelinXXFBangR}[1]{\ottdrule[#1]{%
\ottpremise{ \Gamma ^{\mode P}   \Vdash    \Delta ^{\mode C}  , \ottnt{P} }%
}{
 \Gamma ^{\mode P}   \shows    \Delta ^{\mode C}  ,  F_\oc  \, \ottnt{P} }{%
{\ottdrulename{lin\_FBangR}}{}%
}}
\newcommand{\ottdrulelinXXFWhynotL}[1]{\ottdrule[#1]{%
\ottpremise{  \Gamma ^{\mode P}  , \ottnt{C}   \Vdash   \Delta ^{\mode C} }%
}{
  \Gamma ^{\mode P}  ,  F_\wn  \, \ottnt{C}   \shows   \Delta ^{\mode C} }{%
{\ottdrulename{lin\_FWhynotL}}{}%
}}
\newcommand{\ottdrulelinXXFWhynotR}[1]{\ottdrule[#1]{%
\ottpremise{\Gamma  \shows   \Delta , \ottnt{C} }%
}{
\Gamma  \shows   \Delta ,  F_\wn  \, \ottnt{C} }{%
{\ottdrulename{lin\_FWhynotR}}{}%
}}
\newcommand{\ottdrulelinXXBangL}[1]{\ottdrule[#1]{%
\ottpremise{ \Gamma , \ottnt{A}   \shows  \Delta}%
}{
 \Gamma ,  \bangfunctor{ \ottnt{A} }    \shows  \Delta}{%
{\ottdrulename{lin\_BangL}}{}%
}}
\newcommand{\ottdrulelinXXWhynotR}[1]{\ottdrule[#1]{%
\ottpremise{\Gamma  \shows   \Delta , \ottnt{A} }%
}{
\Gamma  \shows   \Delta ,  \whynotfunctor{ \ottnt{A} }  }{%
{\ottdrulename{lin\_WhynotR}}{}%
}}
\newcommand{\ottdrulelinXXCutLinear}[1]{\ottdrule[#1]{%
\ottpremise{\Gamma_{{\mathrm{1}}}  \shows   \Delta_{{\mathrm{1}}} , \ottnt{A}  \quad  \ottnt{A} , \Gamma_{{\mathrm{2}}}   \shows  \Delta_{{\mathrm{2}}}}%
}{
 \Gamma_{{\mathrm{1}}} , \Gamma_{{\mathrm{2}}}   \shows   \Delta_{{\mathrm{1}}} , \Delta_{{\mathrm{2}}} }{%
{\ottdrulename{lin\_CutLinear}}{}%
}}
\newcommand{\ottdrulelinXXCutProducer}[1]{\ottdrule[#1]{%
\ottpremise{ \Gamma_{{\mathrm{1}}} ^{\mode P}   \Vdash    \Delta_{{\mathrm{1}}} ^{\mode C}  , \ottnt{P}  \quad  \ottnt{P} , \Gamma_{{\mathrm{2}}}   \shows  \Delta_{{\mathrm{2}}}}%
}{
  \Gamma_{{\mathrm{1}}} ^{\mode P}  , \Gamma_{{\mathrm{2}}}   \shows    \Delta_{{\mathrm{1}}} ^{\mode C}  , \Delta_{{\mathrm{2}}} }{%
{\ottdrulename{lin\_CutProducer}}{}%
}}
\newcommand{\ottdrulelinXXCutConsumer}[1]{\ottdrule[#1]{%
\ottpremise{\Gamma_{{\mathrm{1}}}  \shows   \Delta_{{\mathrm{1}}} , \ottnt{C}  \quad   \ottnt{C} , \Gamma_{{\mathrm{2}}}  ^{\mode P}   \Vdash   \Delta_{{\mathrm{2}}} ^{\mode C} }%
}{
  \Gamma_{{\mathrm{1}}} , \Gamma_{{\mathrm{2}}}  ^{\mode P}   \shows    \Delta_{{\mathrm{1}}} , \Delta_{{\mathrm{2}}}  ^{\mode C} }{%
{\ottdrulename{lin\_CutConsumer}}{}%
}}
\newcommand{\ottdrulelinXXCutProducerPlus}[1]{\ottdrule[#1]{%
\ottpremise{ \Gamma_{{\mathrm{1}}} ^{\mode P}   \Vdash    \Delta_{{\mathrm{1}}} ^{\mode C}  , \ottnt{P}  \quad   ( \ottnt{P} )_{ n }  , \Gamma_{{\mathrm{2}}}   \shows  \Delta_{{\mathrm{2}}}}%
}{
  \Gamma_{{\mathrm{1}}} ^{\mode P}  , \Gamma_{{\mathrm{2}}}   \shows    \Delta_{{\mathrm{1}}} ^{\mode C}  , \Delta_{{\mathrm{2}}} }{%
{\ottdrulename{lin\_CutProducerPlus}}{}%
}}
\newcommand{\ottdrulelinXXCutConsumerPlus}[1]{\ottdrule[#1]{%
\ottpremise{\Gamma_{{\mathrm{1}}}  \shows   \Delta_{{\mathrm{1}}} ,  ( \ottnt{C} )_{ n }   \quad   \ottnt{C} , \Gamma_{{\mathrm{2}}}  ^{\mode P}   \Vdash   \Delta_{{\mathrm{2}}} ^{\mode C} }%
}{
  \Gamma_{{\mathrm{1}}} , \Gamma_{{\mathrm{2}}}  ^{\mode P}   \shows    \Delta_{{\mathrm{1}}} , \Delta_{{\mathrm{2}}}  ^{\mode C} }{%
{\ottdrulename{lin\_CutConsumerPlus}}{}%
}}
\newcommand{\ottdrulelinXXLinDualL}[1]{\ottdrule[#1]{%
\ottpremise{\Gamma  \shows   \Delta , \ottnt{A} }%
}{
 \Gamma ,  \dualize{ \ottnt{A} }    \shows  \Delta}{%
{\ottdrulename{lin\_LinDualL}}{}%
}}
\newcommand{\ottdrulelinXXLinDualR}[1]{\ottdrule[#1]{%
\ottpremise{ \Gamma , \ottnt{A}   \shows  \Delta}%
}{
\Gamma  \shows   \Delta ,  \dualize{ \ottnt{A} }  }{%
{\ottdrulename{lin\_LinDualR}}{}%
}}
\newcommand{\ottdrulelinXXProdDualL}[1]{\ottdrule[#1]{%
\ottpremise{\Gamma  \shows   \Delta , \ottnt{P} }%
}{
 \Gamma ,  \star{ \ottnt{P} }    \shows  \Delta}{%
{\ottdrulename{lin\_ProdDualL}}{}%
}}
\newcommand{\ottdrulelinXXConsDualL}[1]{\ottdrule[#1]{%
\ottpremise{\Gamma  \shows   \Delta , \ottnt{C} }%
}{
 \Gamma ,  \lowerstar{ { \ottnt{C} } }    \shows  \Delta}{%
{\ottdrulename{lin\_ConsDualL}}{}%
}}
\newcommand{\ottdrulepersXXPAxiom}[1]{\ottdrule[#1]{%
}{
\ottnt{P}  \Vdash  \ottnt{P}}{%
{\ottdrulename{pers\_PAxiom}}{}%
}}
\newcommand{\ottdrulepersXXCAxiom}[1]{\ottdrule[#1]{%
}{
\ottnt{C}  \Vdash  \ottnt{C}}{%
{\ottdrulename{pers\_CAxiom}}{}%
}}
\newcommand{\ottdrulepersXXMConjL}[1]{\ottdrule[#1]{%
\ottpremise{  \Gamma , \ottnt{P_{{\mathrm{1}}}}  , \ottnt{P_{{\mathrm{2}}}}   \Vdash  \Delta}%
}{
 \Gamma , \ottnt{P_{{\mathrm{1}}}}  \mprod  \ottnt{P_{{\mathrm{2}}}}   \Vdash  \Delta}{%
{\ottdrulename{pers\_MConjL}}{}%
}}
\newcommand{\ottdrulepersXXMConjR}[1]{\ottdrule[#1]{%
\ottpremise{\Gamma_{{\mathrm{1}}}  \Vdash   \Delta_{{\mathrm{1}}} , \ottnt{P_{{\mathrm{1}}}}  \quad \Gamma_{{\mathrm{2}}}  \Vdash   \Delta_{{\mathrm{2}}} , \ottnt{P_{{\mathrm{2}}}} }%
}{
 \Gamma_{{\mathrm{1}}} , \Gamma_{{\mathrm{2}}}   \Vdash    \Delta_{{\mathrm{1}}} , \Delta_{{\mathrm{2}}}  , \ottnt{P_{{\mathrm{1}}}}  \mprod  \ottnt{P_{{\mathrm{2}}}} }{%
{\ottdrulename{pers\_MConjR}}{}%
}}
\newcommand{\ottdrulepersXXPOneL}[1]{\ottdrule[#1]{%
\ottpremise{\Gamma  \Vdash  \Delta}%
}{
 \Gamma ,  1_{\mode P}    \Vdash  \Delta}{%
{\ottdrulename{pers\_POneL}}{}%
}}
\newcommand{\ottdrulepersXXPOneR}[1]{\ottdrule[#1]{%
}{
 \cdot   \Vdash   1_{\mode P} }{%
{\ottdrulename{pers\_POneR}}{}%
}}
\newcommand{\ottdrulepersXXMDisjL}[1]{\ottdrule[#1]{%
\ottpremise{ \Gamma_{{\mathrm{1}}} , \ottnt{C_{{\mathrm{1}}}}   \Vdash  \Delta_{{\mathrm{1}}} \quad  \Gamma_{{\mathrm{2}}} , \ottnt{C_{{\mathrm{2}}}}   \Vdash  \Delta_{{\mathrm{2}}}}%
}{
  \Gamma_{{\mathrm{1}}} , \Gamma_{{\mathrm{2}}}  , \ottnt{C_{{\mathrm{1}}}}  \msum  \ottnt{C_{{\mathrm{2}}}}   \Vdash   \Delta_{{\mathrm{1}}} , \Delta_{{\mathrm{2}}} }{%
{\ottdrulename{pers\_MDisjL}}{}%
}}
\newcommand{\ottdrulepersXXMDisjR}[1]{\ottdrule[#1]{%
\ottpremise{\Gamma  \Vdash    \Delta , \ottnt{C_{{\mathrm{1}}}}  , \ottnt{C_{{\mathrm{2}}}} }%
}{
\Gamma  \Vdash   \Delta , \ottnt{C_{{\mathrm{1}}}}  \msum  \ottnt{C_{{\mathrm{2}}}} }{%
{\ottdrulename{pers\_MDisjR}}{}%
}}
\newcommand{\ottdrulepersXXCBotL}[1]{\ottdrule[#1]{%
}{
 \bot_{\mode C}   \Vdash   \cdot }{%
{\ottdrulename{pers\_CBotL}}{}%
}}
\newcommand{\ottdrulepersXXCBotR}[1]{\ottdrule[#1]{%
\ottpremise{\Gamma  \Vdash  \Delta}%
}{
\Gamma  \Vdash   \Delta ,  \bot_{\mode C}  }{%
{\ottdrulename{pers\_CBotR}}{}%
}}
\newcommand{\ottdrulepersXXBangR}[1]{\ottdrule[#1]{%
\ottpremise{ \Gamma ^{\mode P}   \shows    \Delta ^{\mode C}  , \ottnt{A} }%
}{
 \Gamma ^{\mode P}   \Vdash    \Delta ^{\mode C}  ,  \bangfunctor{ \ottnt{A} }  }{%
{\ottdrulename{pers\_BangR}}{}%
}}
\newcommand{\ottdrulepersXXWhynotL}[1]{\ottdrule[#1]{%
\ottpremise{  \Gamma ^{\mode P}  , \ottnt{A}   \shows   \Delta ^{\mode C} }%
}{
  \Gamma ^{\mode P}  ,  \whynotfunctor{ \ottnt{A} }    \Vdash   \Delta ^{\mode C} }{%
{\ottdrulename{pers\_WhynotL}}{}%
}}
\newcommand{\ottdrulepersXXWeakeningL}[1]{\ottdrule[#1]{%
\ottpremise{\Gamma  \Vdash  \Delta}%
}{
 \Gamma , \ottnt{P}   \Vdash  \Delta}{%
{\ottdrulename{pers\_WeakeningL}}{}%
}}
\newcommand{\ottdrulepersXXContractionL}[1]{\ottdrule[#1]{%
\ottpremise{  \Gamma , \ottnt{P}  , \ottnt{P}   \Vdash  \Delta}%
}{
 \Gamma , \ottnt{P}   \Vdash  \Delta}{%
{\ottdrulename{pers\_ContractionL}}{}%
}}
\newcommand{\ottdrulepersXXCutProducer}[1]{\ottdrule[#1]{%
\ottpremise{ \Gamma_{{\mathrm{1}}} ^{\mode P}   \Vdash    \Delta_{{\mathrm{1}}} ^{\mode C}  , \ottnt{P}  \quad  \ottnt{P} , \Gamma_{{\mathrm{2}}}   \Vdash  \Delta_{{\mathrm{2}}}}%
}{
  \Gamma_{{\mathrm{1}}} ^{\mode P}  , \Gamma_{{\mathrm{2}}}   \Vdash    \Delta_{{\mathrm{1}}} ^{\mode C}  , \Delta_{{\mathrm{2}}} }{%
{\ottdrulename{pers\_CutProducer}}{}%
}}
\newcommand{\ottdrulepersXXCutConsumer}[1]{\ottdrule[#1]{%
\ottpremise{\Gamma_{{\mathrm{1}}}  \Vdash   \Delta_{{\mathrm{1}}} , \ottnt{C}  \quad   \ottnt{C} , \Gamma_{{\mathrm{2}}}  ^{\mode P}   \Vdash   \Delta_{{\mathrm{2}}} ^{\mode C} }%
}{
  \Gamma_{{\mathrm{1}}} , \Gamma_{{\mathrm{2}}}  ^{\mode P}   \Vdash    \Delta_{{\mathrm{1}}} , \Delta_{{\mathrm{2}}}  ^{\mode C} }{%
{\ottdrulename{pers\_CutConsumer}}{}%
}}
\newcommand{\ottdrulepersXXCutProducerPlus}[1]{\ottdrule[#1]{%
\ottpremise{ \Gamma_{{\mathrm{1}}} ^{\mode P}   \Vdash    \Delta_{{\mathrm{1}}} ^{\mode C}  , \ottnt{P}  \quad   ( \ottnt{P} )_{ n }  , \Gamma_{{\mathrm{2}}}   \Vdash  \Delta_{{\mathrm{2}}}}%
}{
  \Gamma_{{\mathrm{1}}} ^{\mode P}  , \Gamma_{{\mathrm{2}}}   \Vdash    \Delta_{{\mathrm{1}}} ^{\mode C}  , \Delta_{{\mathrm{2}}} }{%
{\ottdrulename{pers\_CutProducerPlus}}{}%
}}
\newcommand{\ottdrulepersXXCutConsumerPlus}[1]{\ottdrule[#1]{%
\ottpremise{\Gamma_{{\mathrm{1}}}  \Vdash   \Delta_{{\mathrm{1}}} ,  ( \ottnt{C} )_{ n }   \quad   \ottnt{C} , \Gamma_{{\mathrm{2}}}  ^{\mode P}   \Vdash   \Delta_{{\mathrm{2}}} ^{\mode C} }%
}{
  \Gamma_{{\mathrm{1}}} , \Gamma_{{\mathrm{2}}}  ^{\mode P}   \Vdash    \Delta_{{\mathrm{1}}} , \Delta_{{\mathrm{2}}}  ^{\mode C} }{%
{\ottdrulename{pers\_CutConsumerPlus}}{}%
}}
\newcommand{\ottdrulepersXXProdDualL}[1]{\ottdrule[#1]{%
\ottpremise{\Gamma  \Vdash   \Delta , \ottnt{P} }%
}{
 \Gamma ,  \star{ \ottnt{P} }    \Vdash  \Delta}{%
{\ottdrulename{pers\_ProdDualL}}{}%
}}
\newcommand{\ottdrulepersXXConsDualL}[1]{\ottdrule[#1]{%
\ottpremise{\Gamma  \Vdash   \Delta , \ottnt{C} }%
}{
 \Gamma ,  \lowerstar{ { \ottnt{C} } }    \Vdash  \Delta}{%
{\ottdrulename{pers\_ConsDualL}}{}%
}}
\renewcommand{\ottdrule}[4][]{{\displaystyle\frac{\begin{array}{l}#2\end{array}}{#3}\enskip\ottdrulename{#1}}}
\title{A Linear/Producer/Consumer Model of Classical Linear Logic}
\author{Jennifer Paykin \qquad\qquad Steve Zdancewic
\institute{University of Pennsylvania\\ Philadelphia, USA}
\email{jpaykin@seas.upenn.edu \qquad\qquad stevez@cis.upenn.edu}
}
\begin{document}

\setlength\abovedisplayskip{6pt plus 2pt minus 2pt}%
\setlength\abovedisplayshortskip{4pt plus 2pt minus 2pt}%
\setlength\belowdisplayskip{6pt plus 2pt minus 2pt}%
\setlength\belowdisplayshortskip{4pt plus 2pt minus 2pt}%

\maketitle

\begin{abstract}
This paper defines a new proof- and category-theoretic framework for
\textit{classical linear logic} that separates reasoning into one linear regime
and two persistent regimes corresponding to ! and ?. The resulting
linear/producer/consumer (LPC) logic puts the three classes of
propositions on the same semantic footing, following Benton's
linear/non-linear formulation of intuitionistic linear logic.
Semantically, LPC corresponds to a system of three categories
connected by adjunctions reflecting the linear/producer/consumer
structure. The paper's metatheoretic results include admissibility
theorems for the cut and duality rules, and a translation of the LPC
logic into category theory. The work also presents several
concrete instances of the LPC model.

\end{abstract}

\section{Introduction}
\label{sec:intro}
\begin{wrapfigure}{R}{0.35\textwidth}
\vspace{-8mm}
\begin{center}\scalebox{0.9}{\begin{tikzpicture}
    \node (L) [circle,draw=black,minimum size=1cm] at ( 2,0) {$ \cat{L} $} ;
    \node (P) [circle,draw=black,minimum size=1cm] at (-2,0) {$ \cat{P} $} ;
    \node (perp) at (0,0) {$\top$};

    \draw [bend right=15,-triangle 90] (L) to node [auto,swap] {$\functor{G}$} (P) ;
    \draw [bend right=15,-triangle 90] (P) to node [auto,swap] {$\functor{F}$} (L) ;
\end{tikzpicture}}\end{center}
\vspace{-1em}
\caption{Categorical model of ILL with linear and persistent categories.}
\label{fig:2cat}
\end{wrapfigure}
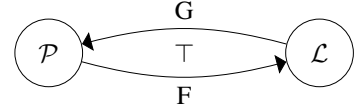

Since its introduction by Girard in 1987, linear logic has
been found to have a range of applications in logic, proof theory, and
programming languages. Its power stems from its ability to carefully manage
resource usage: it makes a crucial distinction between \textit{linear}
(used exactly once) and \textit{persistent} (unrestricted use)
hypotheses, internalizing the latter via the $ \oc $ connective.  From a
semantic point of view, the literature has converged (following
Benton~\cite{Benton94mixed}) on an
interpretation of $\oc$ as a comonad given by \( \oc = \functor{F} \circ \functor{G}
\) where $\functor{F} \dashv \functor{G}$ is a symmetric monoidal adjunction between
categories $ \cat{L} $ and $ \cat{P} $ arranged as shown in \Figure{2cat}.

Here, $ \cat{L} $ (for ``linear'') is a symmetric monoidal closed
category and $ \cat{P} $ (for ``persistent'') is
a cartesian category.  This is, by now, a standard way of interpreting
\textit{intuitionistic} linear logic (for details, see
Melli{\`e}s~\cite{Mel09}).  
If, in addition, the category $ \cat{L} $ is $*$-autonomous, the
structure above is sufficient to interpret \textit{classical} linear
logic, where the monad $\wn$ is determined by \( \wn =  \dualize{ \ottsym{(}   \op{ \functor{F} }  \, \ottsym{(}   \op{ \functor{G} }  \, \ottsym{(}   \dualize{ \ottsym{-} }   \ottsym{)}  \ottsym{)}  \ottsym{)} }  \).  While sound, this situation unnecessarily commits
to a particular implementation of $\wn$ in term of $ \op{  \cat{P}  } $.
The $ \textsc{LPC} $ framework absolves us of this commitment by opening up
a new range of semantic models, discussed in \Section{examples}.

With that motivation, this paper defines a proof- and
category-theoretic framework for full \textit{classical linear logic}
that uses \textit{two} persistent categories: one corresponding to $\oc$ and one
to $\wn$.  The resulting categorical structure is shown in
Figure~\ref{fig:3cat}, where $ \cat{P} $ takes the place of the
``producing'' category, in duality with $ \cat{C} $ as the ``consuming''
category. This terminology comes from the observations that:
\[ \begin{array}{ccc}
    \oc  \ottnt{A}  \shows   1  &\qquad\qquad\qquad&  \bot   \shows  \wn  \ottnt{A} \\
    \oc  \ottnt{A}  \shows  \ottnt{A} &\qquad\qquad\qquad& \ottnt{A}  \shows  \wn  \ottnt{A} \\
    \oc  \ottnt{A}  \shows  \oc  \ottnt{A}  \mprod  \oc  \ottnt{A} &\qquad\qquad\qquad& \wn  \ottnt{A}  \msum  \wn  \ottnt{A}  \shows  \wn  \ottnt{A}
\end{array} \]
%
%
Intuitively, the left group means that $\oc  \ottnt{A}$ is sufficient to \textit{produce} any 
number of copies of $\ottnt{A}$ and, dually, the right group says that
$\wn  \ottnt{A}$ can \textit{consume} any number of copies of $\ottnt{A}$.


%


\section{$ \textsc{LPC} $ Logic}
\label{sec:logic}
\begin{wrapfigure}[11]{R}{0.45\textwidth}
\small
\vspace{-8mm}
\centering{
\begin{tikzpicture}
    \node (L) [circle,draw=black,minimum size=0.5cm] at ( 0, 2) {$ \cat{L} $} ;
    \node (P) [circle,draw=black,minimum size=0.5cm] at (-2,-1) {$ \cat{P} $} ;
    \node (C) [circle,draw=black,minimum size=0.5cm] at ( 2,-1) {$ \cat{C} $} ;
    \node [rotate=145] at (-1,0.5) {$ \dashv $} ;
    \node [rotate=215] at ( 1,0.5) {$ \dashv $} ;

    \draw [bend right=15,-triangle 90] (L) to node [auto,swap] {\small $ \bangfunctor{ \ottsym{-} } $} (P) ;
    \draw [bend left=15 ,-triangle 90] (L) to node [auto     ] {\small $ \whynotfunctor{ \ottsym{-} } $} (C) ;

    \draw [bend right=15,-triangle 90] (P) to node [auto,swap] {\small $ F_\oc $} (L) ;
    \draw [bend left=15 ,-triangle 90] (C) to node [auto     ] {\small $ F_\wn $} (L) ;

    \draw [loop above, looseness=8, -triangle 90] (L) to node [auto] {\small $ \dualize{ \ottsym{(}  \ottsym{-}  \ottsym{)} } $} (L) ;
    \draw [bend right=12,-triangle 90] (P) to node [auto,swap] {\small $ \lowerstar{ \ottsym{(}  \ottsym{-}  \ottsym{)} } $} (C) ;
    \draw [bend right=12,-triangle 90] (C) to node [auto,swap] {\small $ \star{ \ottsym{(}  \ottsym{-}  \ottsym{)} } $} (P) ;
\end{tikzpicture}
}
\vspace{-1mm}
\caption{Categorical model of classical linear logic with linear, producing and consuming categories.}
\label{fig:3cat} 
\end{wrapfigure}
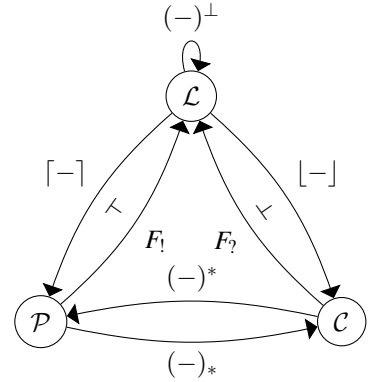

The syntax of the $ \textsc{LPC} $ logic is made up of three
syntactic forms for propositions: linear propositions $\ottnt{A}$,
producer propositions $\ottnt{P}$, and consumer propositions $\ottnt{C}$.
%
\[\begin{array}{ccccccccc}
    \ottnt{A} & ::= &  \top  & \mid & \ottnt{A_{{\mathrm{1}}}}  \aprod  \ottnt{A_{{\mathrm{2}}}}
          & \mid & \ottsym{0} & \mid & \ottnt{A_{{\mathrm{1}}}}  \asum  \ottnt{A_{{\mathrm{2}}}} \\
          & \mid &  1_{\mode L}  & \mid & \ottnt{A_{{\mathrm{1}}}}  \mprod  \ottnt{A_{{\mathrm{2}}}}
          & \mid &  \bot_{\mode L}  & \mid & \ottnt{A_{{\mathrm{1}}}}  \msum  \ottnt{A_{{\mathrm{2}}}} \\
          & \mid &  F_\oc  \, \ottnt{P} & \mid &  F_\wn  \, \ottnt{C} \\
    \ottnt{P} & ::= &  1_{\mode P}  & \mid & \ottnt{P_{{\mathrm{1}}}}  \mprod  \ottnt{P_{{\mathrm{2}}}} & \mid &  \bangfunctor{ \ottnt{A} }   \\
    \ottnt{C} & ::= &  \bot_{\mode C}  & \mid & \ottnt{C_{{\mathrm{1}}}}  \msum  \ottnt{C_{{\mathrm{2}}}} & \mid &  \whynotfunctor{ \ottnt{A} } 
\end{array}\]
The syntactic form of a proposition is called its \emph{mode}---linear
$ \mode{L} $, producing $ \mode{P} $ or consuming $ \mode{C} $. The meta-variable
$\ottnt{X}$ ranges over propositions of any mode, and the tagged
meta-variable $ \ottnt{X} ^{ \mode{m} } $ ranges over propositions of mode $\mode{m}$.  The
term \emph{persistent} refers to either producer or consumer propositions.

$ \textsc{LPC} $ replaces the usual constructors $ \oc $ and $ \wn $ with two pairs of connectives:
$ F_\oc $ and $ \bangfunctor{ \ottsym{-} } $ for $ \oc $ and  $ F_\wn $ and $ \whynotfunctor{ \ottsym{-} } $ for $ \wn $.
If $\ottnt{A}$ is a linear proposition, $ \bangfunctor{ \ottnt{A} } $ is a producer
and $ \whynotfunctor{ \ottnt{A} } $ is a consumer. On the other hand, a producer proposition
$\ottnt{P}$ may be ``frozen'' into a linear proposition $ F_\oc  \, \ottnt{P}$, effectively discarding its
persistent characteristics. Similarly for a consumer $\ottnt{C}$,
$ F_\wn  \, \ottnt{C}$ is linear. The linear propositions $\oc  \ottnt{A}$ and $\wn  \ottnt{A}$ are encoded
in this system as $ F_\oc  \, \ottsym{(}   \bangfunctor{ \ottnt{A} }   \ottsym{)}$ and $ F_\wn  \, \ottsym{(}   \whynotfunctor{ \ottnt{A} }   \ottsym{)}$ respectively. 


The inference rules of the logic are shown in 
Figures~\ref{fig:linrules-linear} and~\ref{fig:auxrules}.  There are two
judgments: the linear sequent $\Gamma  \shows  \Delta$ and the
persistent sequent $\Gamma  \Vdash  \Delta$.  In the linear sequent, the
(unordered) contexts $\Gamma$ and $\Delta$ may be made up of
propositions of any mode; in the persistent sequent, the
contexts may contain only persistent propositions.  
The meta-variable $ \Gamma ^{\mode P} $ refers to contexts made up entirely 
of producer propositions, and $ \Delta ^{\mode C} $ refers to contexts
of consumer propositions.

\begin{figure}[htb]
\small
\[
\begin{array}{cccc}

    \multicolumn{2}{c}{
    \ottdrulelinXXAxiom{$  \textsc{Ax}  ^{\vdash} $}}
    &\ottdrulelinXXTopR{$    \top  _{  \mode{L}  }  ^{\vdash}  \text{-R} $}
    &\ottdrulelinXXZeroL{$    0  _{  \mode{L}  }  ^{\vdash}  \text{-L} $}
    \\ \\
    \ottdrulelinXXAConjLOne{$    \aprod  _{  \mode{L}  }  ^{\vdash}  \text{-L}   \ottsym{1}$}
    &\ottdrulelinXXAConjLTwo{$    \aprod  _{  \mode{L}  }  ^{\vdash}  \text{-L}   \ottsym{2}$}
    &
    \ottdrulelinXXADisjROne{$    \asum  _{  \mode{L}  }  ^{\vdash}  \text{-R}   \ottsym{1}$}
    &\ottdrulelinXXADisjRTwo{$    \asum  _{  \mode{L}  }  ^{\vdash}  \text{-R}   \ottsym{2}$}
    \\ \\
    \multicolumn{2}{c}{\ottdrulelinXXAConjR{$    \aprod  _{  \mode{L}  }  ^{\vdash}  \text{-R} $}}
    &\multicolumn{2}{c}{\ottdrulelinXXADisjL{$    \asum  _{  \mode{L}  }  ^{\vdash}  \text{-L} $}}
    \\ \\
    \multicolumn{4}{c}{
    \ottdrulelinXXMConjL{$   \mprod  ^{\vdash}  \text{-L} $}
    \qquad\qquad
    \ottdrulelinXXMConjR{$   \mprod  ^{\vdash}  \text{-R} $}}
    \\ \\
    \multicolumn{2}{c}{\ottdrulelinXXOneL{$   1  ^{\vdash}  \text{-L} $}}
    &
    \multicolumn{2}{c}{\ottdrulelinXXOneR{$   1  ^{\vdash}  \text{-R} $}}
    \\ \\
    \multicolumn{4}{c}{
    \ottdrulelinXXMDisjL{$   \msum  ^{\vdash}  \text{-L} $}
    \qquad\qquad
    \ottdrulelinXXMDisjR{$   \msum  ^{\vdash}  \text{-R} $}}
    \\ \\
    \multicolumn{2}{c}{\ottdrulelinXXBotL{$   \bot  ^{\vdash}  \text{-L} $}}
    &
    \multicolumn{2}{c}{\ottdrulelinXXBotR{$   \bot  ^{\vdash}  \text{-R} $}}
\end{array}
\]
\vspace{-2em}
\caption{Inference Rules for Linear Sequent}
\label{fig:linrules-linear}
\end{figure}

The linear inference rules in 
Figures \ref{fig:linrules-linear} and \ref{fig:linrules-persistent}
encompass rules for the units and the linear operators
$ \asum $, $ \aprod $, $ \mprod $ and $ \msum $. It is worth noting
that the multiplicative product $ \mprod $ is defined only on linear and producer propositions,
while the multiplicative sum $ \msum $ is defined only on linear and consumer 
propositions.\footnote{The persistent operators in this paper are necessary
for $ \textsc{LPC} $, but in general are not restricted to the sum and product. 
Other operators, like $\rightarrow$ or $\vee$, could be incorporated
so long as every producer operator has a dual for consumers.}

Weakening and contraction can be applied for producers on the left-hand-side
and consumers on the right-hand-side of both the linear and persistent sequents.
For producers, that is:
\[  \ottdrulelinXXWeakeningL{$   \textsc{W}  ^{\vdash}  \text{-L} $}
    \qquad
    \ottdrulepersXXWeakeningL{$   \textsc{W}  ^{\Vdash}  \text{-L} $}
    \qquad
    \ottdrulelinXXContractionL{$   \textsc{C}  ^{\vdash}  \text{-L} $}
    \qquad
    \ottdrulepersXXContractionL{$   \textsc{C}  ^{\Vdash}  \text{-L} $}
\]

The rules for the operators $ F_\oc $, $ F_\wn $, $ \bangfunctor{ \ottsym{-} } $ and $ \whynotfunctor{ \ottsym{-} } $
are given in \Figure{structrules}. These rules encode dereliction and promotion 
for $ \oc $ and $ \wn $ by passing through the adjunction. 
For example:
\vspace{-1ex}
{\small \[
    \inferrule* { \Gamma , \ottnt{A}   \shows  \Delta}
                { \Gamma , \oc  \ottnt{A}   \shows  \Delta}
\qquad \Rightarrow \qquad 
    \inferrule*{  \Gamma , \ottnt{A}   \shows  \Delta}
               {\inferrule*{  \Gamma ,  \bangfunctor{ \ottnt{A} }    \shows  \Delta}
                           {  \Gamma ,  F_\oc  \,  \bangfunctor{ \ottnt{A} }    \shows  \Delta}}
\qquad \qquad 
    \inferrule* { \Gamma ^{\oc}   \shows    \Delta ^{\wn}  , \ottnt{A} }
                { \Gamma ^{\oc}   \shows    \Delta ^{\wn}  , \oc  \ottnt{A} }
\qquad \Rightarrow \qquad
    \inferrule*{  \Gamma ^{\mode P}   \shows    \Delta ^{\mode C}  , \ottnt{A}  }
               {\inferrule*{  \Gamma ^{\mode P}   \Vdash    \Delta ^{\mode C}  ,  \bangfunctor{ \ottnt{A} }  }
                           {  \Gamma ^{\mode P}   \shows    \Delta ^{\mode C}  ,  F_\oc  \,  \bangfunctor{ \ottnt{A} }  }}
\]}

\begin{figure}[t]
\begin{subfigure}[b]{.6\textwidth}
\small
\[
\begin{array}{cc}
    \ottdrulepersXXPAxiom{$   \textsc{Ax}  _{  \mode{P}  }  ^{\Vdash} $}
    &\ottdrulepersXXCAxiom{$   \textsc{Ax}  _{  \mode{C}  }  ^{\Vdash} $}
    \\ \\
\multicolumn{2}{c}{
    \ottdrulepersXXMConjL{$    \mprod  _{  \mode{P}  }  ^{\Vdash}  \text{-L} $} \quad 
    \ottdrulepersXXMConjR{$    \mprod  _{  \mode{P}  }  ^{\Vdash}  \text{-R} $}
}
    \\ \\
    \ottdrulepersXXPOneL{$    1  _{  \mode{P}  }  ^{\Vdash}  \text{-L} $}
    &\ottdrulepersXXPOneR{$    1  _{  \mode{P}  }  ^{\Vdash}  \text{-R} $}
    \\ \\
\multicolumn{2}{c}{
    \ottdrulepersXXMDisjL{$    \msum  _{  \mode{C}  }  ^{\Vdash}  \text{-L} $} \quad 
    \ottdrulepersXXMDisjR{$    \msum  _{  \mode{C}  }  ^{\Vdash}  \text{-R} $}
}
    \\ \\
    \ottdrulepersXXCBotL{$    \bot  _{  \mode{C}  }  ^{\Vdash}  \text{-L} $}
    &\ottdrulepersXXCBotR{$    \bot  _{  \mode{C}  }  ^{\Vdash}  \text{-R} $}
\end{array}
\]
\vspace{-1em}
\caption{Inference Rules for Persistent Sequent}
\label{fig:linrules-persistent}
\end{subfigure}
\begin{subfigure}[b]{.4\textwidth}
\small
\[
\begin{array}{cc}
     \ottdrulelinXXFBangL{$   F_\oc   \text{-L} $}
    &\ottdrulelinXXFBangR{$   F_\oc   \text{-R} $}
   \\ \\
    \ottdrulelinXXFWhynotL{$   F_\wn   \text{-L} $}
    &\ottdrulelinXXFWhynotR{$   F_\wn   \text{-R} $}
    \\ \\
     \ottdrulelinXXBangL{$   \bangfunctor{ \ottsym{-} }   \text{-L} $}
    &\ottdrulepersXXBangR{$   \bangfunctor{ \ottsym{-} }   \text{-R} $}
   \\ \\
    \ottdrulepersXXWhynotL{$   \whynotfunctor{ \ottsym{-} }   \text{-L} $}
    &\ottdrulelinXXWhynotR{$   \whynotfunctor{ \ottsym{-} }   \text{-R} $}
\end{array}
\]
\vspace{-1em}
\caption{Adjunction Inference Rules}
\label{fig:structrules}
\end{subfigure}
\begin{subfigure}[b]{\textwidth}
\small
\[ \begin{array}{cc}
    \\
    \multicolumn{2}{c}{\ottdrulelinXXCutLinear{$   \textsc{Cut}  _{  \mode{L}  }  ^{\vdash} $}}
    \\ \\
    \ottdrulelinXXCutProducer{$   \textsc{Cut}  _{  \mode{P}  }  ^{\vdash} $}
    &
    \ottdrulepersXXCutProducer{$   \textsc{Cut}  _{  \mode{P}  }  ^{\Vdash} $}
    \\ \\
    \ottdrulelinXXCutConsumer{$   \textsc{Cut}  _{  \mode{C}  }  ^{\vdash} $}
    &
    \ottdrulepersXXCutConsumer{$   \textsc{Cut}  _{  \mode{C}  }  ^{\Vdash} $}
\end{array} \]
\vspace{-1em}
\caption{$ \textsc{Cut} $ Inference Rules}
\label{fig:cutrules}
\end{subfigure}
\caption{Persistent and Auxiliary Inference Rules}
\label{fig:auxrules}
\end{figure}
%

\vspace{-1.5em}
\paragraph*{Displacement.}

The commas on the left-hand-side of both the linear and persistent 
sequents intuitively correspond to the $ \mprod $ operator, 
and the commas on the right correspond to $ \msum $.
This correspondence motivates the context restriction in the rules that move
between the linear and persistent regimes. The restriction ensures that 
almost all of the propositions
have the ``natural'' mode---producers on the left and consumers on the right.
We say ``almost'' because the principal formula in each of these rules defies this
classification. We call such propositions \emph{displaced}.

\begin{definition}
    In a derivation of $\Gamma  \Vdash  \Delta$, a producer $\ottnt{P}$ is displaced 
    if it appears in $\Delta$. A consumer $\ottnt{C}$ is displaced if it appears in $\Gamma$.
\end{definition}

\begin{proposition}[Displacement]\label{prop:displacement}
    Every derivation of $\Gamma  \Vdash  \Delta$ contains exactly one
    displaced proposition.
\end{proposition}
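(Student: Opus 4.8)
The plan is to prove the statement by induction on the structure of the persistent derivation, establishing as an invariant that \emph{every} persistent sequent $\Gamma \Vdash \Delta$ occurring as the conclusion of a subderivation has exactly one displaced proposition. Equivalently, the contexts always take one of two mutually exclusive shapes: either $\Gamma$ consists entirely of producers while $\Delta$ contains exactly one producer among otherwise-consumers, or $\Delta$ consists entirely of consumers while $\Gamma$ contains exactly one consumer among otherwise-producers. Reading off the displaced proposition from whichever shape applies then gives the claim.

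First I would dispatch the leaves. The axioms $\ottnt{P} \Vdash \ottnt{P}$ and $\ottnt{C} \Vdash \ottnt{C}$ each have exactly one displaced occurrence---the right-hand producer and the left-hand consumer, respectively---and the same is immediate for $\cdot \Vdash 1_{\mode{P}}$ and $\bot_{\mode{C}} \Vdash \cdot$. Then I would treat all the single-premise rules that stay within the persistent regime: exchange, the introductions $\mprod$-L and $\msum$-R, the units $1_{\mode{P}}$-L and $\bot_{\mode{C}}$-R, weakening, contraction, and the duality rules \textsc{pers\_ProdDualR} and \textsc{pers\_ConsDualL}. Each of these only introduces, duplicates, or rearranges propositions sitting in their natural position, so the number of displaced occurrences is unchanged and the invariant transfers verbatim from premise to conclusion. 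The remaining duality rules \textsc{pers\_ProdDualL} and \textsc{pers\_ConsDualR} convert the displaced right-producer into a displaced left-consumer (and dually), again preserving the count of exactly one.

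The crux is the collection of two-premise rules---$\mprod$-R, $\msum$-L, and the four persistent cut rules. Here I would first isolate the following \emph{pinning} observation: whenever a premise has the shape $\Sigma \Vdash \Lambda, \ottnt{P}$, the named producer $\ottnt{P}$ is itself displaced, so by the induction hypothesis it must be the \emph{unique} displaced proposition of that premise, forcing $\Sigma$ to consist entirely of producers and $\Lambda$ entirely of consumers; dually, a premise $\ottnt{C}, \Sigma \Vdash \Lambda$ has $\ottnt{C}$ as its unique displaced proposition. With this in hand, for $\mprod$-R both premises contribute no displaced proposition apart from $\ottnt{P_1}$ and $\ottnt{P_2}$, which are merged into the single displaced $\ottnt{P_1} \mprod \ottnt{P_2}$ on the right (and symmetrically for $\msum$-L). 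For each cut, the cut formula sits in a natural position in one premise and in the unique displaced position in the other, so after cutting it the conclusion inherits its single displaced proposition from exactly one side. This is the step I expect to be the main obstacle, because it is precisely where a careless count could collapse to zero or inflate to two: in every case one must verify that the principal or cut formula accounts for exactly one displaced occurrence and that the surrounding contexts are thereby forced to be entirely natural.

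Finally, the two rules whose premise is a \emph{linear} sequent---\textsc{pers\_BangR} and \textsc{pers\_WhynotL}---are handled without the induction hypothesis, since it does not apply to a linear premise. Their syntactic side conditions already restrict the premise to $\Gamma^{\mode{P}} \shows \Delta^{\mode{C}}, \ottnt{A}$ (respectively $\Gamma^{\mode{P}}, \ottnt{A} \shows \Delta^{\mode{C}}$), so in the conclusion $\Gamma^{\mode{P}}$ is all producers on the left and $\Delta^{\mode{C}}$ all consumers on the right, with the newly introduced $\bangfunctor{\ottnt{A}}$ (a producer on the right) or $\whynotfunctor{\ottnt{A}}$ (a consumer on the left) as the unique displaced proposition. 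Checking these two cases closes the induction and yields the proposition.
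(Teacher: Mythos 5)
Your proposal is correct and takes exactly the paper's route: the paper proves \Proposition{displacement} by induction on the derivation, which is precisely your case analysis spelled out (the leaves, the regime-preserving single-premise rules, the two-premise rules via your ``pinning'' observation, and the two rules \textsc{pers\_BangR}/\textsc{pers\_WhynotL} whose side conditions handle the linear premise). No gaps; your write-up is just a more explicit version of the paper's one-line proof.
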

\begin{proof}
    By induction on the derivation.
\end{proof}

\paragraph*{Cut.}

The cut rules are presented in \Figure{cutrules}. 
Notice that the rules with persistent cut terms have the following property:
whenever the cut term is displaced in a subderivation,
that derivation must be persistent and satisfy the restrictions of \Proposition{displacement}.
Intuitively, only persistent judgments can derive displaced propositions.

To show admissibility of the $ \textsc{Cut} $ rules, it is sufficient to show
admissibility of an equivalent set of rules called $  \textsc{Cut}  + $.
The versions differ in their treatment of persistent cut terms. The $  \textsc{Cut}  + $ formulation
uses the observation that when a persistent proposition is \emph{not} displaced in a sequent,
it can be replicated any number of times. Let $ ( \ottnt{X} )_{ n } $ be $n$ copies of
a proposition $\ottnt{X}$. It is easy to see that the
following propositions are admissible in the linear sequent (and similarly for the persistent sequent):
{\small \begin{align*} 
    \inferrule* { \Gamma ,  ( \ottnt{P} )_{ n }    \shows  \Delta} { \Gamma , \ottnt{P}   \shows  \Delta} 
    \qquad\qquad
    \inferrule* {\Gamma  \shows   \Delta ,  ( \ottnt{C} )_{ n }  } {\Gamma  \shows   \Delta , \ottnt{C} }
\end{align*}}
Thus the $  \textsc{Cut}  + $ rules, given 
below, are equivalent to the $ \textsc{Cut} $ rules.
{\small
 \[ \begin{array}{cc}
     \multicolumn{2}{c}{\ottdrulelinXXCutLinear{$    \textsc{Cut}  +  _{  \mode{L}  }  ^{\vdash} $}}
     \\ \\
     \ottdrulelinXXCutProducerPlus{$    \textsc{Cut}  +  _{  \mode{P}  }  ^{\vdash} $}
     &\qquad
     \ottdrulepersXXCutProducerPlus{$    \textsc{Cut}  +  _{  \mode{P}  }  ^{\Vdash} $}
     \\ \\
     \ottdrulelinXXCutConsumerPlus{$    \textsc{Cut}  +  _{  \mode{C}  }  ^{\vdash} $}
     &\qquad
     \ottdrulepersXXCutConsumerPlus{$    \textsc{Cut}  +  _{  \mode{C}  }  ^{\Vdash} $}
 \end{array} \]}

\begin{lemma}[$  \textsc{Cut}  + $ Admissibility]
    \label{lem:admissibility}
    The $  \textsc{Cut}  + $ rules are admissible in $ \textsc{LPC} $.
\end{lemma}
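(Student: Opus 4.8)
The plan is to prove \Lemma{admissibility} by the standard Gentzen-style argument, reducing every application of a $ \textsc{Cut} + $ rule to cut-free subderivations via a lexicographic induction. The primary induction is on the complexity of the cut formula (the linear $\ottnt{A}$, producer $\ottnt{P}$, or consumer $\ottnt{C}$ being cut), and the secondary induction is on the combined size of the two premise derivations. Each of the five $ \textsc{Cut} + $ rules is treated by case analysis on the last inference rule used in each of its two premises. Because the system is closed under the left/right and producer/consumer dualities, the cases for $ \textsc{Cut} + $ on a consumer mirror those for a producer, and the persistent ($\Vdash$) cuts mirror the linear ($\shows$) ones, so it suffices to carry out the producer and linear cases in detail.

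First I would dispatch the \emph{axiom cases}: when one premise is $\ottnt{X}  \shows  \ottnt{X}$ (or a persistent axiom), the conclusion of the cut is literally the other premise, so the cut vanishes. Next come the \emph{commutative cases}, in which the cut formula is not principal in the last rule of one premise; here the cut is permuted upward past that rule, strictly decreasing the secondary measure. This is the point at which the $ \textsc{Cut} + $ formulation earns its keep: since the cut term already occurs $n$ times in the relevant premise, permuting past a $ \textsc{W} $ or $ \textsc{C} $ on the cut formula is handled simply by decreasing or increasing $n$, so weakening and contraction on producers and consumers never obstruct the permutation.

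The substance of the proof is in the \emph{principal cases}, where the cut formula is principal on both sides. For the multiplicative and additive connectives and their units, the standard reductions apply and replace the cut by cuts on strictly smaller subformulas, discharged by the primary induction. The delicate principal cases are the four adjunction connectives. For instance, cutting $ \bangfunctor{ \ottnt{A} } $ pits its right rule (pers\_BangR, a persistent inference whose premise is the linear sequent $ \Gamma ^{\mode P}   \shows    \Delta ^{\mode C}  , \ottnt{A} $) against the $n$ copies introduced on the left by lin\_BangL. The reduction exposes $n$ occurrences of $\ottnt{A}$, performs $n$ cuts on the smaller formula $\ottnt{A}$ against copies of the persistent premise, and then collapses the duplicated producer and consumer contexts using admissible contraction on the left-producers and right-consumers. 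Throughout, I would invoke the displacement property (\Proposition{displacement}) to pin down the shape of the subderivation that produces the displaced cut formula: since only a persistent judgment can place a producer on the right (or a consumer on the left), the case analysis on that premise is restricted to the persistent right-rules for $\ottnt{P}$.

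The main obstacle is precisely this replication-and-contraction step in the exponential principal cases. One must show that a single persistent derivation of the displaced cut formula can legitimately be reused to feed all $n$ linear cuts, and that merging the resulting duplicated contexts is sound---which relies on the fact that the producer context on the left and the consumer context on the right of the persistent premise admit contraction. Ensuring that the induction measure genuinely decreases here (each of the $n$ spawned cuts is on a strictly smaller formula, so the primary induction applies, even though the persistent derivation is copied) is the crux of the argument and the place where the three-mode, two-judgment structure of $ \textsc{LPC} $ must be handled with care.
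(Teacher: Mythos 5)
Your proposal is correct and follows essentially the same route as the paper: a lexicographic induction with the cut formula as the primary measure and the sum of the premise depths as the secondary one, with the $( \ottnt{X} )_{ n }$ replication in the $ \textsc{Cut} + $ rules absorbing weakening and contraction on the cut term, displacement constraining the non-principal cases, and the exponential/persistent principal cases resolved by replicating the persistent premise and contracting the duplicated producer/consumer contexts. The paper organizes the replication slightly differently (first re-applying the IH on the same cut formula at smaller depth to clear the remaining copies, then cutting on the subformulas), but this is the same argument in substance.
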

\begin{proof}
    Let $\mathcal{D}_{{\mathrm{1}}}$ and $\mathcal{D}_{{\mathrm{2}}}$ be the hypotheses of one of the cut rules.
    The proof is by induction on the cut term primarily
    and the sum of the depths of $\mathcal{D}_{{\mathrm{1}}}$ and $\mathcal{D}_{{\mathrm{2}}}$ secondly.

    \noindent
    \begin{minipage}[t]{0.70\textwidth}
        \qquad 1.\enspace Suppose $\mathcal{D}_{{\mathrm{1}}}$ or $\mathcal{D}_{{\mathrm{2}}}$ ends in a weakening or contraction rule on the cut term.
        In particular, consider the weakening case where the cut term is a producer
        and $\mathcal{D}_{{\mathrm{2}}}$ is a linear judgment. In this case $\mathcal{D}_{{\mathrm{1}}}$ is a derivation
        of $ \Gamma_{{\mathrm{1}}} ^{\mode P}   \Vdash    \Delta_{{\mathrm{1}}} ^{\mode C}  , \ottnt{P} $ and $\mathcal{D}_{{\mathrm{2}}}$ is the derivation shown to the right.
        By the inductive hypothesis on $\ottnt{P}$, $\mathcal{D}_{{\mathrm{1}}}$ and $\mathcal{D}'_{{\mathrm{2}}}$,
        there exists a cut-free derivation of \mbox{$  \Gamma_{{\mathrm{1}}} ^{\mode P}  , \Gamma_{{\mathrm{2}}}   \shows    \Delta_{{\mathrm{1}}} ^{\mode C}  , \Delta_{{\mathrm{2}}} $}.
    \end{minipage} \quad \begin{minipage}[t]{0.25\textwidth}
        \[
           \mathcal{D}_{{\mathrm{2}}} = \inferrule* [right=$  \textsc{W}  \text{-L} $]
             { \inferrule* { \mathcal{D}'_{{\mathrm{2}}} } {  \Gamma_{{\mathrm{2}}} ,  ( \ottnt{P} )_{ n }    \shows  \Delta_{{\mathrm{2}}} } }
             { \Gamma_{{\mathrm{2}}} ,  ( \ottnt{P} )_{ n  +   1  }    \shows  \Delta_{{\mathrm{2}}}}
        \]
    \end{minipage}

    \begin{asparaenum} \setcounter{enumi}{1}
    \item If $\mathcal{D}_{{\mathrm{1}}}$ or $\mathcal{D}_{{\mathrm{2}}}$ is an axiom, the case is trivial.

    \item Suppose the cut term is the principle formula in both $\mathcal{D}_{{\mathrm{1}}}$ and $\mathcal{D}_{{\mathrm{2}}}$
        (excluding weakening and contraction rules). We consider a few of the subcases here:
        \begin{asparaenum}
        \item[($  \mprod  _{  \mode{L}  } $)] 
                 \begin{tabular}{c}$\mathcal{D}_{{\mathrm{1}}} = \inferrule* [right=$    \mprod  _{  \mode{L}  }  ^{\vdash}  \text{-R} $]
                    { \inferrule* { \mathcal{D}_{{\mathrm{11}}} } { \Gamma_{{\mathrm{11}}}  \shows   \Delta_{{\mathrm{11}}} , \ottnt{A_{{\mathrm{1}}}}  }  \\
                      \inferrule* { \mathcal{D}_{{\mathrm{12}}} } { \Gamma_{{\mathrm{12}}}  \shows   \Delta_{{\mathrm{12}}} , \ottnt{A_{{\mathrm{2}}}}  } }
                    { \Gamma_{{\mathrm{11}}} , \Gamma_{{\mathrm{12}}}   \shows    \Delta_{{\mathrm{11}}} , \Delta_{{\mathrm{12}}}  , \ottnt{A_{{\mathrm{1}}}}  \mprod  \ottnt{A_{{\mathrm{2}}}} }
                \qquad\text{and}\qquad
                   \mathcal{D}_{{\mathrm{2}}} = \inferrule* [right=$    \mprod  _{  \mode{L}  }  ^{\vdash}  \text{-L} $]
                    { \inferrule* { \mathcal{D}'_{{\mathrm{2}}} } {   \Gamma_{{\mathrm{2}}} , \ottnt{A_{{\mathrm{1}}}}  , \ottnt{A_{{\mathrm{2}}}}   \shows  \Delta_{{\mathrm{2}}} } }
                    { \Gamma_{{\mathrm{2}}} , \ottnt{A_{{\mathrm{1}}}}  \mprod  \ottnt{A_{{\mathrm{2}}}}   \shows  \Delta_{{\mathrm{2}}}}$
                \end{tabular}
                \\~\\
                \noindent
                By the inductive hypothesis on $\ottnt{A_{{\mathrm{2}}}}$, $\mathcal{D}_{{\mathrm{12}}}$ and $\mathcal{D}'_{{\mathrm{2}}}$,
                there exists a derivation $\mathcal{E}$ of $  \Gamma_{{\mathrm{12}}} , \Gamma_{{\mathrm{2}}}  , \ottnt{A_{{\mathrm{1}}}}   \shows   \Delta_{{\mathrm{12}}} , \Delta_{{\mathrm{2}}} $.
                Then the desired derivation of $  \Gamma_{{\mathrm{11}}} , \Gamma_{{\mathrm{12}}}  , \Gamma_{{\mathrm{2}}}   \shows    \Delta_{{\mathrm{11}}} , \Delta_{{\mathrm{12}}}  , \Delta_{{\mathrm{2}}} $ exists
                by the inductive hypothesis on $\ottnt{A_{{\mathrm{1}}}}$, $\mathcal{D}_{{\mathrm{11}}}$ and $\mathcal{E}$.

        \item[($  \mprod  _{  \mode{P}  } $)]
                \begin{tabular}{c}$\mathcal{D}_{{\mathrm{1}}} = \inferrule* [right=$    \mprod  _{  \mode{P}  }  ^{\Vdash}  \text{-R} $]
                    { \inferrule* { \mathcal{D}_{{\mathrm{11}}} } {  \Gamma_{{\mathrm{11}}} ^{\mode P}    \Vdash     \Delta_{{\mathrm{11}}} ^{\mode C}  , \ottnt{P_{{\mathrm{1}}}}  }  \\
                      \inferrule* { \mathcal{D}_{{\mathrm{12}}} } {  \Gamma_{{\mathrm{12}}} ^{\mode P}    \Vdash     \Delta_{{\mathrm{12}}} ^{\mode C}  , \ottnt{P_{{\mathrm{2}}}}  } }
                    {   \Gamma_{{\mathrm{11}}} ^{\mode P}  , \Gamma_{{\mathrm{12}}}  ^{\mode P}   \Vdash      \Delta_{{\mathrm{11}}} ^{\mode C}  , \Delta_{{\mathrm{12}}}  ^{\mode C}  , \ottnt{P_{{\mathrm{1}}}}  \mprod  \ottnt{P_{{\mathrm{2}}}} }
                \quad\text{and}\quad
                   \mathcal{D}_{{\mathrm{2}}} = \inferrule* [right=$    \mprod  _{  \mode{P}  }  ^{\vdash}  \text{-L} $]
                    { \inferrule* { \mathcal{D}'_{{\mathrm{2}}} } {    \Gamma_{{\mathrm{2}}} ,  ( \ottnt{P_{{\mathrm{1}}}}  \mprod  \ottnt{P_{{\mathrm{2}}}} )_{ n }   , \ottnt{P_{{\mathrm{1}}}}  , \ottnt{P_{{\mathrm{2}}}}   \shows  \Delta_{{\mathrm{2}}} } }
                    { \Gamma_{{\mathrm{2}}} ,  ( \ottnt{P_{{\mathrm{1}}}}  \mprod  \ottnt{P_{{\mathrm{2}}}} )_{ n  +   1  }    \shows  \Delta_{{\mathrm{2}}}}$
                \end{tabular}
                \\~\\
                \noindent
                The inductive hypothesis on $\ottnt{P_{{\mathrm{1}}}}  \mprod  \ottnt{P_{{\mathrm{2}}}}$, $\mathcal{D}_{{\mathrm{1}}}$ itself
                and $\mathcal{D}'_{{\mathrm{2}}}$ gives us a derivation $\mathcal{E}$ of
                \[       \Gamma_{{\mathrm{11}}} ^{\mode P}  , \Gamma_{{\mathrm{12}}}  ^{\mode P}  , \Gamma_{{\mathrm{2}}}  , \ottnt{P_{{\mathrm{1}}}}  , \ottnt{P_{{\mathrm{2}}}}   \shows      \Delta_{{\mathrm{11}}} ^{\mode C}  , \Delta_{{\mathrm{12}}}  ^{\mode C}  , \Delta_{{\mathrm{2}}} . \]
                Multiple applications of the inductive hypothesis give the following
                derivation:
                \[
                    \inferrule* [Right=IH($\ottnt{P_{{\mathrm{1}}}}$)]
                      { \inferrule* { \mathcal{D}_{{\mathrm{11}}} } {  \Gamma_{{\mathrm{11}}} ^{\mode P}    \Vdash     \Delta_{{\mathrm{11}}} ^{\mode C}  , \ottnt{P_{{\mathrm{1}}}}  } 
                      \\
                        \inferrule* [Right=IH($\ottnt{P_{{\mathrm{2}}}}$)]
                        { \inferrule* { \mathcal{D}_{{\mathrm{12}}} } {  \Gamma_{{\mathrm{12}}} ^{\mode P}    \Vdash     \Delta_{{\mathrm{12}}} ^{\mode C}  , \ottnt{P_{{\mathrm{2}}}}  } 
                        \\
                          \inferrule* { \mathcal{E} } {       \Gamma_{{\mathrm{11}}} ^{\mode P}  , \Gamma_{{\mathrm{12}}}  ^{\mode P}  , \Gamma_{{\mathrm{2}}}  , \ottnt{P_{{\mathrm{1}}}}  , \ottnt{P_{{\mathrm{2}}}}   \shows      \Delta_{{\mathrm{11}}} ^{\mode C}  , \Delta_{{\mathrm{12}}}  ^{\mode C}  , \Delta_{{\mathrm{2}}}  } 
                        }
                        {       \Gamma_{{\mathrm{12}}} ^{\mode P}  , \Gamma_{{\mathrm{11}}}  ^{\mode P}  , \Gamma_{{\mathrm{12}}}  ^{\mode P}  , \Gamma_{{\mathrm{2}}}  , \ottnt{P_{{\mathrm{1}}}}   \shows        \Delta_{{\mathrm{12}}} ^{\mode C}  , \Delta_{{\mathrm{11}}}  ^{\mode C}  , \Delta_{{\mathrm{12}}}  ^{\mode C}  , \Delta_{{\mathrm{2}}} }
                      }
                      {        \Gamma_{{\mathrm{11}}} ^{\mode P}  , \Gamma_{{\mathrm{12}}}  ^{\mode P}  , \Gamma_{{\mathrm{11}}}  ^{\mode P}  , \Gamma_{{\mathrm{12}}}  ^{\mode P}  , \Gamma_{{\mathrm{2}}}   \shows          \Delta_{{\mathrm{11}}} ^{\mode C}  , \Delta_{{\mathrm{12}}}  ^{\mode C}  , \Delta_{{\mathrm{11}}}  ^{\mode C}  , \Delta_{{\mathrm{12}}}  ^{\mode C}  , \Delta_{{\mathrm{2}}} }
                \]
                Because the replicated contexts are made up exclusively of non-displaced
                propositions, it is possible to apply contraction multiple times
                to obtain the desired sequent.

        \item[($ F_\oc $)]
                \begin{tabular}{c}$ \mathcal{D}_{{\mathrm{1}}} = \inferrule* [right=$   F_\oc   \text{-R} $]
                    { \inferrule* { \mathcal{D}'_{{\mathrm{1}}} } {  \Gamma_{{\mathrm{1}}} ^{\mode P}    \Vdash     \Delta_{{\mathrm{1}}} ^{\mode C}  , \ottnt{P}  } }
                    { \Gamma_{{\mathrm{1}}} ^{\mode P}   \shows    \Delta_{{\mathrm{1}}} ^{\mode C}  ,  F_\oc  \, \ottnt{P} }
                \qquad\text{and}\qquad
                   \mathcal{D}_{{\mathrm{2}}} = \inferrule* [right=$   F_\oc   \text{-L} $]
                    { \inferrule* { \mathcal{D}'_{{\mathrm{2}}} } {  \Gamma_{{\mathrm{2}}} , \ottnt{P}   \shows  \Delta } }
                    { \Gamma_{{\mathrm{2}}} ,  F_\oc  \, \ottnt{P}   \shows  \Delta}
                $\end{tabular}
                \\~\\ \noindent
                Because $\mathcal{D}'_{{\mathrm{1}}}$ is a persistent derivation,
                we can apply the inductive hypothesis for $\ottnt{P}$ 
                with $n=1$ to obtain the desired derivation.
        \end{asparaenum}

    \item Suppose the cut term is \emph{not} the principle formula in $\mathcal{D}_{{\mathrm{1}}}$ or $\mathcal{D}_{{\mathrm{2}}}$.
        Most of the subcases are straightforward in that the last rule in the derivation
        commutes with the inductive hypotheses. 

        If the cut term is a producer, then $\mathcal{D}_{{\mathrm{1}}}$ is a persistent
            judgment so it cannot be the case that the last rule
            of $\mathcal{D}_{{\mathrm{1}}}$ is an $ F_\oc $ rule or a $   \bangfunctor{ \ottsym{-} }   \text{-L} $ or $   \whynotfunctor{ \ottsym{-} }   \text{-R} $
            rule. But it also cannot be the case that the last rule
            in $\mathcal{D}_{{\mathrm{1}}}$ is a $   \bangfunctor{ \ottsym{-} }   \text{-R} $ or $   \whynotfunctor{ \ottsym{-} }   \text{-L} $ rule
            because there is a non-principle formula---namely, the cut formula---which
            is in a displaced position in the derivation.

        \begin{minipage}[t]{0.65\textwidth}
        Suppose on the other hand that the cut term is a consumer and $\mathcal{D}_{{\mathrm{1}}}$
        is the derivation to the right. Then $\mathcal{D}_{{\mathrm{2}}}$ is a derivation of
        $  \Gamma_{{\mathrm{2}}} ^{\mode P}  , \ottnt{C}   \Vdash   \Delta_{{\mathrm{2}}} ^{\mode C} $.
        By the inductive hypothesis on $\ottnt{C}$, $\mathcal{D}'_{{\mathrm{1}}}$ and $\mathcal{D}_{{\mathrm{2}}}$, there is
        a derivation $\mathcal{E}$ of $   \Gamma_{{\mathrm{1}}} ^{\mode P}  , \Gamma_{{\mathrm{2}}}  ^{\mode P}   \shows      \Delta_{{\mathrm{1}}} ^{\mode C}  , \ottnt{A}  , \Delta_{{\mathrm{2}}}  ^{\mode C} $. Because the contexts in
        $\mathcal{D}_{{\mathrm{2}}}$ were undisplaced, it is possible to apply the $   \bangfunctor{ \ottsym{-} }   \text{-R} $ rule to
        $\mathcal{E}$ to obtain a derivation of $   \Gamma_{{\mathrm{1}}} ^{\mode P}  , \Gamma_{{\mathrm{2}}}  ^{\mode P}   \Vdash      \Delta_{{\mathrm{1}}} ^{\mode C}  ,  \bangfunctor{ \ottnt{A} }   , \Delta_{{\mathrm{2}}}  ^{\mode C} .$
        \end{minipage} \quad \begin{minipage}[t]{0.30\textwidth}
            \[ \mathcal{D}_{{\mathrm{1}}} = \inferrule* [right=$   \bangfunctor{ \ottsym{-} }   \text{-R} $]
                { \inferrule* { \mathcal{D}'_{{\mathrm{1}}} } {  \Gamma_{{\mathrm{1}}} ^{\mode P}   \shows     \Delta_{{\mathrm{1}}} ^{\mode C}  , \ottnt{A}  ,  ( \ottnt{C} )_{ n }   } }
                { \Gamma_{{\mathrm{1}}} ^{\mode P}   \Vdash     \Delta_{{\mathrm{1}}} ^{\mode C}  ,  \bangfunctor{ \ottnt{A} }   ,  ( \ottnt{C} )_{ n }  }
            \]
        \end{minipage}
    \end{asparaenum}
For the full proof of \Lemma{admissibility},
see the accompanying technical report~\cite{PaykinZdancewic14tech}.
\end{proof}

\begin{theorem}[$ \textsc{Cut} $ Admissibility]
    \label{thm:elimination}
    The $ \textsc{Cut} $ rules in \Figure{cutrules} are admissible in $ \textsc{LPC} $.
\end{theorem}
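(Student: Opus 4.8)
The plan is to obtain the theorem as an immediate consequence of \Lemma{admissibility}, exploiting the correspondence between the $\textsc{Cut}$ and $\textsc{Cut}{+}$ rule sets recorded just above it. The one observation that makes this work is that every $\textsc{Cut}$ rule is already an instance of its $\textsc{Cut}{+}$ counterpart: the linear rule $\textsc{Cut}^{\vdash}_{\mode{L}}$ is syntactically identical to $\textsc{Cut}{+}^{\vdash}_{\mode{L}}$, while each of the four persistent cut rules $\textsc{Cut}^{\vdash}_{\mode{P}}$, $\textsc{Cut}^{\Vdash}_{\mode{P}}$, $\textsc{Cut}^{\vdash}_{\mode{C}}$ and $\textsc{Cut}^{\Vdash}_{\mode{C}}$ is exactly the $n = 1$ specialization of the matching $\textsc{Cut}{+}$ rule, since $( \ottnt{X} )_{ 1 }$ is just a single copy of $\ottnt{X}$. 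So no fresh induction is required; all of the genuine work has already been carried out in \Lemma{admissibility}.

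Concretely, I would reason as follows. Consider any application of one of the five $\textsc{Cut}$ rules whose two premises are derivable in cut-free $\textsc{LPC}$. Reading those same premises as the premises of the corresponding $\textsc{Cut}{+}$ rule with $n = 1$ (and, in the linear case, of the literally identical rule), \Lemma{admissibility} supplies a cut-free derivation of the shared conclusion. As this applies uniformly to each of the five rules in \Figure{cutrules}, every $\textsc{Cut}$ rule is admissible, which is precisely the claim of the theorem.

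The only verification that genuinely needs attention is the instance matching itself: checking, against \Figure{cutrules} and the $\textsc{Cut}{+}$ display, that for each of the four persistent rules the premises and conclusion of the $\textsc{Cut}$ form coincide verbatim with those of the $\textsc{Cut}{+}$ form once $n$ is fixed to $1$, and that the linear rule is unchanged. This is routine syntactic bookkeeping rather than a real obstacle; the combinatorial difficulty lives entirely in the induction of \Lemma{admissibility}, and the reduction here is trivial because specializing a replicated cut term to a single copy recovers the original $\textsc{Cut}$ rule. Notably, the converse direction of the equivalence --- deriving $\textsc{Cut}{+}$ from $\textsc{Cut}$ via the admissible replication rules displayed earlier --- is not needed for this theorem at all.
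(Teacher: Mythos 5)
Your proposal is correct and follows essentially the same route as the paper, which derives the theorem directly from \Lemma{admissibility} via the stated equivalence of the $\textsc{Cut}$ and $\textsc{Cut}+$ rule sets. Your observation that only the easy direction of that equivalence is needed (each $\textsc{Cut}$ rule being the $n=1$ instance of its $\textsc{Cut}+$ counterpart, with the linear rule unchanged) is accurate and matches what the paper implicitly relies on.
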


\paragraph*{Duality.}

Every rule in the $ \textsc{LPC} $ inference rules has a clear dual, but 
unlike standard presentations of classical linear logic,
$ \textsc{LPC} $ does not contain an explicit duality operator $ \dualize{ \ottsym{(}  \ottsym{-}  \ottsym{)} } $,
nor a linear implication $ \lolto $ with which to encode duality. Instead, 
we define $ \dualize{ \ottsym{(}  \ottsym{-}  \ottsym{)} } $ to be a meta-operation on propositions
and prove that the following duality rules are admissible in $ \textsc{LPC} $:
{\small \[ 
    \ottdrulelinXXLinDualL{$  \dualize{(-)}  \text{-L} $}
    \qquad\qquad
    \ottdrulelinXXLinDualR{$  \dualize{(-)}  \text{-R} $}
\] }
In fact, there are three versions of this duality operation: $ \dualize{ \ottsym{(}  \ottsym{-}  \ottsym{)} } $ for linear propositions,
$ \star{ \ottsym{(}  \ottsym{-}  \ottsym{)} } $ for producers and $ \lowerstar{ \ottsym{(}  \ottsym{-}  \ottsym{)} } $ for consumers. 
For a linear
proposition $\ottnt{A}$, $ \dualize{ \ottnt{A} } $ is linear, but for a producer $\ottnt{P}$, $ \star{ \ottnt{P} } $ is a
consumer, and for a consumer $\ottnt{C}$, $ \lowerstar{ { \ottnt{C} } } $ is a producer. 
We define these (invertible) duality operations as follows:
\[ \begin{aligned}
     \dualize{  \top  }     &:= \ottsym{0} \\
     \dualize{ \ottsym{0} }       &:=  \top  \\
     \dualize{ \ottsym{(}  \ottnt{A}  \aprod  \ottnt{B}  \ottsym{)} }  &:=  \dualize{  \dualize{ \ottnt{A} }   \asum  \ottnt{B} }  \\
     \dualize{ \ottsym{(}  \ottnt{A}  \asum  \ottnt{B}  \ottsym{)} }  &:=  \dualize{  \dualize{ \ottnt{A} }   \aprod  \ottnt{B} } 
\end{aligned} \qquad\begin{aligned}
     \dualize{  1_{\mode L}  }       &:=  \bot_{\mode L}  \\
     \dualize{  \bot_{\mode L}  }     &:=  1_{\mode L}  \\
     \dualize{ \ottsym{(}  \ottnt{A}  \mprod  \ottnt{B}  \ottsym{)} }  &:=  \dualize{  \dualize{ \ottnt{A} }   \msum  \ottnt{B} }  \\
     \dualize{ \ottsym{(}  \ottnt{A}  \msum  \ottnt{B}  \ottsym{)} }  &:=  \dualize{  \dualize{ \ottnt{A} }   \mprod  \ottnt{B} }  
\end{aligned} \qquad \begin{aligned}
     \star{  1_{\mode P}  }    &:=  \bot_{\mode C}  \\
     \lowerstar{ {  \bot_{\mode C}  } }  &:=  1_{\mode P}  \\
     \star{ \ottsym{(}  \ottnt{P}  \mprod  \ottnt{Q}  \ottsym{)} } &:=  \star{ \ottnt{P} }   \msum   \star{ \ottnt{Q} }  \\
     \lowerstar{ { \ottsym{(}  \ottnt{C}  \msum  D  \ottsym{)} } } &:=  \lowerstar{ { \ottnt{C} } }   \mprod   \lowerstar{ { D } } 
\end{aligned} \qquad\begin{aligned}
     \dualize{ \ottsym{(}   F_\oc  \, \ottnt{P}  \ottsym{)} }  &:=  F_\wn  \,  \star{ \ottnt{P} }  \\
     \dualize{ \ottsym{(}   F_\wn  \, \ottnt{C}  \ottsym{)} }  &:=  F_\oc  \,  \lowerstar{ { \ottnt{C} } }  \\
     \star{  \bangfunctor{ \ottnt{A} }  }   &:=  \whynotfunctor{  \dualize{ \ottnt{A} }  }  \\
     \lowerstar{ {  \whynotfunctor{ \ottnt{A} }  } }   &:=  \bangfunctor{  \dualize{ \ottnt{A} }  } 
\end{aligned} \]

\begin{wrapfigure}{r}{0.45\textwidth}
\[ \begin{array}{cc}
    \multicolumn{2}{c}{\ottdrulelinXXLinDualL{$  \dualize{(-)}  \text{-L} $}}
    \\ \\
    \ottdrulelinXXProdDualL{$   \star{(-)}  ^{\vdash}  \text{-L} $}
    &
    \ottdrulelinXXConsDualL{$   \lowerstar{(-)}  ^{\vdash}  \text{-L} $}
    \\ \\
    \ottdrulepersXXProdDualL{$   \star{(-)}  ^{\Vdash}  \text{-L} $}
    &
    \ottdrulepersXXConsDualL{$   \lowerstar{(-)}  ^{\Vdash}  \text{-L} $}
\end{array} \]
\caption{Left Duality Inference Rules}
\label{fig:dual}
\end{wrapfigure}

We will show that the inference rules given in \Figure{dual}
(as well as the respective right rules) are admissible in $ \textsc{LPC} $.

\begin{lemma} The following axioms hold in $ \textsc{LPC} $:\footnotemark
{\small     \[ 
        \inferrule* { } { \ottnt{A} ,  \dualize{ \ottnt{A} }    \shows   \cdot }
        \qquad
        \inferrule* { } { \cdot   \shows   \ottnt{A} ,  \dualize{ \ottnt{A} }  }
        \qquad
        \inferrule* { } { \ottnt{P} ,  \star{ \ottnt{P} }    \Vdash   \cdot }
        \qquad
        \inferrule* { } { \cdot   \Vdash   \ottnt{P} ,  \star{ \ottnt{P} }  }
    \]}
\end{lemma}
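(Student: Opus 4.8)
The plan is to prove all of the stated sequents---together with the symmetric consumer sequents $\ottnt{C} ,  \lowerstar{ { \ottnt{C} } }  \Vdash  \cdot $ and $ \cdot  \Vdash \ottnt{C} ,  \lowerstar{ { \ottnt{C} } } $---simultaneously, by induction on the structure of the principal proposition. Since each duality clause is defined by structural recursion, rewriting the dual of a compound proposition in terms of the duals of its immediate subformulas (de Morgan style), the induction follows exactly the shape of those clauses. The observation that makes a single structural measure work is that the cross-mode clauses strictly decrease size: $ \bangfunctor{ \ottnt{A} } $ and $ \whynotfunctor{ \ottnt{A} } $ have $\ottnt{A}$ as a strict subformula, and $ F_\oc  \, \ottnt{P}$, $ F_\wn  \, \ottnt{C}$ have $\ottnt{P}$, $\ottnt{C}$ as strict subformulas, so the induction is well-founded even though it jumps between modes. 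For each family I would prove the left and right versions by symmetric arguments (left rules for the former, right rules for the latter), and treat the consumer families as exactly dual to the producer families.

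The within-mode cases are routine combinations of the two induction hypotheses through the corresponding introduction rules. The base cases are the units: $ 1_{\mode{L}} ,  \bot_{\mode{L}}   \shows  \cdot $ follows from $ \bot_{\mode{L}}   \shows  \cdot $ by the $ 1_{\mode{L}} $ left rule and an exchange, and $ \top , \ottsym{0}  \shows  \cdot $ is immediate from the $\ottsym{0}$ left rule. For a multiplicative connective, from the hypotheses $\ottnt{A} ,  \dualize{ \ottnt{A} }   \shows  \cdot $ and $\ottnt{B} ,  \dualize{ \ottnt{B} }   \shows  \cdot $ I would first apply the $ \msum $ left rule to obtain $\ottnt{A} , \ottnt{B} ,  \dualize{ \ottnt{A} }   \msum   \dualize{ \ottnt{B} }   \shows  \cdot $ and then the $ \mprod $ left rule to contract $\ottnt{A} , \ottnt{B}$ into $\ottnt{A}  \mprod  \ottnt{B}$, yielding $\ottnt{A}  \mprod  \ottnt{B} ,  \dualize{ \ottsym{(}  \ottnt{A}  \mprod  \ottnt{B}  \ottsym{)} }   \shows  \cdot $. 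The additive cases are analogous, feeding the two $ \aprod $ left rules into the two branches of an $ \asum $ left rule; the persistent producer multiplicatives go through verbatim with the persistent rules.

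The interesting cases cross between modes, and these are where the mutual induction is essential. For a producer $ \bangfunctor{ \ottnt{A} } $ we have $ \star{  \bangfunctor{ \ottnt{A} }  }  =  \whynotfunctor{  \dualize{ \ottnt{A} }  } $, and the goal $ \bangfunctor{ \ottnt{A} }  ,  \whynotfunctor{  \dualize{ \ottnt{A} }  }   \Vdash  \cdot $ is obtained by starting from the linear hypothesis $\ottnt{A} ,  \dualize{ \ottnt{A} }   \shows  \cdot $, applying the $ \bangfunctor{ \ottsym{-} } $ left rule (which turns the linear $\ottnt{A}$ into the producer $ \bangfunctor{ \ottnt{A} } $ while staying in the linear sequent) to get $ \bangfunctor{ \ottnt{A} }  ,  \dualize{ \ottnt{A} }   \shows  \cdot $, and then the $ \whynotfunctor{ \ottsym{-} } $ left rule, whose premise is precisely a linear sequent $ \Gamma ^{\mode{P}}  ,  \dualize{ \ottnt{A} }   \shows   \Delta ^{\mode{C}} $ with producer context $ \Gamma ^{\mode{P}}  =  \bangfunctor{ \ottnt{A} } $ and empty consumer right-hand side. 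Dually, for a linear $ F_\oc  \, \ottnt{P}$ we have $ \dualize{ \ottsym{(}   F_\oc  \, \ottnt{P}  \ottsym{)} }  =  F_\wn  \,  \star{ \ottnt{P} } $, and I would derive $ F_\oc  \, \ottnt{P} ,  F_\wn  \,  \star{ \ottnt{P} }   \shows  \cdot $ from the producer hypothesis $\ottnt{P} ,  \star{ \ottnt{P} }   \Vdash  \cdot $ using the $ F_\wn $ and $ F_\oc $ left rules. Thus the linear statement at $ F_\oc  \, \ottnt{P}$ calls the producer statement at $\ottnt{P}$, and the producer statement at $ \bangfunctor{ \ottnt{A} } $ calls the linear statement at $\ottnt{A}$, so the induction closes.

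I expect the main obstacle to be bookkeeping the mode and displacement side-conditions in these cross-mode cases, rather than any deep difficulty. The adjunction rules ($ \bangfunctor{ \ottsym{-} } $ and $ \whynotfunctor{ \ottsym{-} } $ left/right, $ F_\oc $, $ F_\wn $) demand that the ambient context be purely producers on the left and purely consumers on the right, with a single displaced principal formula, and the argument works only because the duality clauses are defined so that the dual of a producer is a consumer (and vice versa) of exactly the shape these rules accept. Verifying that each hypothesis we feed in lives in the correct sequent ($ \shows $ versus $ \Vdash $) and that the displacement invariant of \Proposition{displacement} is maintained is the one place where care is needed; once the modes are checked, every case reduces mechanically to the induction hypotheses. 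The consumer sequents are handled symmetrically (equivalently, they are the producer sequents read through the involutivity of the duality), and the right-hand versions mirror the left-hand ones rule for rule.
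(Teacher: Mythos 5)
Your proposal is correct and is exactly the paper's argument: the paper's entire proof is ``by mutual induction on the proposition,'' and your case analysis (units, within-mode connectives via the matching left/right rules, and the cross-mode cases threading through the $ \bangfunctor{ \ottsym{-} } $/$ \whynotfunctor{ \ottsym{-} } $ and $ F_\oc $/$ F_\wn $ rules) is a faithful elaboration of that induction. The only cosmetic difference is that the paper dispatches the consumer sequents via the involutivity of duality (footnote: $\ottnt{P} =  \lowerstar{ { \ottnt{C} } } $) rather than carrying them as extra clauses of the mutual induction, but both choices work.
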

\begin{proof}
By mutual induction on the proposition. 
\end{proof}
\footnotetext{Notice that the consumer case $ \ottnt{C} ,  \lowerstar{ { \ottnt{C} } }    \Vdash   \cdot $ is encompassed
by the producer case where $\ottnt{P}= \lowerstar{ { \ottnt{C} } } $.}

The variations $ \ottnt{P} ,  \star{ \ottnt{P} }    \shows   \cdot $ and $ \cdot   \shows   \ottnt{P} ,  \star{ \ottnt{P} }  $ on the other hand
cannot be proved by induction because
of the subcase $\ottnt{P}= \bangfunctor{ \ottnt{A} } $; there is no way to apply the inductive hypothesis 
to the goal $  \bangfunctor{ \ottnt{A} }  ,  \whynotfunctor{  \dualize{ \ottnt{A} }  }    \shows   \cdot $. However we can construct the desired derivations
using cut rules:
\[
    \inferrule* 
    {\inferrule*{ }{ \star{ \ottnt{P} }   \shows   \star{ \ottnt{P} } }
    \\
     \inferrule*{ }{ \ottnt{P} ,  \star{ \ottnt{P} }    \Vdash   \cdot }
    }
    { \ottnt{P} ,  \star{ \ottnt{P} }    \shows   \cdot }
    \qquad\qquad
    \inferrule*
    {\inferrule*{ }{ \cdot   \Vdash   \ottnt{P} ,  \star{ \ottnt{P} }  }
    \\
     \inferrule*{ }{\ottnt{P}  \shows  \ottnt{P}}
    }
    { \cdot   \shows   \ottnt{P} ,  \star{ \ottnt{P} }  }
\]

\begin{theorem}
    The duality rules in \Figure{dual} (and thus the corresponding right rules) 
    are admissible in $ \textsc{LPC} $.
\end{theorem}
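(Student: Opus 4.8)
The plan is to obtain each duality rule by cutting its premise against the matching identity axiom and then invoking \Theorem{elimination} to remove the introduced cut. For the linear rule $\dualize{(-)}$-L I would cut the premise $\Gamma \shows \Delta, \ottnt{A}$ against the axiom $\ottnt{A}, \dualize{\ottnt{A}} \shows \cdot$ from the identity-axiom Lemma above, using the linear cut on $\ottnt{A}$, which yields $\Gamma, \dualize{\ottnt{A}} \shows \Delta$; the right rule cuts symmetrically against $\cdot \shows \ottnt{A}, \dualize{\ottnt{A}}$. The consumer rule $\lowerstar{(-)}^{\vdash}$-L works the same way: the linear consumer cut permits its \emph{left} premise to be linear, so cutting $\Gamma \shows \Delta, \ottnt{C}$ against $\ottnt{C}, \lowerstar{\ottnt{C}} \Vdash \cdot$ (the $\ottnt{P} = \lowerstar{\ottnt{C}}$ instance of the producer axiom) gives $\Gamma, \lowerstar{\ottnt{C}} \shows \Delta$, and dually $\star{(-)}^{\vdash}$-R cuts the producer co-axiom $\cdot \Vdash \ottnt{P}, \star{\ottnt{P}}$ against the premise. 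The four persistent rules are handled identically, cutting against $\ottnt{P}, \star{\ottnt{P}} \Vdash \cdot$ and $\cdot \Vdash \ottnt{P}, \star{\ottnt{P}}$; here \Proposition{displacement} guarantees that the single displaced proposition is precisely the dualized one, so the surrounding context is correctly moded and the persistent cut rules apply.

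The hard part will be the two rules whose principal proposition is displaced in a \emph{linear} sequent: $\star{(-)}^{\vdash}$-L, sending a producer $\ottnt{P}$ on the right of $\shows$ to $\star{\ottnt{P}}$ on the left, together with its mirror image $\lowerstar{(-)}^{\vdash}$-R. The single-cut recipe fails here because the only cut that eliminates a producer standing on the right is the linear producer cut, and that cut demands that \emph{this} premise be persistent, whereas ours is linear; symmetrically, no linear cut eliminates a consumer on the left of a linear premise. So for these two rules the dualized formula cannot simply be cut away.

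I would instead prove $\star{(-)}^{\vdash}$-L by induction on the derivation of the premise $\Gamma \shows \Delta, \ottnt{P}$. When the last rule is the axiom $\ottnt{P} \shows \ottnt{P}$, the goal $\ottnt{P}, \star{\ottnt{P}} \shows \cdot$ is exactly the variation derived above by cut. When the last rule introduces $\ottnt{P}$ on the right -- $\ottnt{P} = \ottnt{P_{1}} \mprod \ottnt{P_{2}}$ by the producer $\mprod$-right rule, or $\ottnt{P} = 1_{\mode P}$ by its right rule -- I apply the induction hypothesis to the subderivation(s) and then assemble $\star{\ottnt{P}}$ on the left with the dual left rule: $\msum$-left to form $\star{\ottnt{P_{1}}} \msum \star{\ottnt{P_{2}}}$, respectively $\bot_{\mode C}$-left. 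The case $\ottnt{P} = \bangfunctor{\ottnt{A}}$ never arises as a principal formula, since no linear right rule introduces $\bangfunctor{-}$, so it can reach the right of a linear sequent only through the axiom and is already covered. In every remaining case the last rule leaves $\ottnt{P}$ untouched and commutes with the induction hypothesis, reassembling the conclusion with $\star{\ottnt{P}}$ carried along on the left. The rule $\lowerstar{(-)}^{\vdash}$-R is the exact mirror image, with base case $\cdot \shows \ottnt{C}, \lowerstar{\ottnt{C}}$ and with $\whynotfunctor{-}$ never principal on the left; the remaining right partners of the first group then follow from the same cuts with the two sides exchanged, completing the proof.
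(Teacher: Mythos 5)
Your proposal is correct and follows essentially the same route as the paper: the easy rules are obtained by cutting against the identity-of-duality axioms (with \Proposition{displacement} justifying the persistent cases), and the problematic rule $ \star{(-)}  ^{\vdash}  \text{-L} $ (with its mirror) is instead proved by direct induction on the premise derivation, with the axiom $\ottnt{P}  \shows  \ottnt{P}$ as the one non-commuting case, discharged by the cut-derived sequent $ \ottnt{P} ,  \star{ \ottnt{P} }    \shows   \cdot $. Your write-up is in fact slightly more explicit than the paper's, since you also spell out the principal cases $\mprod_{\mode P}$-R and $1_{\mode P}$-R and note that $ \bangfunctor{ \ottnt{A} } $ can only be principal via the axiom.
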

\begin{proof}
Three of the rules can be generated by a straightforward application of cut:
{\small    \[ 
        \inferrule* [right=$   \textsc{Cut}  _{  \mode{L}  }  ^{\vdash} $]
        {\Gamma  \shows   \Delta , \ottnt{A}  
        \\
         \inferrule* { } { \ottnt{A} ,  \dualize{ \ottnt{A} }    \shows   \cdot }
        }
        { \Gamma ,  \dualize{ \ottnt{A} }    \shows  \Delta}
        \qquad\qquad
        \inferrule* [right=$   \textsc{Cut}  _{  \mode{C}  }  ^{\vdash} $]
        {\Gamma  \shows   \Delta , \ottnt{C} 
        \\
        \inferrule*{ }{ \ottnt{C} ,  \lowerstar{ { \ottnt{C} } }    \Vdash   \cdot }
        }
        { \Gamma ,  \lowerstar{ { \ottnt{C} } }    \shows  \Delta}
        \qquad\qquad
        \inferrule* [right=$   \textsc{Cut}  _{  \mode{C}  }  ^{\Vdash} $]
        {\Gamma  \Vdash   \Delta , \ottnt{C} 
        \\
        \inferrule*{ }{ \ottnt{C} ,  \lowerstar{ { \ottnt{C} } }    \Vdash   \cdot }
        }
        { \Gamma ,  \lowerstar{ { \ottnt{C} } }    \Vdash  \Delta}
    \] }
When we try to do the same for the left producer rules,
the context restriction around the displaced cut term
leads to the following derivations:
\[
        \inferrule* [right=$   \textsc{Cut}  _{  \mode{P}  }  ^{\Vdash} $]
        { \Gamma ^{\mode P}   \Vdash    \Delta ^{\mode C}  , \ottnt{P} 
        \\
         \inferrule* { } { \ottnt{P} ,  \star{ \ottnt{P} }    \Vdash   \cdot }
        }
        {  \Gamma ^{\mode P}  ,  \star{ \ottnt{P} }    \Vdash   \Delta ^{\mode C} }
        \qquad\qquad
        \inferrule* [right=$   \textsc{Cut}  _{  \mode{P}  }  ^{\vdash} $]
        { \Gamma ^{\mode P}   \Vdash    \Delta ^{\mode C}  , \ottnt{P}  
        \\
        \inferrule* { } { \ottnt{P} ,  \star{ \ottnt{P} }    \shows   \cdot }
        }
        {  \Gamma ^{\mode P}  ,  \star{ \ottnt{P} }    \shows   \Delta ^{\mode C} } 
\]
For the first of these, recall that due to displacement, every derivation of $ \Gamma ,  \star{ \ottnt{P} }    \Vdash  \Delta$
in fact has the restriction that $\Gamma= \Gamma ^{\mode P} $ and $\Delta= \Delta ^{\mode C} $. So this derivation is actually
equivalent to the one in \Figure{dual}.
The second derivation, on the other hand, is not equivalent to the one in \Figure{dual},
nor an acceptable variant. The hypothesis and conclusion of the derivation
are different kinds of sequents, and linear propositions are completely 
excluded from the contexts.

Instead we can prove the more general form of the rule directly:
For any derivation $\mathcal{D}$ of $\Gamma  \shows   \Delta , \ottnt{P} $, there is a derivation of
$ \Gamma ,  \star{ \ottnt{P} }    \shows  \Delta$. We prove this by induction on $\mathcal{D}$. Most of the cases 
commute directly with the inductive hypothesis, which the following exception:
If $\mathcal{D}$ is the axiom $\ottnt{P}  \shows  \ottnt{P}$ then there
is a derivation of $ \ottnt{P} ,  \star{ \ottnt{P} }    \shows   \cdot $, as expected.
\end{proof}

\paragraph*{Consistency.}

Define the negation of a linear proposition to be $ \neg \ottnt{A} := \dualize{ \ottnt{A} }   \msum  \ottsym{0}$.

\begin{theorem}[Consistency]
    There is no proposition $\ottnt{A}$ such that $\ottnt{A}$ and $ \neg \ottnt{A} $
    are both provable in $ \textsc{LPC} $.
\end{theorem}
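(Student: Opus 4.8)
The plan is to read ``provable'' as derivability of the right-handed sequent $\cdot \shows A$, and to argue by contradiction: from derivations of $\cdot \shows A$ and $\cdot \shows \neg A$ I would manufacture a derivation of the empty sequent $\cdot \shows \cdot$, and then show that $\cdot \shows \cdot$ is not derivable. The whole argument rests on \Theorem{elimination}: since the $\textsc{Cut}$ rules are admissible, any derivable sequent has a cut-free derivation, and the empty sequent will be excluded by a short inspection of the cut-free rules.

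First I would turn the two given proofs into a proof of $\cdot \shows \cdot$. Applying the admissible left duality rule for $\dualize{(-)}$ to $\cdot \shows A$ yields $\dualize{A} \shows \cdot$. The $0$-left rule gives $0 \shows \cdot$, and combining these two sequents with the $\msum$-left rule produces $\dualize{A} \msum 0 \shows \cdot$, that is $\neg A \shows \cdot$. A single application of the linear cut rule, cutting $\cdot \shows \neg A$ against $\neg A \shows \cdot$ on the (linear) cut term $\neg A$, then yields $\cdot \shows \cdot$. Equivalently, one could cut $\cdot \shows A$ against the duality axiom $\ottnt{A} ,  \dualize{ \ottnt{A} }    \shows   \cdot $ established above to obtain $\dualize{A} \shows \cdot$ in place of the first step.

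It remains to show that the empty sequent is not derivable, and here cut elimination does the real work: by \Theorem{elimination} any derivation of $\cdot \shows \cdot$ may be taken cut-free. I would then observe that no cut-free rule can have the empty sequent as its conclusion. The axioms have the shapes $\ottnt{X} \shows \ottnt{X}$, $\ottnt{P} \Vdash \ottnt{P}$ and $\ottnt{C} \Vdash \ottnt{C}$; every left rule places at least one formula into its antecedent; every right rule places at least one formula into its succedent; and the structural and mode-crossing rules (weakening, contraction, exchange, and the $F_\oc$/$\bangfunctor{-}$, $F_\wn$/$\whynotfunctor{-}$ rules) retain at least their principal formula. Hence every derivable conclusion carries at least one proposition, so $\cdot \shows \cdot$ is underivable, contradicting the construction of the previous paragraph.

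I expect the only genuine content to be this final inspection together with the appeal to \Theorem{elimination}; the intermediate manipulations are routine once duality and the $0$-left rule are in hand. The main obstacle, such as it is, is ensuring that the case analysis in the last step is exhaustive over \emph{all} primitive rules of both the linear and persistent sequents (in particular the rules crossing between $\shows$ and $\Vdash$), since even a single rule admitting an empty conclusion would break the argument.
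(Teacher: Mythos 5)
Your proof is correct and takes essentially the same approach as the paper: both use the admissible duality rules together with cut to manufacture a sequent that visibly has no cut-free derivation, and then contradict \Theorem{elimination} — the paper targets $ \cdot   \shows  \ottsym{0}$ where you target the empty sequent $ \cdot   \shows   \cdot $, but this is an immaterial difference. If anything, your route (building $ \neg \ottnt{A}   \shows   \cdot $ via $ \msum $-L and $0$-L and then cutting on $ \neg \ottnt{A} $ itself) is slightly more careful than the paper's, which silently reads the derivation of $ \cdot   \shows   \dualize{ \ottnt{A} }   \msum  \ottsym{0}$ as one of $ \cdot   \shows   \dualize{ \ottnt{A} }  , \ottsym{0}$.
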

\noindent
\begin{minipage}[t]{0.70\textwidth}
\begin{proof}
     Suppose there were such an $\ottnt{A}$, along with derivations $\mathcal{D}_{{\mathrm{1}}}$
     of $ \cdot   \shows  \ottnt{A}$ and $\mathcal{D}_{{\mathrm{2}}}$ of $ \cdot   \shows   \dualize{ \ottnt{A} }   \msum  \ottsym{0}$. 
    Then there exists a derivation of $ \cdot   \shows  \ottsym{0}$ as seen 
    to the right.
     However, there is no cut-free proof of $ \cdot   \shows  \ottsym{0}$
     in $ \textsc{LPC} $, which contradicts cut admissibility.
 \end{proof}
\end{minipage} \quad
\begin{minipage}[t]{0.30\textwidth}
     \vspace{-7mm}
{\small     \[ 
         \inferrule* [right=$   \textsc{Cut}  _{  \mode{L}  }  ^{\vdash} $]
         { \inferrule* { \mathcal{D}_{{\mathrm{2}}} } {  \cdot   \shows    \dualize{ \ottnt{A} }  , \ottsym{0}  }  
         \\
         \inferrule* [Right=$  \dualize{(-)}  \text{-L} $]
             { \inferrule* { \mathcal{D}_{{\mathrm{1}}} } {  \cdot   \shows  \ottnt{A} } }
             { \dualize{ \ottnt{A} }   \shows   \cdot }
         }
         { \cdot   \shows  \ottsym{0}}
     \]}
\end{minipage}

\section{Categorical Model}
\label{sec:category}
In this section we describe a categorical axiomatization of $ \textsc{LPC} $ based on
the three-category \Figure{3cat}. 
Certain definitions have been omitted for brevity; these can be found
in the companion paper \cite{PaykinZdancewic14tech}.

\paragraph{Preliminaries.}
We start with some basic definitions about symmetric monoidal structures.

\begin{definition}\label{def:SMC}
    A \emph{symmetric monoidal category} is a category $\cat{C}$ equipped with 
    a bifunctor $ \mprod $, an object $ 1 $, and the following natural isomorphisms:
    \begin{align*}
    \begin{aligned}
          \alpha  _{  \ottnt{A_{{\mathrm{1}}}}  ,   \ottnt{A_{{\mathrm{2}}}}  ,  \ottnt{A_{{\mathrm{3}}}}   } &: \ottnt{A_{{\mathrm{1}}}}  \mprod  \ottsym{(}  \ottnt{A_{{\mathrm{2}}}}  \mprod  \ottnt{A_{{\mathrm{3}}}}  \ottsym{)} \rightarrow \ottsym{(}  \ottnt{A_{{\mathrm{1}}}}  \mprod  \ottnt{A_{{\mathrm{2}}}}  \ottsym{)}  \mprod  \ottnt{A_{{\mathrm{3}}}} \\
          \sigma  _{  \ottnt{A}  ,  \ottnt{B}  } &: \ottnt{A}  \mprod  \ottnt{B} \rightarrow \ottnt{B}  \mprod  \ottnt{A}
    \end{aligned} \qquad\qquad
    \begin{aligned}
          \lambda  _{ \ottnt{A} } &:  1   \mprod  \ottnt{A} \rightarrow \ottnt{A} \\
          \rho  _{ \ottnt{A} } &: \ottnt{A}  \mprod   1  \rightarrow \ottnt{A} 
    \end{aligned}
    \end{align*}
    These must satisfy the following coherence conditions:
    \begin{align}
       &\ottnt{A_{{\mathrm{1}}}}  \mprod  \ottsym{(}  \ottnt{A_{{\mathrm{2}}}}  \mprod  \ottsym{(}  \ottnt{A_{{\mathrm{3}}}}  \mprod  \ottnt{A_{{\mathrm{4}}}}  \ottsym{)}  \ottsym{)}
        \xrightarrow{  \alpha  _{  \ottnt{A_{{\mathrm{1}}}}  \mprod  \ottnt{A_{{\mathrm{2}}}}  ,   \ottnt{A_{{\mathrm{3}}}}  ,  \ottnt{A_{{\mathrm{4}}}}   } \circ   \alpha  _{  \ottnt{A_{{\mathrm{1}}}}  ,   \ottnt{A_{{\mathrm{2}}}}  ,  \ottnt{A_{{\mathrm{3}}}}  \mprod  \ottnt{A_{{\mathrm{4}}}}   } }
        \ottsym{(}  \ottsym{(}  \ottnt{A_{{\mathrm{1}}}}  \mprod  \ottnt{A_{{\mathrm{2}}}}  \ottsym{)}  \mprod  \ottnt{A_{{\mathrm{3}}}}  \ottsym{)}  \mprod  \ottnt{A_{{\mathrm{4}}}}
        \xrightarrow{ \inv{  {   \alpha  _{  \ottnt{A_{{\mathrm{1}}}}  ,   \ottnt{A_{{\mathrm{2}}}}  ,  \ottnt{A_{{\mathrm{3}}}}   }  }  }   \mprod    \textrm{id}  _{ \ottnt{A_{{\mathrm{4}}}} } }
        \ottsym{(}  \ottnt{A_{{\mathrm{1}}}}  \mprod  \ottsym{(}  \ottnt{A_{{\mathrm{2}}}}  \mprod  \ottnt{A_{{\mathrm{3}}}}  \ottsym{)}  \ottsym{)}  \mprod  \ottnt{A_{{\mathrm{4}}}} \notag\\
     =~&\ottnt{A_{{\mathrm{1}}}}  \mprod  \ottsym{(}  \ottnt{A_{{\mathrm{2}}}}  \mprod  \ottsym{(}  \ottnt{A_{{\mathrm{3}}}}  \mprod  \ottnt{A_{{\mathrm{4}}}}  \ottsym{)}  \ottsym{)}
        \xrightarrow{  \textrm{id}  _{ \ottnt{A_{{\mathrm{1}}}} }   \mprod    \alpha  _{  \ottnt{A_{{\mathrm{2}}}}  ,   \ottnt{A_{{\mathrm{3}}}}  ,  \ottnt{A_{{\mathrm{4}}}}   } }
        \ottnt{A_{{\mathrm{1}}}}  \mprod  \ottsym{(}  \ottsym{(}  \ottnt{A_{{\mathrm{2}}}}  \mprod  \ottnt{A_{{\mathrm{3}}}}  \ottsym{)}  \mprod  \ottnt{A_{{\mathrm{4}}}}  \ottsym{)}
        \xrightarrow{  \alpha  _{  \ottnt{A_{{\mathrm{1}}}}  ,   \ottnt{A_{{\mathrm{2}}}}  \mprod  \ottnt{A_{{\mathrm{3}}}}  ,  \ottnt{A_{{\mathrm{4}}}}   } }
        \ottsym{(}  \ottnt{A_{{\mathrm{1}}}}  \mprod  \ottsym{(}  \ottnt{A_{{\mathrm{2}}}}  \mprod  \ottnt{A_{{\mathrm{3}}}}  \ottsym{)}  \ottsym{)}  \mprod  \ottnt{A_{{\mathrm{4}}}} 
    \end{align}
    \begin{align}
          \textrm{id}  _{ \ottnt{A} }   \mprod    \lambda  _{ \ottnt{B} }  =
        \ottnt{A}  \mprod  \ottsym{(}   1   \mprod  \ottnt{B}  \ottsym{)}
        \xrightarrow{  \alpha  _{  \ottnt{A}  ,    1   ,  \ottnt{B}   } }
        \ottsym{(}  \ottnt{A}  \mprod   1   \ottsym{)}  \mprod  \ottnt{B}
        \xrightarrow{  \rho  _{ \ottnt{A} }   \mprod    \textrm{id}  _{ \ottnt{B} } }
        \ottnt{A}  \mprod  \ottnt{B}
    \end{align} 
    \begin{align}
       &\ottnt{A_{{\mathrm{1}}}}  \mprod  \ottsym{(}  \ottnt{A_{{\mathrm{2}}}}  \mprod  \ottnt{A_{{\mathrm{3}}}}  \ottsym{)}
        \xrightarrow{  \textrm{id}  _{ \ottnt{A_{{\mathrm{1}}}} }   \mprod    \sigma  _{  \ottnt{A_{{\mathrm{2}}}}  ,  \ottnt{A_{{\mathrm{3}}}}  } }
        \ottnt{A_{{\mathrm{1}}}}  \mprod  \ottsym{(}  \ottnt{A_{{\mathrm{3}}}}  \mprod  \ottnt{A_{{\mathrm{2}}}}  \ottsym{)}
        \xrightarrow{  \alpha  _{  \ottnt{A_{{\mathrm{1}}}}  ,   \ottnt{A_{{\mathrm{3}}}}  ,  \ottnt{A_{{\mathrm{2}}}}   } }
        \ottsym{(}  \ottnt{A_{{\mathrm{1}}}}  \mprod  \ottnt{A_{{\mathrm{3}}}}  \ottsym{)}  \mprod  \ottnt{A_{{\mathrm{2}}}}
        \xrightarrow{  \sigma  _{  \ottnt{A_{{\mathrm{1}}}}  ,  \ottnt{A_{{\mathrm{3}}}}  }   \mprod    \textrm{id}  _{ \ottnt{A_{{\mathrm{2}}}} } }
        \ottsym{(}  \ottnt{A_{{\mathrm{3}}}}  \mprod  \ottnt{A_{{\mathrm{1}}}}  \ottsym{)}  \mprod  \ottnt{A_{{\mathrm{2}}}} \notag\\
     =~&\ottnt{A_{{\mathrm{1}}}}  \mprod  \ottsym{(}  \ottnt{A_{{\mathrm{2}}}}  \mprod  \ottnt{A_{{\mathrm{3}}}}  \ottsym{)}
        \xrightarrow{  \alpha  _{  \ottnt{A_{{\mathrm{1}}}}  ,   \ottnt{A_{{\mathrm{2}}}}  ,  \ottnt{A_{{\mathrm{3}}}}   } }
        \ottsym{(}  \ottnt{A_{{\mathrm{1}}}}  \mprod  \ottnt{A_{{\mathrm{2}}}}  \ottsym{)}  \mprod  \ottnt{A_{{\mathrm{3}}}}
        \xrightarrow{  \sigma  _{  \ottnt{A_{{\mathrm{1}}}}  \mprod  \ottnt{A_{{\mathrm{2}}}}  ,  \ottnt{A_{{\mathrm{3}}}}  } }
        \ottnt{A_{{\mathrm{3}}}}  \mprod  \ottsym{(}  \ottnt{A_{{\mathrm{1}}}}  \mprod  \ottnt{A_{{\mathrm{2}}}}  \ottsym{)}
        \xrightarrow{  \alpha  _{  \ottnt{A_{{\mathrm{3}}}}  ,   \ottnt{A_{{\mathrm{1}}}}  ,  \ottnt{A_{{\mathrm{2}}}}   } }
        \ottsym{(}  \ottnt{A_{{\mathrm{3}}}}  \mprod  \ottnt{A_{{\mathrm{1}}}}  \ottsym{)}  \mprod  \ottnt{A_{{\mathrm{2}}}}
    \end{align}
    \begin{align}
          \textrm{id}  _{ \ottnt{A}  \mprod  \ottnt{B} }  =
        \ottnt{A}  \mprod  \ottnt{B}
        \xrightarrow{  \sigma  _{  \ottnt{A}  ,  \ottnt{B}  } }
        \ottnt{B}  \mprod  \ottnt{A}
        \xrightarrow{  \sigma  _{  \ottnt{B}  ,  \ottnt{A}  } }
        \ottnt{A}  \mprod  \ottnt{B}
    \end{align}
    \begin{align}
          \lambda  _{ \ottnt{A} }  =
         1_{\mode L}   \mprod  \ottnt{A}
        \xrightarrow{  \sigma  _{   1_{\mode L}   ,  \ottnt{A}  } }
        \ottnt{A}  \mprod   1_{\mode L} 
        \xrightarrow{  \rho  _{ \ottnt{A} } }
        \ottnt{A}
    \end{align}
\end{definition}

\begin{definition}\label{def:SMF}
    Let $(\cat{C}, \mprod , 1 , \alpha , \lambda , \rho , \sigma )$
    and $(\cat{C}', \mprod' ,\ottsym{1'},  \alpha  _{    }   \ottsym{'},  \lambda  _{    }   \ottsym{'},  \rho  _{    }   \ottsym{'},  \sigma  _{    }   \ottsym{'})$
    be symmetric monoidal categories. A \emph{symmetric monoidal functor}
    $\functor{F}  \ottsym{:}  \cat{C}  \Rightarrow  \cat{C}'$ is a functor along with a map
    $  m^{ \functor{F} }  _{  1  }   \ottsym{:}  \ottsym{1'}  \rightarrow  \functor{F} \,  1 $ and a natural transformation
    $  m^{ \functor{F} }  _{  \ottnt{A}  ,  \ottnt{B}  }   \ottsym{:}  \functor{F} \, \ottsym{(}  \ottnt{A}  \ottsym{)}  \mprod'  \functor{F} \, \ottsym{(}  \ottnt{B}  \ottsym{)}  \rightarrow  \functor{F} \, \ottsym{(}  \ottnt{A}  \mprod  \ottnt{B}  \ottsym{)}$ that satisfies
    the following coherence conditions:
    \centering{
    \scalebox{0.85}{\begin{tikzpicture}
        \node (top left) at (-3, 2) {$\ottsym{(}  \functor{F} \, \ottsym{(}  \ottnt{A_{{\mathrm{1}}}}  \ottsym{)}  \mprod'  \functor{F} \, \ottsym{(}  \ottnt{A_{{\mathrm{2}}}}  \ottsym{)}  \ottsym{)}  \mprod'  \functor{F} \, \ottsym{(}  \ottnt{A_{{\mathrm{3}}}}  \ottsym{)}$} ;
        \node (mid left) at (-3, 0) {$\functor{F} \, \ottsym{(}  \ottnt{A_{{\mathrm{1}}}}  \mprod  \ottnt{A_{{\mathrm{2}}}}  \ottsym{)}  \mprod'  \functor{F} \, \ottsym{(}  \ottnt{A_{{\mathrm{3}}}}  \ottsym{)}$} 
            edge [<-] node[auto] {$  m^{ \functor{F} }  _{  \ottnt{A_{{\mathrm{1}}}}  ,  \ottnt{A_{{\mathrm{2}}}}  }   \mprod'    \textrm{id}  _{    } $} (top left) ;
        \node (bot left) at (-3, -2) {$\functor{F} \, \ottsym{(}  \ottsym{(}  \ottnt{A_{{\mathrm{1}}}}  \mprod  \ottnt{A_{{\mathrm{2}}}}  \ottsym{)}  \mprod  \ottnt{A_{{\mathrm{3}}}}  \ottsym{)}$} 
            edge [<-] node[auto] {$  m^{ \functor{F} }  _{  \ottnt{A_{{\mathrm{1}}}}  \mprod  \ottnt{A_{{\mathrm{2}}}}  ,  \ottnt{A_{{\mathrm{3}}}}  } $} (mid left) ;

        \node (top right) at (3, 2) {$\functor{F} \, \ottsym{(}  \ottnt{A_{{\mathrm{1}}}}  \ottsym{)}  \mprod'  \ottsym{(}  \functor{F} \, \ottsym{(}  \ottnt{A_{{\mathrm{2}}}}  \ottsym{)}  \mprod'  \functor{F} \, \ottsym{(}  \ottnt{A_{{\mathrm{3}}}}  \ottsym{)}  \ottsym{)}$} 
            edge [<-] node[auto,swap] {$  \alpha  _{    }   \ottsym{'}$} (top left) ;
        \node (mid right) at (3, 0) {$\functor{F} \, \ottsym{(}  \ottnt{A_{{\mathrm{1}}}}  \ottsym{)}  \mprod'  \functor{F} \, \ottsym{(}  \ottnt{A_{{\mathrm{2}}}}  \mprod  \ottnt{A_{{\mathrm{3}}}}  \ottsym{)}$}
            edge [<-] node[auto,swap] {$  \textrm{id}  _{    }   \mprod'    m^{ \functor{F} }  _{  \ottnt{A_{{\mathrm{2}}}}  ,  \ottnt{A_{{\mathrm{3}}}}  } $} (top right) ;
        \node (bot right) at (3, -2) {$\functor{F} \, \ottsym{(}  \ottnt{A_{{\mathrm{1}}}}  \mprod  \ottsym{(}  \ottnt{A_{{\mathrm{2}}}}  \mprod  \ottnt{A_{{\mathrm{3}}}}  \ottsym{)}  \ottsym{)}$} 
            edge [<-] node[auto,swap] {$  m^{ \functor{F} }  _{  \ottnt{A_{{\mathrm{1}}}}  ,  \ottnt{A_{{\mathrm{2}}}}  \mprod  \ottnt{A_{{\mathrm{3}}}}  } $} (mid right)
            edge [<-] node[auto,swap] {$\functor{F} \, \ottsym{(}    \alpha  _{    }   \ottsym{)}$} (bot left) ;
    \end{tikzpicture}}\quad
    \scalebox{0.85}{\begin{tikzpicture}
        \node (top left) at (-2, 1) {$\functor{F} \, \ottsym{(}  \ottnt{A}  \ottsym{)}  \mprod'  \functor{F} \, \ottsym{(}  \ottnt{B}  \ottsym{)}$} ;
        \node (top right) at (2, 1) {$\functor{F} \, \ottsym{(}  \ottnt{B}  \ottsym{)}  \mprod'  \functor{F} \, \ottsym{(}  \ottnt{A}  \ottsym{)}$}
            edge [<-] node[auto,swap] {$  \sigma  _{    }   \ottsym{'}$} (top left) ;
        \node (bot left) at (-2,-1) {$\functor{F} \, \ottsym{(}  \ottnt{A}  \mprod  \ottnt{B}  \ottsym{)}$} 
            edge [<-] node[auto     ] {$  m^{ \functor{F} }  _{  \ottnt{A}  ,  \ottnt{B}  } $} (top left) ;
        \node (bot right) at (2,-1) {$\functor{F} \, \ottsym{(}  \ottnt{B}  \mprod  \ottnt{A}  \ottsym{)}$} 
            edge [<-] node[auto,swap] {$\functor{F} \, \ottsym{(}    \sigma  _{    }   \ottsym{)}$} (bot left) 
            edge [<-] node[auto,swap] {$  m^{ \functor{F} }  _{  \ottnt{B}  ,  \ottnt{A}  } $} (top right) ;
        \node at (0, -2) { };
    \end{tikzpicture}} 
    \scalebox{0.85}{\begin{tikzpicture}
        \node (left top left) at (-5 , 1) {$\ottsym{1'}  \mprod'  \functor{F} \, \ottsym{(}  \ottnt{A}  \ottsym{)}$} ;
        \node (left top right) at (-2, 1) {$\functor{F} \, \ottsym{(}  \ottnt{A}  \ottsym{)}$} 
            edge [<-] node[auto,swap] {$  \lambda  _{    }   \ottsym{'}$} (left top left);
        \node (left bot left) at (-5 ,-1) {$\functor{F} \, \ottsym{(}   1   \ottsym{)}  \mprod'  \functor{F} \, \ottsym{(}  \ottnt{A}  \ottsym{)}$} 
            edge [<-] node[auto     ] {$  m^{ \functor{F} }  _{  1  }   \mprod'    \textrm{id}  _{    } $} (left top left) ;
        \node (left bot right) at (-2,-1) {$\functor{F} \, \ottsym{(}   1   \mprod  \ottnt{A}  \ottsym{)}$} 
            edge [<-] node[auto,    ] {$  m^{ \functor{F} }  _{   1   ,  \ottnt{A}  } $} (left bot left) 
            edge [->] node[auto,swap] {$\functor{F} \, \ottsym{(}    \lambda  _{ \ottnt{A} }   \ottsym{)}$} (left top right);
    \end{tikzpicture} \qquad\qquad
    \begin{tikzpicture}
        \node (right top left) at ( 2, 1) {$\functor{F} \, \ottsym{(}  \ottnt{A}  \ottsym{)}  \mprod'  \ottsym{1'}$} ;
        \node (right top right) at (5, 1) {$\functor{F} \, \ottsym{(}  \ottnt{A}  \ottsym{)}$} 
            edge [<-] node[auto,swap] {$  \rho  _{    }   \ottsym{'}$} (right top left) ;
        \node (right bot left) at ( 2,-1) {$\functor{F} \, \ottsym{(}  \ottnt{A}  \ottsym{)}  \mprod'  \functor{F} \, \ottsym{(}   1   \ottsym{)}$} 
            edge [<-] node[auto     ] {$  \textrm{id}  _{    }   \mprod'    m^{ \functor{F} }  _{  1  } $} (right top left) ;
        \node (right bot right) at (5,-1) {$\functor{F} \, \ottsym{(}  \ottnt{A}  \mprod   1   \ottsym{)}$}
            edge [<-] node[auto,    ] {$  m^{ \functor{F} }  _{  \ottnt{A}  ,   1   } $} (right bot left)
            edge [->] node[auto,swap] {$\functor{F} \, \ottsym{(}    \rho  _{    }   \ottsym{)}$} (right top right) ;
    \end{tikzpicture}}
    }

    A functor $\functor{F}  \ottsym{:}  \cat{C}  \Rightarrow  \cat{C}'$ is \emph{symmetric comonoidal} if it is
    equipped with a map $  n^{ \functor{F} }  _{  1  }  : \functor{F} \,  1  \rightarrow \ottsym{1'}$ and natural transformation
    $  n^{ \functor{F} }  _{  \ottnt{A}  ,  \ottnt{B}  }   \ottsym{:}  \functor{F} \, \ottsym{(}  \ottnt{A}  \mprod  \ottnt{B}  \ottsym{)}  \rightarrow  \functor{F} \, \ottsym{(}  \ottnt{A}  \ottsym{)}  \mprod'  \functor{F} \, \ottsym{(}  \ottnt{B}  \ottsym{)}$ such that the appropriate
    (dual) diagrams commute.
\end{definition}

\begin{definition}
    Let $\functor{F}$ and $\functor{G}$ be symmetric monoidal functors $\functor{F},\functor{G}  \ottsym{:}  \cat{C}  \Rightarrow  \cat{C}'$.
    A monoidal natural transformation $\tau  \ottsym{:}  \functor{F}  \rightarrow  \functor{G}$ is a natural transformation
    satisfying
    \[
         \tau _{ \ottnt{A}  \mprod  \ottnt{B} }  \, \circ \,   m^{ \functor{F} }  _{  \ottnt{A}  ,  \ottnt{B}  }  =   m^{ \functor{G} }  _{  \ottnt{A}  ,  \ottnt{B}  }  \, \circ \, \ottsym{(}   \tau _{ \ottnt{A} }   \mprod'   \tau _{ \ottnt{B} }   \ottsym{)} 
        \qquad\text{and}\qquad
         \tau _{  1_{\mode L}  }  \, \circ \,   m^{ \functor{F} }  _{  1  }  =   m^{ \functor{G} }  _{  1  } .
    \]

    For $\functor{F}$ and $\functor{G}$ symmetric comonoidal functors, a 
    natural transformation $\tau  \ottsym{:}  \functor{F}  \rightarrow  \functor{G}$ is comonoidal if it satisfies
    the appropriate dual diagrams.
\end{definition}

\begin{definition}
    A \emph{symmetric (co-)monoidal adjunction} is an adjunction $\functor{F}  \dashv  \functor{G}$
    between symmetric \mbox{(co-)} monoidal functors $\functor{F}$ and $\functor{G}$ 
    where the unit and counit of the adjunction are symmetric (co-)monoidal
    natural transformations.
\end{definition}

\paragraph{The $ \textsc{LPC} $ model.}

Traditionally the multiplicative fragment of linear logic
is modeled by a *-autonomous category. For LPC, we use an equivalent notion
that puts the tensor $ \mprod $ and co-tensor $ \msum $ on equal footing, by modeling the 
category $ \cat{L} $ as a symmetric linearly distributive category with negation 
\cite{CockettSeely97}.

\begin{definition}
    Let $ \cat{L} $ be a category with two symmetric monoidal structures $ \mprod $ and 
    $ \msum $, and a natural transformation
    \vspace{-3mm}
    \[   \delta  _{  \ottnt{A_{{\mathrm{1}}}}  ,   \ottnt{A_{{\mathrm{2}}}}  ,  \ottnt{A_{{\mathrm{3}}}}   }   \ottsym{:}  \ottnt{A_{{\mathrm{1}}}}  \mprod  \ottsym{(}  \ottnt{A_{{\mathrm{2}}}}  \msum  \ottnt{A_{{\mathrm{3}}}}  \ottsym{)}  \rightarrow  \ottsym{(}  \ottnt{A_{{\mathrm{1}}}}  \mprod  \ottnt{A_{{\mathrm{2}}}}  \ottsym{)}  \msum  \ottnt{A_{{\mathrm{3}}}} \]
    Then $ \cat{L} $ is a \emph{symmetric linearly
    distributive category} if $ \delta $ satisfies a number of coherence
    conditions described by Cockett and Seely \cite{CockettSeely97}.

    $ \cat{L} $ is said to \emph{have negation} if there exists a map
    $ \dualize{ \ottsym{(}  \ottsym{-}  \ottsym{)} } $ on objects of $ \cat{L} $ and families of maps
    \[   \gamma^{\bot}  _{ \ottnt{A} }   \ottsym{:}   \dualize{ \ottnt{A} }   \mprod  \ottnt{A}  \rightarrow   \bot_{\mode L}  \qquad\text{and}\qquad   \gamma^{1}  _{ \ottnt{A} }   \ottsym{:}   1_{\mode L}   \rightarrow   \dualize{ \ottnt{A}  \msum  \ottnt{A} } \]
    commuting with $ \delta $ in certain ways.
\end{definition}

\begin{theorem}[Cockett and Seely] \label{thm:distr-*-autonomous}
    Symmetric linearly distributive categories with negation
    correspond to *-autonomous categories.
\end{theorem}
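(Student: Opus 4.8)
The plan is to prove the correspondence by constructing passages in both directions and checking that the structures match. In the forward direction, suppose $ \cat{L} $ is a symmetric linearly distributive category with negation. I define a candidate internal hom by $\ottnt{A}  \lolto  \ottnt{B} :=  \dualize{ \ottnt{A} }   \msum  \ottnt{B}$ and take $ \bot_{\mode L} $ as the dualizing object. The first task is to show that $(\cdot)  \mprod  \ottnt{A}$ is left adjoint to $\ottnt{A}  \lolto  (\cdot)$. The counit (evaluation) $(  \dualize{ \ottnt{A} }   \msum  \ottnt{B} )  \mprod  \ottnt{A}  \rightarrow  \ottnt{B}$ is obtained by applying symmetry and a $\delta$-instance to expose an $\ottnt{A}  \mprod   \dualize{ \ottnt{A} } $ pair, annihilating it via the counit $\gamma^{\bot} :  \dualize{ \ottnt{A} }   \mprod  \ottnt{A}  \rightarrow   \bot_{\mode L} $, and absorbing the resulting $ \bot_{\mode L} $ into the $ \msum $-unit. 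Dually, the unit (coevaluation) $\ottnt{B}  \rightarrow   \dualize{ \ottnt{A} }   \msum  (  \ottnt{A}  \mprod  \ottnt{B}  )$ is built from $\gamma^{1} :  1_{\mode L}   \rightarrow  \ottnt{A}  \msum   \dualize{ \ottnt{A} } $ followed by distributivity. I then verify the two triangle identities of the adjunction; these unfold into the snake (yanking) equations relating $\delta$ to $\gamma^{\bot}$ and $\gamma^{1}$, which are exactly the coherence conditions Cockett and Seely impose on negation in a linearly distributive category.

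Next I check that $ \bot_{\mode L} $ is genuinely dualizing, i.e.\ that the canonical comparison $\ottnt{A}  \rightarrow  ( \ottnt{A}  \lolto   \bot_{\mode L}  )  \lolto   \bot_{\mode L} $ is invertible. Unfolding $\lolto$ and using the $ \msum $-unit isomorphisms, $\ottnt{A}  \lolto   \bot_{\mode L} $ is isomorphic to $ \dualize{ \ottnt{A} } $, so this comparison is the double-negation map $\ottnt{A}  \rightarrow   \dualize{  \dualize{ \ottnt{A} }  } $, whose invertibility is precisely the statement that $ \cat{L} $ has negation. Together with the closed structure just built, this shows the resulting category is $*$-autonomous.

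For the reverse direction I start from a $*$-autonomous category $( \cat{C} ,  \mprod ,  1_{\mode L} ,  \dualize{ (\cdot) } )$ and equip it with a second monoidal structure by De Morgan duality: $\ottnt{A}  \msum  \ottnt{B} :=  \dualize{ (  \dualize{ \ottnt{A} }   \mprod   \dualize{ \ottnt{B} }  ) } $ with unit $ \bot_{\mode L}  :=  \dualize{  1_{\mode L}  } $. The linear distributivity $\delta$ is recovered by currying the evident map $\ottnt{A_{{\mathrm{1}}}}  \mprod  ( \ottnt{A_{{\mathrm{2}}}}  \msum  \ottnt{A_{{\mathrm{3}}}} )  \rightarrow  ( \ottnt{A_{{\mathrm{1}}}}  \mprod  \ottnt{A_{{\mathrm{2}}}} )  \msum  \ottnt{A_{{\mathrm{3}}}}$ through the closed structure, and the negation maps $\gamma^{\bot}$, $\gamma^{1}$ arise as the evaluation and coevaluation of the $*$-autonomous structure. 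Finally I verify that the two passages are mutually inverse up to coherent isomorphism: since both structures are determined by $ \mprod $, $ 1_{\mode L} $, and $ \dualize{ (\cdot) } $ once the adjunction is in place, each round-trip reduces to comparing canonical isomorphisms, which hold by monoidal coherence.

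The step I expect to be the main obstacle is the triangle-identity verification in the forward direction: showing that Cockett and Seely's coherence axioms for $\delta$, $\gamma^{\bot}$, and $\gamma^{1}$ translate into exactly the two snake equations the hom-tensor adjunction requires, with no further conditions needed. This is a careful diagram chase threading distributivity through the negation unit and counit, and it is where the precise choice of LDC-with-negation axioms does the real work.
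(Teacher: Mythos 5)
This theorem is stated in the paper without proof; it is imported verbatim from Cockett and Seely \cite{CockettSeely97}, so there is no in-paper argument to compare yours against. Your sketch follows the standard route from that reference: setting $A \lolto B := \dualize{A} \msum B$, building evaluation and coevaluation out of $\gamma^{\bot}$, $\gamma^{1}$, symmetry and $\delta$, and observing that the triangle identities are exactly the snake conditions imposed on a negation; the reverse direction by De Morgan duality and currying is likewise the standard one. The step you flag as the main obstacle (the triangle-identity diagram chase) is indeed where all the work lives. The one imprecision is your claim that invertibility of the double-negation map $A \to \dualize{\dualize{A}}$ \emph{is} the statement that $\cat{L}$ has negation: it is rather a consequence of it --- by symmetry $A$ and $\dualize{A}$ are complements of one another, and complements in a linearly distributive category are unique up to canonical isomorphism, which is what forces $A \cong \dualize{\dualize{A}}$. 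With that step replaced by the uniqueness-of-complements argument, the outline is sound.
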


\begin{definition} \label{def:LPC}
    A linear/producing/consuming ($ \textsc{LPC} $) model consists of the following components:
    \begin{enumerate}
        \item A symmetric linearly distributive category $( \cat{L} , \mprod , \msum )$ with
        negation $ \dualize{ \ottsym{(}  \ottsym{-}  \ottsym{)} } $, finite products $ \aprod $ and finite coproducts $ \asum $.

        \item 
        Symmetric monoidal categories $( \cat{P} , \mprod )$ and $( \cat{C} , \msum )$
        in duality by means of contravariant functors
        $ \star{ \ottsym{(}  \ottsym{-}  \ottsym{)} }   \ottsym{:}   \cat{P}   \Rightarrow   \cat{C} $ and $ \lowerstar{ \ottsym{(}  \ottsym{-}  \ottsym{)} }   \ottsym{:}   \cat{C}   \Rightarrow   \cat{P} $,
        where $ \star{ \ottsym{(}  \ottsym{-}  \ottsym{)} } $ is monoidal and $ \lowerstar{ \ottsym{(}  \ottsym{-}  \ottsym{)} } $ is comonoidal, with natural
        isomorphisms 
        \vspace{-3mm} \[
              {\epsilon_*^*}  _{ \ottnt{C} }   \ottsym{:}   \star{ \ottsym{(}   \lowerstar{ { \ottnt{C} } }   \ottsym{)} }   \rightarrow  \ottnt{C} \qquad\text{and}\qquad   {\eta^*_*}  _{ \ottnt{P} }   \ottsym{:}  \ottnt{P}  \rightarrow   \lowerstar{ { \ottsym{(}   \star{ \ottnt{P} }   \ottsym{)} } } .
        \]

        \item Monoidal natural transformations
        $   e  _{ \ottnt{P} }  ^{\mprod}   \ottsym{:}  \ottnt{P}  \rightarrow   1_{\mode P} $ and $   d  _{ \ottnt{P} }  ^{\mprod}   \ottsym{:}  \ottnt{P}  \rightarrow  \ottnt{P}  \mprod  \ottnt{P}$
        in $ \cat{P} $ and comonoidal natural transformations
        $   e  _{ \ottnt{C} }  ^{\msum}   \ottsym{:}   \bot_{\mode C}   \rightarrow  \ottnt{C}$ and $   d  _{ \ottnt{C} }  ^{\msum}   \ottsym{:}  \ottnt{C}  \msum  \ottnt{C}  \rightarrow  \ottnt{C}$
        in $ \cat{C} $, interchanged under duality, such that:
        \begin{enumerate}[(a)]
            \item for every $\ottnt{P}$, $(\ottnt{P},   d  _{ \ottnt{P} }  ^{\mprod} ,   e  _{ \ottnt{P} }  ^{\mprod} )$
            forms a commutative comonoid in $ \cat{P} $; and 
            \item for every $\ottnt{C}$, $(\ottnt{C},   d  _{ \ottnt{C} }  ^{\msum} ,   e  _{ \ottnt{C} }  ^{\msum} )$
            forms a commutative monoid in $ \cat{C} $.
        \end{enumerate}
    
        \item Symmetric monoidal functors $ \bangfunctor{ \ottsym{-} }   \ottsym{:}   \cat{L}   \Rightarrow   \cat{P} $ and 
        $ F_\oc   \ottsym{:}   \cat{P}   \Rightarrow   \cat{L} $
        and symmetric comonoidal functors
        $ \whynotfunctor{ \ottsym{-} }   \ottsym{:}   \cat{L}   \Rightarrow   \cat{C} $ and $ F_\wn   \ottsym{:}   \cat{C}   \Rightarrow   \cat{L} $,
        which respect the dualities in that
        $ \dualize{ \ottsym{(}   F_\oc  \, \ottnt{P}  \ottsym{)} }   \simeq   F_\wn  \, \ottsym{(}   \star{ \ottnt{P} }   \ottsym{)}$
        and $ \bangfunctor{ \ottnt{A} }   \simeq   \whynotfunctor{  \dualize{ \ottnt{A} }  } $,
        and that form monoidal/comonoidal adjunctions
        $ \bangfunctor{ \ottsym{-} }   \dashv   F_\oc $ and $ F_\wn   \dashv   \whynotfunctor{ \ottsym{-} } $.
    \end{enumerate}
\end{definition}

To unpack condition (3), consider the definition of a commutative comonoid:

\begin{definition}
    Let $( \cat{P} , \mprod , 1_{\mode P} )$ be a symmetric monoidal category. 
    A commutative comonoid in $ \cat{P} $
    is an object $\ottnt{P}$ in $ \cat{P} $ along with two morphisms 
    $   e  _{    }  ^{\mprod}   \ottsym{:}  \ottnt{P}  \rightarrow   1_{\mode P} $ and 
    $   d  _{    }  ^{\mprod}   \ottsym{:}  \ottnt{P}  \rightarrow  \ottnt{P}  \mprod  \ottnt{P}$ that commute with the symmetric
    monoidal structure of $ \cat{P} $.
Dually, a commutative monoid in a symmetric monoidal category $( \cat{C} , \msum , \bot_{\mode C} )$ 
is an object $\ottnt{C}$ along with morphisms $   e  _{    }  ^{\msum}   \ottsym{:}   \bot_{\mode C}   \rightarrow  \ottnt{C}$ and 
$   d  _{    }  ^{\msum}   \ottsym{:}  \ottnt{C}  \msum  \ottnt{C}  \rightarrow  \ottnt{C}$.
\end{definition}

The commutative comonoids in $ \cat{P} $ ensure that all propositions are duplicable
in the producer category. This property is then preserved by the exponential 
decomposition $ F_\oc $, leading to the property that linear propositions of the form
$\oc  \ottnt{A}= F_\oc  \,  \bangfunctor{ \ottnt{A} } $ are similarly duplicable.

Because $ \bangfunctor{ \ottsym{-} }   \dashv   F_\oc $ forms a monoidal adjunction,
$ F_\oc $ is necessarily a strong monoidal functor~\cite{kelly1974doctrinal}, 
which implies that $ F_\oc $ is both monoidal and comonoidal. 
A similar result can be stated for $ F_\wn $.

\paragraph{$ \textsc{LPC} $ and other linear logic models.} \label{sec:othermodels}

As $ \textsc{LPC} $ is inspired by Benton's linear/non-linear paradigm, 
this section formalizes the relationship between $ \textsc{LPC} $, $ \textsc{LNL} $,
and single-category models of linear logic.
\begin{definition}[Melli{\`e}s \cite{Mellies03}]
    A linear/non-linear ($ \textsc{LNL} $) model consists of:
    \begin{inparaenum}[(1)]
        \item a symmetric monoidal closed category $ \cat{L} $;
        \item a cartesian category $ \cat{P} $; and
        \item functors $\functor{G}  \ottsym{:}   \cat{L}   \Rightarrow   \cat{P} $ and $\functor{F}  \ottsym{:}   \cat{P}   \Rightarrow   \cat{L} $
        that form a symmetric monoidal adjunction $\functor{F}  \dashv  \functor{G}$.\footnote{The
        $ \textsc{LNL} $ model given by Benton \cite{Benton94mixed} has the added stipulation
        that the cartesian category be cartesian closed, but other works have since
        disregarded this condition \cite{Mellies03}.}
    \end{inparaenum}
\end{definition}
In $ \textsc{LPC} $, because every object in $ \cat{P} $ forms a commutative
comonoid, $ \cat{P} $ is cartesian~\cite{Fox76}. Therefore:
\begin{proposition}\label{prop:LPC_is_LNL}
    Every $ \textsc{LPC} $ model is an $ \textsc{LNL} $ model.
\end{proposition}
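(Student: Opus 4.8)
The plan is to take an arbitrary $\textsc{LPC}$ model and read off, clause by clause, the data demanded by the definition of an $\textsc{LNL}$ model, simply forgetting the parts of the $\textsc{LPC}$ structure (the co-tensor $\msum$, the consuming category $\cat{C}$, the negation, and the finite (co)products) that the $\textsc{LNL}$ interface does not see. An $\textsc{LNL}$ model requires (i) a symmetric monoidal closed category, (ii) a cartesian category, and (iii) a symmetric monoidal adjunction between them; I would supply these from $\cat{L}$, $\cat{P}$, and the producer adjunction of \Definition{LPC}(4) respectively. No new construction is needed: the entire argument is the recognition that two already-quoted theorems apply.

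For clause (i), condition (1) of \Definition{LPC} gives $\cat{L}$ as a symmetric linearly distributive category with negation. By \Theorem{distr-*-autonomous} (Cockett and Seely) this is exactly a $*$-autonomous category, and every $*$-autonomous category is symmetric monoidal closed, with internal hom $A \lolto B := \dualize{(A \mprod \dualize{B})}$. Forgetting the additive and co-multiplicative structure leaves precisely the symmetric monoidal closed $(\cat{L}, \mprod, 1_{\mode L})$ required by an $\textsc{LNL}$ model.

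For clause (ii), I would use the comonoid data of condition (3). That condition equips each producer $P$ with monoidal natural transformations $e_P^{\mprod} : P \to 1_{\mode P}$ and $d_P^{\mprod} : P \to P \mprod P$ making $(P, d_P^{\mprod}, e_P^{\mprod})$ a commutative comonoid, coherently with the symmetric monoidal structure of $\cat{P}$. These are exactly the hypotheses of Fox's theorem~\cite{Fox76}: a symmetric monoidal category carrying such a natural, coherent family of commutative comonoid structures is cartesian, with $\mprod$ as categorical product and $1_{\mode P}$ as terminal object. Hence $\cat{P}$ is cartesian. For clause (iii), the producer half of condition (4) provides symmetric (co)monoidal functors $\bangfunctor{-} : \cat{L} \Rightarrow \cat{P}$ and $F_\oc : \cat{P} \Rightarrow \cat{L}$ forming a symmetric monoidal adjunction (with $F_\oc$ strong monoidal, as already remarked after the definition). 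Identifying $\functor{F}$ with $F_\oc$ and $\functor{G}$ with $\bangfunctor{-}$, so that $\oc = \functor{F} \circ \functor{G}$ exactly as in the introduction, exhibits this as the adjunction $\functor{F} \dashv \functor{G}$ demanded by the $\textsc{LNL}$ definition. Assembling (i)--(iii) yields an $\textsc{LNL}$ model and proves \Proposition{LPC_is_LNL}.

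The step requiring the most care is the first: one must invoke \Theorem{distr-*-autonomous} to move from the linearly-distributive-with-negation presentation of $\cat{L}$ to a genuinely monoidal closed one, and then check that the symmetric monoidal structure $(\cat{L}, \mprod, 1_{\mode L})$ underlying both presentations is the same one participating in the producer adjunction, so that clauses (i) and (iii) are compatible. Everything else is bookkeeping--an application of Fox's theorem and the discarding of the consumer and co-multiplicative structure that the $\textsc{LNL}$ model ignores.
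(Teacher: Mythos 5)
Your proposal is correct and follows the same route as the paper, whose entire argument is the one-sentence observation that the commutative comonoid structure on every object of $\cat{P}$ makes $\cat{P}$ cartesian by Fox's theorem, with the remaining identifications (closedness of $\cat{L}$ via $*$-autonomy, and the producer adjunction $F_\oc \dashv \bangfunctor{-}$ playing the role of $\functor{F} \dashv \functor{G}$) left implicit. You have simply made those implicit steps explicit.
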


In addition, a *-autonomous category in a linear/non-linear model induces an $ \textsc{LPC} $ triple:
\begin{proposition}
    If the category $ \cat{L} $ in an $ \textsc{LNL} $ model is *-autonomous, then
    $( \cat{L} ,  \cat{P} ,  \op{  \cat{P}  } )$ is an $ \textsc{LPC} $ model.
\end{proposition}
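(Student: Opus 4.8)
The plan is to build, from the $\textsc{LNL}$ data $(\cat{L},\cat{P},\functor{F}\dashv\functor{G})$ with $\cat{L}$ now assumed $*$-autonomous, each of the four components of \Definition{LPC} over the triple $(\cat{L},\cat{P},\cat{C})$ with $\cat{C}:=\op{\cat{P}}$. Component~(1) is nearly immediate from the hypotheses: $\cat{L}$ is already symmetric monoidal closed, so applying \Theorem{distr-*-autonomous} to its $*$-autonomous structure presents $\cat{L}$ as a symmetric linearly distributive category with negation, retaining the given tensor as $\mprod$, defining the cotensor by $A\msum B:=\dualize{(\dualize{A}\mprod\dualize{B})}$ with unit $\bot_{\mode L}:=\dualize{1_{\mode L}}$, and taking negation to be $\dualize{(-)}$. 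The finite products $\aprod$—which I take to be part of the given structure on $\cat{L}$—are carried over unchanged, and the coproducts $\asum$ are recovered by dualising them, $A\asum B:=\dualize{(\dualize{A}\aprod\dualize{B})}$.

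For component~(2), I would give $\cat{P}$ its cartesian monoidal structure $(\mprod:=\times,\ 1_{\mode P}:=\text{terminal object})$ and $\cat{C}=\op{\cat{P}}$ the opposite, cocartesian, structure $(\msum,\ \bot_{\mode C}:=\text{initial object})$; both are symmetric monoidal. The duality functors $\star{(-)}:\cat{P}\Rightarrow\cat{C}$ and $\lowerstar{(-)}:\cat{C}\Rightarrow\cat{P}$ are the canonical contravariant functors that are the identity on objects and reverse morphisms. These are mutually inverse, so the required isomorphisms ${\epsilon_*^*}$ and ${\eta^*_*}$ are identities; and since the monoidal structure on $\cat{C}$ is by definition the dual of the one on $\cat{P}$, the duality functors transport the product of $\cat{P}$ to the coproduct of $\cat{C}$, so $\star{(-)}$ is monoidal and $\lowerstar{(-)}$ comonoidal. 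For component~(3), because $\cat{P}$ is cartesian, Fox's theorem equips every object $P$ with a unique commutative comonoid structure, $d^{\mprod}_{P}$ the diagonal and $e^{\mprod}_{P}$ the canonical map to the terminal object, both natural and monoidal; reading these in $\op{\cat{P}}$ makes every object of $\cat{C}$ a commutative monoid, and the two families are interchanged under $\star{(-)}$ and $\lowerstar{(-)}$ by construction.

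The heart of the argument is component~(4). I would set $\bangfunctor{-}:=\functor{G}$ and $F_\oc:=\functor{F}$, so that the symmetric monoidal adjunction $\functor{F}\dashv\functor{G}$ is precisely the $\oc$-side (co)monoidal adjunction and the comonad $\oc=F_\oc\circ\bangfunctor{-}=\functor{F}\circ\functor{G}$ is the usual one; this is exactly the assignment inverse to the one underlying \Proposition{LPC_is_LNL}. The $\wn$-side functors are then forced by the duality constraints and obtained by conjugating the $\oc$-side with negation: $F_\wn:=\dualize{(-)}\circ\functor{F}\circ\lowerstar{(-)}:\cat{C}\Rightarrow\cat{L}$ and $\whynotfunctor{-}:=\star{(-)}\circ\functor{G}\circ\dualize{(-)}:\cat{L}\Rightarrow\cat{C}$. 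With these choices the isomorphisms $\dualize{(F_\oc\,P)}\simeq F_\wn\,(\star{P})$ and $\bangfunctor{A}\simeq\whynotfunctor{\dualize{A}}$ fall out by unfolding the definitions, using $\lowerstar{(\star{P})}\simeq P$, $\dualize{\dualize{A}}\simeq A$, and that $\star{(-)}$ and $\lowerstar{(-)}$ are the identity on objects.

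The step I expect to be the main obstacle is the remaining $\wn$-side structure: that $F_\wn$ and $\whynotfunctor{-}$ are symmetric comonoidal and that $\whynotfunctor{-}$ and $F_\wn$ form a (co)monoidal adjunction dual to $\functor{F}\dashv\functor{G}$. The content is that pre- and post-composing with the contravariant negation $\dualize{(-)}$—which sends $\mprod$ to $\msum$—and with the identity-on-objects dualities $\star{(-)},\lowerstar{(-)}$ converts monoidal coherence data into comonoidal data and reverses the adjunction, so that the comonoidal coherence diagrams and the triangle identities for the $\wn$-side are exactly the negation-duals of those already guaranteed for the symmetric monoidal adjunction $\functor{F}\dashv\functor{G}$. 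The real work is the variance bookkeeping needed to confirm that each contravariant step flips the structure morphisms into the comonoidal direction; once this is settled, the interchange-under-duality of the comonoid/monoid families of component~(3) is immediate from naturality of $\dualize{(-)}$, $\star{(-)}$, and $\lowerstar{(-)}$.
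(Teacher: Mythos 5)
The paper states this proposition without proof, so there is no in-text argument to compare against; judged on its own terms, your construction is correct and is evidently the intended one. Setting $F_\oc := \functor{F}$ and $ \bangfunctor{ \ottsym{-} }  := \functor{G}$ and conjugating through the dualities to obtain $F_\wn =  \dualize{ \ottsym{(}  \ottsym{-}  \ottsym{)} }  \circ \functor{F} \circ  \lowerstar{ \ottsym{(}  \ottsym{-}  \ottsym{)} } $ and $ \whynotfunctor{ \ottsym{-} }  =  \star{ \ottsym{(}  \ottsym{-}  \ottsym{)} }  \circ \functor{G} \circ  \dualize{ \ottsym{(}  \ottsym{-}  \ottsym{)} } $ reproduces exactly the formula $\wn =  \dualize{ \ottsym{(}   \op{ \functor{F} }  \, \ottsym{(}   \op{ \functor{G} }  \, \ottsym{(}   \dualize{ \ottsym{-} }   \ottsym{)}  \ottsym{)}  \ottsym{)} } $ that the introduction identifies as the canonical implementation of $\wn$ via $ \op{  \cat{P}  } $, and your components (2) and (3) --- the identity-on-objects dualities between $ \cat{P} $ and $ \op{  \cat{P}  } $ and the Fox comonoids --- are the standard ones. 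The step you single out as the main obstacle does go through: the chain $\Hom{ \whynotfunctor{ A } , C} \cong \Hom{ \lowerstar{ { C } } , \functor{G}\, \dualize{ A } } \cong \Hom{\functor{F}\, \lowerstar{ { C } } ,  \dualize{ A } } \cong \Hom{A, F_\wn\, C}$ (using the definition of $ \op{  \cat{P}  } $, the adjunction $\functor{F}  \dashv  \functor{G}$, and *-autonomy) shows the conjugated adjunction comes out with the adjoints on the opposite sides from the $\oc$ case, as you predict, and the same conjugation turns the monoidal coherence of $\functor{F}$ and $\functor{G}$ into the required comonoidal coherence because $ \dualize{ \ottsym{(}  \ottsym{-}  \ottsym{)} } $ carries $ \mprod $ to $ \msum $. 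The one substantive caveat is the one you flag yourself: \Definition{LPC}(1) demands finite products $ \aprod $ and coproducts $ \asum $ in $ \cat{L} $, whereas Melli{\`e}s's $ \textsc{LNL} $ definition asks only for a symmetric monoidal closed category, and *-autonomy does not supply products; so the proposition as literally stated needs the extra hypothesis that $ \cat{L} $ has finite products. Your explicit assumption is the right repair, and this is an omission in the paper's statement rather than a defect of your argument.
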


Next we prove that every $ \textsc{LPC} $ model contains a classical linear category
as defined by Schalk \cite{Schalk04}. This definition is just the extension
of Benton et al's linear category \cite{BentonBHP93term} to classical linear logic.

\begin{definition}[Schalk \cite{Schalk04}]\label{ModelCLL}
    A category $ \cat{L} $ is a model for classical linear logic if and only if it:
    \begin{inparaenum}[(1)]
        \item is *-autonomous;
        \item has finite products $ \aprod $ and thus finite coproducts $ \asum $; and
        \item has a linear exponential comonad $ \oc $ and thus a 
              linear exponential monad $ \wn $.
    \end{inparaenum}
\end{definition}

\begin{proposition}\label{prop:LPC_is_linear}
    The category $ \cat{L} $ from the $ \textsc{LPC} $ model is a model for classical linear logic.
\end{proposition}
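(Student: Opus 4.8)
The plan is to check the three requirements of Definition~\ref{ModelCLL} in turn, dispatching the first two immediately and concentrating the work on the exponentials.

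For $*$-autonomy, recall that \Definition{LPC}(1) already presents $ \cat{L} $ as a symmetric linearly distributive category with negation $ \dualize{ \ottsym{(}  \ottsym{-}  \ottsym{)} } $; I would therefore simply invoke \Theorem{distr-*-autonomous} to conclude that $ \cat{L} $ is $*$-autonomous. The finite products $ \aprod $ and finite coproducts $ \asum $ required by the second clause are supplied outright by the same \Definition{LPC}(1).

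The substance lies in the third clause, namely exhibiting a linear exponential comonad $ \oc $ together with its dual monad $ \wn $. The first step is to observe that the monoidal structure $( \mprod ,  1_{\mode P} )$ on $ \cat{P} $ is in fact cartesian: by \Definition{LPC}(3) every object of $ \cat{P} $ carries a natural commutative comonoid structure compatible with $ \mprod $, so Fox's theorem~\cite{Fox76} identifies $ \mprod $ with the categorical product---exactly the observation underlying \Proposition{LPC_is_LNL}. I would then set $ \oc  :=  F_\oc  \circ  \bangfunctor{ \ottsym{-} } $, matching the encoding $\oc  \ottnt{A}  =  F_\oc  \,  \bangfunctor{ \ottnt{A} } $ of \Section{logic}, and recognise it as the comonad induced by the monoidal adjunction relating $ \cat{L} $ and $ \cat{P} $ in \Definition{LPC}(4), precisely as $\oc = \functor{F} \circ \functor{G}$ arises in the intuitionistic picture of \Figure{2cat}. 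To see that this comonad is a \emph{linear exponential} comonad I would appeal to the standard correspondence (Benton~\cite{Benton94mixed}, Melli{\`e}s~\cite{Mel09}) between monoidal adjunctions relating a symmetric monoidal category to a cartesian one and linear exponential comonads on the symmetric monoidal side: the unit and counit of the adjunction furnish $ \delta $ and $ \epsilon $, the strong monoidal functor $ F_\oc $ (strong by doctrinal adjunction~\cite{kelly1974doctrinal}, as already noted in \Section{category}) transports the cartesian comonoid on $ \bangfunctor{ \ottnt{A} } $ to a commutative comonoid on $ \oc  \ottnt{A} $, and monoidality of the adjunction makes the remaining coherence squares commute.

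For the linear exponential monad $ \wn $ I would run the formally dual argument on the consuming side: by \Definition{LPC}(3) the structure $( \msum ,  \bot_{\mode C} )$ on $ \cat{C} $ is co-cartesian, and the comonoidal adjunction of \Definition{LPC}(4) then yields $ \wn  :=  F_\wn  \circ  \whynotfunctor{ \ottsym{-} } $ as a linear exponential monad; equivalently $ \wn $ may be recovered from $ \oc $ by De Morgan duality as $ \wn  \ottnt{A}  \simeq  \dualize{  \oc   \dualize{ \ottnt{A} }  } $, the two descriptions agreeing through the duality-respecting isomorphisms $ \dualize{ \ottsym{(}   F_\oc  \, \ottnt{P}  \ottsym{)} }   \simeq   F_\wn  \, \ottsym{(}   \star{ \ottnt{P} }   \ottsym{)}$ and $ \bangfunctor{ \ottnt{A} }   \simeq   \whynotfunctor{  \dualize{ \ottnt{A} }  } $ imposed in \Definition{LPC}(4). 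I expect the genuine obstacle to be the full battery of linear-exponential-comonad coherence laws---that the comonoid maps lift to $ \oc $-coalgebra morphisms and that $ \delta $ is itself a comonoid morphism. Rather than grind through these diagram chases, I would discharge them by citing the monoidal-adjunction presentation above, verifying only that the \textsc{LPC} axioms furnish exactly its hypotheses (a symmetric monoidal adjunction with a cartesian partner category), which the preceding steps have established.
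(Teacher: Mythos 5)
Your proposal is correct and follows essentially the same route as the paper: $*$-autonomy via \Theorem{distr-*-autonomous}, products and coproducts by clause (1) of \Definition{LPC}, and the linear exponential comonad obtained by noting that the LPC model is an LNL model (Fox's theorem making $ \cat{P} $ cartesian, as in \Proposition{LPC_is_LNL}) and then invoking Benton's result that every LNL model carries a linear exponential comonad. Your extra discussion of $ \wn $ is consistent with the paper, which treats the linear exponential monad as induced from $ \oc $ by the duality in \Definition{ModelCLL}.
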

\begin{proof}
    From \Theorem{distr-*-autonomous} we know that $ \cat{L} $ is *-autonomous,
    and by construction it has finite products and coproducts.
    Because the $ \textsc{LPC} $ model is also an $ \textsc{LNL} $ model, we may apply Benton's
    proof that every $ \textsc{LNL} $ model has a linear exponential comonad \cite{Benton94mixed}.
\end{proof}

\begin{proposition}
    Every model for classical linear logic forms an $ \textsc{LPC} $ category.
\end{proposition}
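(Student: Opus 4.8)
The plan is to recover the three-category structure of \Definition{LPC} from the single classical-linear category $\cat{L}$ by taking the two persistent categories to be categories of (co)algebras for the exponential (co)monads, exactly as in the equivalence between linear categories and $\textsc{LNL}$ models. First, $\cat{L}$ supplies item~(1) directly: by \Theorem{distr-*-autonomous} a *-autonomous category is the same as a symmetric linearly distributive category with negation $\dualize{(-)}$, and $\cat{L}$ has finite products $\aprod$ and coproducts $\asum$ by hypothesis. I would then define $\cat{P}$ to be the category of Eilenberg--Moore coalgebras of the linear exponential comonad $\oc$, with $F_\oc$ the forgetful functor and $\bangfunctor{-}$ the cofree functor. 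Because $\oc$ is a \emph{linear} exponential comonad, this adjunction is symmetric monoidal and $\cat{P}$ is cartesian, with $\mprod$ given by the tensor of $\cat{L}$ (see Benton~\cite{Benton94mixed} and Melli\`es~\cite{Mel09}); this yields the functors and the monoidal adjunction required by item~(4) for the $\oc$ side and recovers $\oc = F_\oc\,\bangfunctor{-}$.

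For the consumer side I would dualise. Since $\cat{L}$ is *-autonomous, the negation gives an equivalence $\dualize{(-)}\colon \op{\cat{L}} \simeq \cat{L}$ under which the comonad $\oc$ corresponds to the monad $\wn$ via the De~Morgan relation $\wn\,\ottnt{A} \simeq \dualize{(\oc\,\dualize{\ottnt{A}})}$. I therefore set $\cat{C}$ to be the category of Eilenberg--Moore algebras of $\wn$, with $F_\wn$ the forgetful functor and $\whynotfunctor{-}$ the free functor, obtaining the comonoidal adjunction for the $\wn$ side of item~(4). Transporting $\cat{P}$ along the negation sends each $\oc$-coalgebra $(\ottnt{P},h)$ to the $\wn$-algebra $(\dualize{\ottnt{P}},\dualize{h})$; this defines the contravariant duality functors $\star{(-)}\colon \cat{P} \Rightarrow \cat{C}$ and $\lowerstar{(-)}\colon \cat{C} \Rightarrow \cat{P}$ of item~(2), and the double-negation isomorphism of the *-autonomous structure furnishes the natural isomorphisms ${\epsilon_*^*}$ and ${\eta^*_*}$. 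Because $\cat{P}$ is cartesian and $\cat{C}$ is (dually) cocartesian, every object carries a \emph{unique} commutative comonoid structure $(\ottnt{P}, d^{\mprod}, e^{\mprod})$, respectively commutative monoid structure $(\ottnt{C}, d^{\msum}, e^{\msum})$~\cite{Fox76}, which discharges item~(3); uniqueness forces these to agree with the monoidal structure and to be interchanged under $\star{(-)}$ and $\lowerstar{(-)}$.

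It remains to check the coherence clauses tying the duality to the exponentials, namely $\dualize{(F_\oc\,\ottnt{P})} \simeq F_\wn\,(\star{\ottnt{P}})$ and $\bangfunctor{\ottnt{A}} \simeq \whynotfunctor{\dualize{\ottnt{A}}}$, and that the two monoidal/comonoidal adjunctions are interchanged by negation. I expect this last bookkeeping to be the main obstacle: each identity is forced by the De~Morgan relation between $\oc$ and $\wn$ together with the fact that $\star{(-)}$ is defined by applying $\dualize{(-)}$ objectwise, but verifying that the coalgebra and comonoid data are carried coherently---so that the comparison maps $m^{F_\oc}$ and $n^{F_\wn}$ really are dual---requires a diagram chase through the linear-exponential-comonad axioms. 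Everything else is either immediate or cited, so assembling these components exhibits the data of \Definition{LPC} and completes the proof.
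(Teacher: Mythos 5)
Your proof is correct and follows essentially the same route as the paper, which simply invokes Benton's theorem that a linear category yields an $\textsc{LNL}$ model and then applies the preceding proposition that an $\textsc{LNL}$ model whose linear category is *-autonomous gives the $\textsc{LPC}$ triple $(\cat{L},\cat{P},\op{\cat{P}})$. Your $\cat{C}$, presented as the Eilenberg--Moore algebras of $\wn$, coincides with the paper's $\op{\cat{P}}$ up to the negation equivalence, so you are just unfolding the details the paper delegates to its two citations.
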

\begin{proof}
    Benton proved that every SMCC with a linear exponential comonad has an $ \textsc{LNL} $
    model. Because the linear category
    is *-autonomous the $ \textsc{LNL} $ model induces an $ \textsc{LPC} $ model.
\end{proof}

\paragraph{Interpretation of the Logic.}

We define an interpretation of the $ \textsc{LPC} $ logic that maps propositions to objects
in the categories, and derivations to morphisms.
For objects, the $ \interp{ - }_{  \mode{L}  } $ interpretation function maps any mode of proposition
into the linear category. The interpretation of linear propositions is straightforward,
and for persistent propositions we define
\[    \interp{ \ottnt{P} }_{\mode L}     =  F_\oc  \,  \interp{ \ottnt{P} }_{\mode P}  
    \qquad
     \interp{ \ottnt{C} }_{\mode L}     =  F_\wn  \,  \interp{ \ottnt{C} }_{\mode C} .
\]
The functions
$ \interp{ - }_{  \mode{P}  } $ and $ \interp{ - }_{  \mode{C}  } $ map propositions into the producer
and the consumer categories $ \cat{P} $ and $ \cat{C} $ respectively,
but they are defined only on the persistent propositions. 
To map producer propositions into the consumer category and vice versa,
we define
\[
     \interp{ \ottnt{C} }_{\mode P}  =  \lowerstar{ { \ottsym{(}   \interp{ \ottnt{C} }_{\mode C}   \ottsym{)} } }  
    \qquad
     \interp{ \ottnt{P} }_{\mode C}  =  \star{ \ottsym{(}   \interp{ \ottnt{P} }_{\mode P}   \ottsym{)} } .
\]

Linear contexts are interpreted as a single proposition in the linear category.
The comma is represented by the tensor connector $ \mprod $ if the context
is meant to appear on the left-hand-side of a sequent, and by the cotensor
$ \msum $ if the context is meant to appear on the right. These interpretations
of linear contexts are represented as
$ \interp{ \Gamma }_{\mode L}^{\mprod} $ and $ \interp{ \Delta }_{\mode L}^{\msum} $ respectively.
In the producer category there is no cotensor
and vice versa for the consumer category, so
$ \interp{  \Gamma ^{\mode P}  }_{\mode P} $ interprets the comma as the tensor 
in the producer category, and
$ \interp{  \Gamma ^{\mode C}  }_{\mode C} $ interprets the comma as the cotensor
in the consumer category.

In this way a linear derivation $\mathcal{D}$ of the form $\Gamma  \shows  \Delta$ will be interpreted
as a morphism $ \interp{ \mathcal{D} }_{\mode L}   \ottsym{:}   \interp{ \Gamma }_{\mode L}^{\mprod}   \rightarrow   \interp{ \Delta }_{\mode L}^{\msum} $. 
However, it is not clear in which category we should interpret a persistent
sequent of the form $\Gamma  \Vdash  \Delta$, since $\Gamma$ and $\Delta$ may contain 
both producer and consumer propositions. 
Recall \Proposition{displacement}, which states that every such derivation $\mathcal{D}$
contains exactly one displaced proposition. This means that $\mathcal{D}$ is either of the form
$ \Gamma ^{\mode P}   \Vdash    \Delta ^{\mode C}  , \ottnt{P} $ or $  \Gamma ^{\mode P}  , \ottnt{C}   \Vdash   \Delta ^{\mode C} $. 
In the category $ \cat{P} $, this derivation will be interpreted as a morphism
\vspace{-3mm}\begin{align*}
 \interp{ \mathcal{D} }_{\mode P}   \ottsym{:}   \interp{  \Gamma ^{\mode P}  }_{\mode P}   \mprod   \interp{  \Delta ^{\mode C}  }_{\mode P}   \rightarrow   \interp{ \ottnt{P} }_{\mode P}  \quad\text{or} \quad
 \interp{ \mathcal{D} }_{\mode P}   \ottsym{:}   \interp{  \Gamma ^{\mode P}  }_{\mode P}   \mprod   \interp{  \Delta ^{\mode C}  }_{\mode P}   \rightarrow   \interp{ \ottnt{C} }_{\mode P} ,
\end{align*}
respectively. In the same way every derivation can be interpreted
as a morphism in $ \cat{C} $.

The interpretation is defined by mutual induction on the derivations.
\begin{enumerate}

\item The interpretation of the linear inference rules given in 
Figure~\ref{fig:linrules-linear} as well as the persistent rules
in Figure~\ref{fig:linrules-persistent} are straightforward from
the categorical structures.

\item 
\begin{minipage}[t]{0.70\textwidth}
The interpretation of weakening and contraction rules is defined
using the monoid in $ \cat{C} $ and comonoid in $ \cat{P} $. 
For weakening in the linear sequent, suppose $\mathcal{D}$ is the
derivation to the right.
The interpretation of $\mathcal{D}$ inserts the comonoidal component $   e  _{    }  ^{\mprod} $ in 
$ \cat{P} $ into the linear category:
\end{minipage}
\begin{minipage}[t]{0.25\textwidth}{
\vspace{-5mm}
\[
    \mathcal{D} = \inferrule* [right=$   \textsc{W}  ^{\vdash}  \text{-L} $]
    { \inferrule* { \mathcal{D}' } { \Gamma  \shows  \Delta } }
    { \Gamma , \ottnt{P}   \shows  \Delta}
\] }
\end{minipage}
\vspace{-5mm}
\begin{align*}
     \interp{ \mathcal{D} }_{\mode L}  :
     \interp{ \Gamma }_{\mode L}^{\mprod}   \mprod   F_\oc  \,  \interp{ \ottnt{P} }_{\mode P}  
    &\xrightarrow{ \interp{ \mathcal{D}' }_{\mode L}   \mprod   F_\oc  \,  {    e  _{    }  ^{\mprod}  } }
     \interp{ \Delta }_{\mode L}^{\msum}   \mprod   F_\oc  \,  1_{\mode P}  
    \xrightarrow{  \textrm{id}  _{    }   \mprod   {   n^{  F_\oc  }  _{    }  } }
     \interp{ \Delta }_{\mode L}^{\msum}   \mprod   1_{\mode L}  
    \xrightarrow{   \rho  _{    }  ^{\mprod} }
     \interp{ \Delta }_{\mode L}^{\msum} 
\end{align*}

\item
\begin{minipage}[t]{0.70\textwidth}
If the last rule in the derivation is an $   F_\oc   \text{-L} $ or $   F_\wn   \text{-R} $
rule, its interpretation is just the interpretation of its subderivation.
On the other hand, if the last rule is the right $ F_\oc $ rule, the
inductive hypothesis states that there exists a morphism 
$ \interp{ \mathcal{D}' }_{\mode P}   \ottsym{:}   \interp{  \Gamma ^{\mode P}  }_{\mode P}   \mprod   \interp{  \Delta ^{\mode C}  }_{\mode P}   \rightarrow   \interp{ \ottnt{P} }_{\mode P} $.
It is necessary to undo this duality transformation for interpretation in the linear category.
\end{minipage}
\begin{minipage}[t]{0.25\textwidth}
\vspace{-5mm}
\[
    \mathcal{D} = \inferrule* [right=$   F_\oc   \text{-R} $]
    { \inferrule* { \mathcal{D}' } {  \Gamma ^{\mode P}    \Vdash     \Delta ^{\mode C}  , \ottnt{P}  } }
    { \Gamma ^{\mode P}   \shows    \Delta ^{\mode C}  ,  F_\oc  \, \ottnt{P} }
\]
\end{minipage}

Notice that for any persistent context $\Gamma$, there is an isomorphism 
$ \pi  :  \interp{ \Gamma }_{\mode L}^{\mprod}  \cong  F_\oc  \,  \interp{ \Gamma }_{\mode P} $ given by the monoidal components of $ F_\oc $.
Furthermore, there is an isomorphism $ \tau $ between $ \dualize{ \ottsym{(}   \interp{ \Delta }_{\mode L}^{\msum}   \ottsym{)} } $ and $ F_\oc  \,  \interp{ \Delta }_{\mode P} $ 
given by the isomorphism $ \dualize{ \ottsym{(}   F_\wn  \, \ottnt{C}  \ottsym{)} }  \cong  F_\oc  \,  \lowerstar{ { \ottnt{C} } } $. Using $ \pi $ and $ \tau $ we define
the interpretation of $\mathcal{D}$:
{\small  \[ \hspace{-5mm} \begin{array}{rclcl}
     \interp{ \mathcal{D} }_{\mode L}  : 
     \interp{  \Gamma ^{\mode P}  }_{\mode L}^{\mprod}  
    &\xrightarrow{   \rho  _{    }  ^{\mprod} ;\ottsym{(}    \textrm{id}  _{    }   \mprod    \gamma^{1}  _{    }   \ottsym{)}}
    & \interp{  \Gamma ^{\mode P}  }_{\mode L}^{\mprod}   \mprod  \ottsym{(}   \dualize{ \ottsym{(}   \interp{  \Delta ^{\mode C}  }_{\mode L}^{\msum}   \ottsym{)} }   \msum   \interp{  \Delta ^{\mode C}  }_{\mode L}^{\msum}   \ottsym{)} 
    &\xrightarrow{  \pi  _{    }   \mprod  \ottsym{(}    \tau  _{    }   \msum    \textrm{id}  _{    }   \ottsym{)}} 
    & F_\oc  \,  \interp{  \Gamma ^{\mode P}  }_{\mode P}   \mprod  \ottsym{(}   F_\oc  \,  \interp{  \Delta ^{\mode C}  }_{\mode P}   \msum   \interp{  \Delta ^{\mode C}  }_{\mode L}^{\msum}   \ottsym{)} \\
    &\xrightarrow{ \delta }
    &\ottsym{(}   F_\oc  \,  \interp{  \Gamma ^{\mode P}  }_{\mode P}   \mprod   F_\oc  \,  \interp{  \Delta ^{\mode C}  }_{\mode P}   \ottsym{)}  \msum   \interp{  \Delta ^{\mode C}  }_{\mode L}^{\msum}  
    &\xrightarrow{  m^{  F_\oc  }  _{    }   \msum    \textrm{id}  _{    } }
    & F_\oc  \, \ottsym{(}   \interp{  \Gamma ^{\mode P}  }_{\mode P}   \mprod   \interp{  \Delta ^{\mode C}  }_{\mode P}   \ottsym{)}  \msum   \interp{  \Delta ^{\mode C}  }_{\mode L}^{\msum}  \\
    &\xrightarrow{ F_\oc  \,  \interp{ \mathcal{D}' }_{\mode P}   \msum    \textrm{id}  _{    } }
    & F_\oc  \,  \interp{ \ottnt{P} }_{\mode P}   \msum   \interp{  \Delta ^{\mode C}  }_{\mode L}^{\msum}  
    &\xrightarrow{   \sigma  _{    }  ^{\msum} }
    & \interp{  \Delta ^{\mode C}  }_{\mode L}^{\msum}   \msum   \interp{  F_\oc  \, \ottnt{P} }_{\mode L} 
\end{array} \] }

\item
\begin{minipage}[t]{0.70\textwidth}
Suppose the last rule in $\mathcal{D}$ is the left $ \bangfunctor{ \ottsym{-} } $ rule.
The interpretation of $\mathcal{D}$ should be a morphism from 
$ \interp{ \Gamma }_{\mode L}^{\mprod}   \mprod   F_\oc  \, \ottsym{(}   \bangfunctor{  \interp{ \ottnt{A} }_{\mode L}  }   \ottsym{)}$
to $ \interp{ \Delta }_{\mode L}^{\msum} $; we use the unit of the adjunction,
$  \epsilon  _{    }   \ottsym{:}   F_\oc  \,  \bangfunctor{ \ottnt{A} }   \rightarrow  \ottnt{A}$ to cancel out the exponentials.
\begin{align*}
     \interp{ \mathcal{D} }_{\mode L}  :
     \interp{ \Gamma }_{\mode L}^{\mprod}   \mprod   F_\oc  \, \ottsym{(}   \bangfunctor{  \interp{ \ottnt{A} }_{\mode L}  }   \ottsym{)} 
    \xrightarrow{  \textrm{id}  _{    }   \mprod    \epsilon  _{    } }
     \interp{ \Gamma }_{\mode L}^{\mprod}   \mprod   \interp{ \ottnt{A} }_{\mode L}  
    \xrightarrow{ \interp{ \mathcal{D}' }_{\mode L} }
     \interp{ \Delta }_{\mode L}^{\msum} 
\end{align*}

Similarly, the $   \bangfunctor{ \ottsym{-} }   \text{-R} $ rule uses the counit of the adjunction, along
with the isomorphisms $ \pi $ and $ \tau $ defined previously.
If the last rule in $\mathcal{D}$ is the $   \bangfunctor{ \ottsym{-} }   \text{-R} $ rule, its interpretation
is defined as follows:
\end{minipage}
\begin{minipage}[t]{0.25\textwidth}
\vspace{-8mm}
\begin{align*}
    \mathcal{D} = \inferrule* [right=$   \bangfunctor{ \ottsym{-} }   \text{-L} $]
    { \inferrule* { \mathcal{D}' } {  \Gamma , \ottnt{A}   \shows  \Delta } }
    { \Gamma ,  \bangfunctor{ \ottnt{A} }    \shows  \Delta}
\end{align*}
\begin{align*}
    \mathcal{D} = \inferrule* [right=$   \bangfunctor{ \ottsym{-} }   \text{-R} $]
    { \inferrule* { \mathcal{D}' } {  \Gamma ^{\mode P}   \shows    \Delta ^{\mode C}  , \ottnt{A}  } }
    { \Gamma ^{\mode P}   \Vdash    \Delta ^{\mode C}  ,  \bangfunctor{ \ottnt{A} }  }
\end{align*}
\end{minipage}

{\small \[ \hspace{-10mm} \begin{array}{rclcl}
     \interp{ \mathcal{D} }_{\mode P}  :  \interp{  \Gamma ^{\mode P}  }_{\mode P}   \mprod   \interp{  \Delta ^{\mode C}  }_{\mode P} 
    &\xrightarrow{  \eta  _{    }   \mprod    \eta  _{    } }
    & \bangfunctor{  F_\oc  \,  \interp{  \Gamma ^{\mode P}  }_{\mode P}  }   \mprod   \bangfunctor{  F_\oc  \,  \interp{  \Delta ^{\mode C}  }_{\mode P}  } 
    &\xrightarrow{ m^{  \bangfunctor{ \ottsym{-} }  } }
    & \bangfunctor{  F_\oc  \,  \interp{  \Gamma ^{\mode P}  }_{\mode P}   \mprod   F_\oc  \,  \interp{  \Delta ^{\mode C}  }_{\mode P}  }  \\
    &\xrightarrow{ \bangfunctor{  \inv{   \pi  _{    }  }   \mprod   {  \inv{   \tau  _{    }  }  }  } } 
    & \bangfunctor{  \dualize{  \interp{  \Gamma ^{\mode P}  }_{\mode L}^{\mprod}   \mprod  \ottsym{(}   \interp{  \Delta ^{\mode C}  }_{\mode L}^{\msum}   \ottsym{)} }  }  
    &\xrightarrow{ \bangfunctor{  \interp{ \mathcal{D}' }_{\mode L}   \mprod    \textrm{id}  _{    }  } }
    & \bangfunctor{  \dualize{ \ottsym{(}   \interp{  \Delta ^{\mode C}  }_{\mode L}^{\msum}   \msum   \interp{ \ottnt{A} }_{\mode L}   \ottsym{)}  \mprod  \ottsym{(}   \interp{  \Delta ^{\mode C}  }_{\mode L}^{\msum}   \ottsym{)} }  }  \\
    &\xrightarrow{ \bangfunctor{   \delta  _{    }  } }
    & \bangfunctor{ \ottsym{(}   \dualize{  \interp{  \Delta ^{\mode C}  }_{\mode L}^{\msum}   \mprod  \ottsym{(}   \interp{  \Delta ^{\mode C}  }_{\mode L}^{\msum}   \ottsym{)} }   \ottsym{)}  \msum   \interp{ \ottnt{A} }_{\mode L}  }  
    &\xrightarrow{  \gamma^{\bot}  _{    }   \msum    \textrm{id}  _{    } ; \bangfunctor{    \lambda  _{    }  ^{\msum}  } }
    & \bangfunctor{  \interp{ \ottnt{A} }_{\mode L}  }  =  \interp{  \bangfunctor{ \ottnt{A} }  }_{\mode P} 
\end{array} \] }

\end{enumerate}

\section{Examples}
\label{sec:examples}
This section provides some concrete instances of the $ \textsc{LPC} $ model.
The following chart summarizes the three examples and their LPC categories.
\vspace{-4mm}
\[ \begin{array}{l c c c}
&     \cat{L}  &  \cat{P}  &  \cat{C}  \\
    \hline 
\rule{0pt}{12pt}\mbox{Vectors} &     \FinVect  &  \FinSet  &  \op{  \FinSet  }  \\
\mbox{Relations} &   \Rel  &  \Set  &  \op{  \Set  }  \\
\mbox{Bool. Alg.} &     \FinBoolAlg  &  \FinPoset  &  \FinLat  
\end{array} \]

\paragraph{Vector Spaces.}
Linear logic shares many features with linear algebra, based on the natural 
interpretations of the tensor product and duality of vector spaces.
To construct an $ \textsc{LPC} $ model, let $ \cat{L} $ be the category of finite-dimensional
vector spaces over a finite field $\F$, $ \cat{P} $ be the category of finite sets and functions,
and $ \cat{C} $ be the opposite category of $ \cat{P} $.

The $ \mprod $ operator of linear logic is easily interpreted as the tensor product in $ \cat{L} $.
The $ \msum $ operator has no natural interpretation in terms of vector spaces,
but we may define $\ottnt{U}  \msum  \ottnt{V} := \ottnt{U}  \mprod  \ottnt{V}$.
The units $ 1_{\mode L} $ and $ \bot_{\mode L} $ may be any one-dimensional vector space; for 
concreteness let them be generated by the basis $\{  \mathbb{1}  \}$.

    The free vector space $\ottkw{Free} \, \ottsym{(}  X  \ottsym{)}$ of a finite set $X$ over $\F$ is the vector space
    with vectors the formal sums $\alpha_1 x_1 + \cdots + \alpha_n x_n$,
    addition defined pointwise, and scalar multiplication defined by distribution over the $x_i$'s.
A basis for $\ottkw{Free} \, \ottsym{(}  X  \ottsym{)}$ is the set $ \{   \delta_{ \ottnt{x} }   \mid   \ottnt{x}  \in  X   \} $ where $ \delta_{ \ottnt{x} } $ is the free sum $\ottnt{x}$.

    The dual of a vector space $\ottnt{V}$ (with basis $ B $)
    over $\F$ is the set $ \dualize{ \ottnt{V} } $ of linear maps from $\ottnt{V}$ to $\F$.
For any vector $ \ottnt{v}  \in  \ottnt{V} $, we can define $  \overline{ \ottnt{v} }   \in   \dualize{ \ottnt{V} }  $ to be the linear map 
acting on basis elements $ \ottnt{x}  \in  \ottnt{B} $ by
\[  \overline{ \ottnt{v} }   \ottsym{[}  \ottnt{x}  \ottsym{]} = \begin{cases}
         1  & \ottnt{x}  \ottsym{=}  \ottnt{v} \\
         0  &  \ottnt{x}  \neq  \ottnt{v} 
    \end{cases}
\]
Addition and scalar multiplication are defined pointwise.
Then $ \{   \overline{ \ottnt{x} }   \mid   \ottnt{x}  \in   B    \} $ is a basis for $ \dualize{ \ottnt{V} } $.

The additives $ \aprod $ and $ \asum $ are embodied by the notions of the direct product and
direct sum, which in the case of finite-dimensional vector spaces, coincide.

\begin{lemma}
    The category $ \FinVect $ is a symmetric linearly distributive category with negation,
    products, and coproducts.
\end{lemma}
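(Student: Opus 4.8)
The plan is to exploit the fact that the two monoidal structures coincide: since $\ottnt{U} \msum \ottnt{V} := \ottnt{U} \mprod \ottnt{V}$ is literally the tensor product, it suffices to establish a single symmetric monoidal structure on $\FinVect$ and then check that it carries the required linearly distributive and negation data. First I would verify that $(\FinVect, \mprod, 1_{\mode L})$ is symmetric monoidal, taking $\mprod$ to be the usual tensor product and $1_{\mode L} = \bot_{\mode L}$ to be the one-dimensional space $\F$; the associator, unitors, and symmetry are the standard natural isomorphisms of linear algebra, and Mac Lane coherence supplies the pentagon, hexagon, and unit conditions of \Definition{SMC}.

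For the linearly distributive structure, because $\mprod$ and $\msum$ are the same bifunctor, the distributor
\[ \delta_{A_1,A_2,A_3} : A_1 \mprod (A_2 \msum A_3) \to (A_1 \mprod A_2) \msum A_3 \]
is exactly the associator of $\mprod$. The Cockett--Seely coherence diagrams for $\delta$ therefore reduce to instances of the associativity and unit coherences already guaranteed for the single tensor, so this step is pure bookkeeping rather than genuine verification.

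For negation I would take $\dualize{\ottnt{A}}$ to be the linear dual described in the Examples section, with $\gamma^{\bot}$ the evaluation map $\dualize{\ottnt{A}} \mprod \ottnt{A} \to \bot_{\mode L}$ and $\gamma^{1}$ the coevaluation $1_{\mode L} \to \ottnt{A} \msum \dualize{\ottnt{A}}$ sending $\mathbb{1}$ to $\sum_{x \in B} x \mprod \overline{x}$ for a basis $B$. These are the standard (co)evaluation maps exhibiting $\FinVect$ as compact closed, and their required compatibility with $\delta$ amounts to the two triangle (``snake'') identities. Finally, $\FinVect$ is additive, so the biproduct $\oplus$ serves simultaneously as the finite product $\aprod$ and the finite coproduct $\asum$, with the zero space as both terminal ($\top$) and initial ($\ottsym{0}$) object.

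The main obstacle is the negation data: one must confirm that the coevaluation $\gamma^{1}$ is well-defined (independent of the chosen basis) and that it satisfies the snake identities together with evaluation. This is exactly where finite-dimensionality is essential, via the canonical isomorphism $\ottnt{A} \cong \dualize{\dualize{\ottnt{A}}}$; the finiteness of the field $\F$ plays no role beyond ensuring the objects are honest finite-dimensional spaces. Equivalently, one may simply observe that $\FinVect$ is compact closed and hence $*$-autonomous, so that \Theorem{distr-*-autonomous} delivers the symmetric linearly distributive structure with negation directly, reducing the entire argument to the standard compact-closed axioms plus the biproduct structure.
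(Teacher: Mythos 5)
Your proposal matches the paper's proof: the paper likewise takes $\delta$ to be the associator (since $\mprod$ and $\msum$ coincide), defines $\gamma^{\bot}$ and $\gamma^{1}$ as the basis-indexed evaluation and coevaluation maps, reduces the negation axioms to the snake identity $ \lambda \circ (\gamma^{\bot} \mprod \textrm{id}) \circ \alpha \circ (\textrm{id} \mprod \lambda) = \rho $, and uses the biproduct for $\aprod$ and $\asum$. Your closing observation that one could instead invoke compact closedness and \Theorem{distr-*-autonomous} is a legitimate shortcut the paper does not take, but the substance of the verification is the same either way.
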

\begin{proof}
    Since $ \mprod $ and $ \msum $ overlap, the distributivity transformation $ \delta $ is simply
    associativity. The coherence diagrams for linear distribution then depend on 
    the commutativity of tensor, associativity, and swap morphisms.
    To show the category has negation, we define $  \gamma^{\bot}  _{ \ottnt{A} }   \ottsym{:}   \dualize{ \ottnt{A} }   \mprod  \ottnt{A}  \rightarrow   \bot $ and 
    $  \gamma^{1}  _{ \ottnt{A} }   \ottsym{:}   1   \rightarrow   \dualize{ \ottnt{A}  \msum  \ottnt{A} } $ as follows,
    where $ B $ is a basis for $\ottnt{A}$:
    \[
          \gamma^{\bot}  _{ \ottnt{A} }  \, \ottsym{(}    \delta_{ \ottnt{u} }    \mprod  \ottnt{v}  \ottsym{)} =   \delta_{ \ottnt{u} }   \ottsym{[}  \ottnt{v}  \ottsym{]}   \cdot   \mathbb{1} 
        \qquad\qquad
          \gamma^{1}  _{ \ottnt{A} }  \, \ottsym{(}   \mathbb{1}   \ottsym{)}  \ottsym{=}   \sum_{ \ottnt{v}  \in   B  }   \ottnt{v}  \mprod   \overline{ \ottnt{v} }   
    \]
    It then suffices to check that
    $  \lambda  _{    }  \, \circ \, \ottsym{(}    \gamma^{\bot}  _{    }   \mprod    \textrm{id}  _{    }   \ottsym{)} \, \circ \,   \alpha  _{    }  \, \circ \, \ottsym{(}    \textrm{id}  _{    }   \mprod    \lambda  _{    }   \ottsym{)} =   \rho  _{    } $.
\end{proof}

We will present only the adjunction between $ \FinVect $ and the producing
category $ \FinSet $; the other can be inferred from the opposite category.
Define $ \bangfunctor{ \ottsym{-} }   \ottsym{:}   \FinVect   \Rightarrow   \FinSet $ to be the forgetful functor, which takes a vector space
to its underlying set of vectors.
It is a monoidal functor with components $  m^{  \bangfunctor{ \ottsym{-} }  }  _{  1  }   \ottsym{:}   1_{\mode P}   \rightarrow   \bangfunctor{  1  } $
and $  m^{  \bangfunctor{ \ottsym{-} }  }  _{  \ottnt{A}  ,  \ottnt{B}  }   \ottsym{:}   \bangfunctor{ \ottnt{A} }  \, \times \,  \bangfunctor{ \ottnt{B} }   \rightarrow   \bangfunctor{ \ottnt{A}  \mprod  \ottnt{B} } $ defined by
    $  m^{  \bangfunctor{ \ottsym{-} }  }  _{  1_{\mode L}  }  \, \ottsym{(}    \emptyset    \ottsym{)} =  \mathbb{1} $
    and 
    $  m^{  \bangfunctor{ \ottsym{-} }  }  _{  \ottnt{A}  ,  \ottnt{B}  }  \, \ottsym{(}  \ottnt{u}  \ottsym{,}  \ottnt{v}  \ottsym{)}  \ottsym{=}  \ottnt{u}  \mprod  \ottnt{v}$.

On objects, the functor $ F_\oc   \ottsym{:}   \FinSet   \Rightarrow   \FinVect $ takes a set $X$ to the free vector
space generated by $X$. For a morphism $\ottnt{f}  \ottsym{:}  X_{{\mathrm{1}}}  \rightarrow  X_{{\mathrm{2}}}$ in $ \FinSet $, we define
$ F_\oc  \, \ottnt{f} : \ottkw{Free} \, \ottsym{(}  X_{{\mathrm{1}}}  \ottsym{)} \rightarrow \ottkw{Free} \, \ottsym{(}  X_{{\mathrm{2}}}  \ottsym{)}$ to be
$ F_\oc  \, \ottnt{f} \, \ottsym{(}   \delta_{ \ottnt{x} }   \ottsym{)} =  \delta_{  \ottnt{f} \, \ottsym{(}  \ottnt{x}  \ottsym{)}  } .$
Then $ F_\oc $ is monoidal with components $  m^{  F_\oc  }  _{  1  }   \ottsym{:}   1   \rightarrow   F_\oc  \,  1_{\mode P} $
and $  m^{  F_\oc  }  _{  X_{{\mathrm{1}}}  ,  X_{{\mathrm{2}}}  }   \ottsym{:}   F_\oc  \, X_{{\mathrm{1}}}  \mprod   F_\oc  \, X_{{\mathrm{2}}}  \rightarrow   F_\oc  \, \ottsym{(}  X_{{\mathrm{1}}} \, \times \, X_{{\mathrm{2}}}  \ottsym{)}$ defined~as
\[
      m^{  F_\oc  }  _{  1_{\mode L}  }  \, \ottsym{(}   \mathbb{1}   \ottsym{)} =  \delta_{   \emptyset   }  
    \qquad\qquad
      m^{  F_\oc  }  _{  X_{{\mathrm{1}}}  ,  X_{{\mathrm{2}}}  }  \, \ottsym{(}    \delta_{ \ottnt{x_{{\mathrm{1}}}} }    \mprod    \delta_{ \ottnt{x_{{\mathrm{2}}}} }    \ottsym{)}  \ottsym{=}   \delta_{ \ottsym{(}  \ottnt{x_{{\mathrm{1}}}}  \ottsym{,}  \ottnt{x_{{\mathrm{2}}}}  \ottsym{)} } 
\]

\begin{lemma}
    The functors $ \bangfunctor{ \ottsym{-} } $ and $ F_\oc $ form a symmetric monoidal adjunction $ \bangfunctor{ \ottsym{-} }   \dashv   F_\oc $.
\end{lemma}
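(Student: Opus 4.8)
The plan is to first establish the underlying ordinary adjunction from the universal property of the free vector space, and then promote it to a symmetric monoidal adjunction via Kelly's doctrinal-adjunction theorem~\cite{kelly1974doctrinal}. First I would exhibit the unit and counit already used implicitly in \Section{category}: the counit $\epsilon_A : F_\oc\,\bangfunctor{A} \to A$ sends the basis element $\delta_u$ (one for each vector $u \in \bangfunctor{A}$) to the vector $u$ itself, extended linearly, while the unit $\eta_X : X \to \bangfunctor{F_\oc\,X}$ sends $x$ to the basis vector $\delta_x$. Naturality of each is immediate from the definitions of $F_\oc$ and $\bangfunctor{\ottsym{-}}$ on morphisms, and the two triangle identities reduce, on the generating basis elements $\delta_x$, to identities. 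Equivalently, the hom-set bijection $\mathrm{Hom}_{\FinVect}(F_\oc\,X, A) \cong \mathrm{Hom}_{\FinSet}(X, \bangfunctor{A})$ is precisely the universal property of $\ottkw{Free}$: every function $X \to \bangfunctor{A}$ extends uniquely to a linear map out of $F_\oc\,X$. This is the free--forgetful adjunction, with $F_\oc$ the strong monoidal functor and $\bangfunctor{\ottsym{-}}$ its lax partner.

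Next I would record the monoidal data. The components $m^{F_\oc}_{1}$ and $m^{F_\oc}_{X_1,X_2}$ given above carry the basis $\{\delta_{x_1} \mprod \delta_{x_2}\}$ of $F_\oc\,X_1 \mprod F_\oc\,X_2$ bijectively onto the basis $\{\delta_{(x_1,x_2)}\}$ of $F_\oc(X_1 \times X_2)$, so they are isomorphisms and $F_\oc$ is strong symmetric monoidal; the associativity, unit, and symmetry coherence squares of \Definition{SMF} then hold because, evaluated on these bases, they collapse to the coherence isomorphisms of $\mprod$ in $\FinVect$. Dually, $\bangfunctor{\ottsym{-}}$ is lax symmetric monoidal via $m^{\bangfunctor{\ottsym{-}}}_{A,B}(u,v) = u \mprod v$, whose coherence diagrams again follow pointwise from those of $\mprod$, now evaluated on pairs of vectors.

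Finally I would upgrade the plain adjunction to a symmetric monoidal one. By the doctrinal-adjunction theorem~\cite{kelly1974doctrinal} it suffices to check that $\eta$ and $\epsilon$ are monoidal natural transformations, equivalently that the lax structure $m^{\bangfunctor{\ottsym{-}}}$ on the right adjoint is the mate under the adjunction of the inverse $\inv{(m^{F_\oc})}$ of the strong structure on the left adjoint. Concretely this amounts to verifying the monoidality squares for $\eta$ and $\epsilon$ against $m^{\bangfunctor{\ottsym{-}}}$, $m^{F_\oc}$, and the cartesian structure of $\FinSet$, at both the binary level and the units $1_{\mode L}$ and $1_{\mode P}$. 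I expect this last step to be the main obstacle, precisely because $\bangfunctor{\ottsym{-}}$ is only lax: the non-invertible comparison $m^{\bangfunctor{\ottsym{-}}}$ must be threaded through each square rather than simply cancelled. However, every arrow in sight is linear and hence determined on the generators $\delta_x$, so each square reduces to the single computation matching $\delta_{(x_1,x_2)}$ with the pair $(\delta_{x_1}, \delta_{x_2})$ and becomes a finite diagram chase; the symmetry clause then follows from naturality of $\sigma$ on these generators.
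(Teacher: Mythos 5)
Your proof is correct and takes essentially the same route as the paper's: both define the same unit and counit ($\eta_X(x) = \delta_x$ and $\epsilon_A(\delta_v) = v$) and reduce the adjunction triangle identities and the monoidality squares to computations on the generating basis vectors $\delta_x$. The paper's proof is just the one-line version of yours (``it is easy to check''), and your doctrinal-adjunction framing, with $F_\oc$ as the strong monoidal left adjoint of the free--forgetful pair, matches the paper's own later remark that $F_\oc$ must be strong monoidal, even though the paper writes the adjunction as $ \bangfunctor{ \ottsym{-} }   \dashv   F_\oc $.
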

\begin{proof}
    We define the unit $  \epsilon  _{ \ottnt{A} }   \ottsym{:}   F_\oc  \,  \bangfunctor{ \ottnt{A} }   \rightarrow  \ottnt{A}$ and counit $  \eta  _{ \ottnt{P} }   \ottsym{:}  \ottnt{P}  \rightarrow   \bangfunctor{  F_\oc  \, \ottnt{P} } $ 
    of the adjunction as follows: 
    \[
          \epsilon  _{ \ottnt{A} }  \, \ottsym{(}   \delta_{ \ottnt{v} }   \ottsym{)}  \ottsym{=}  \ottnt{v} \qquad\qquad   \eta  _{ \ottnt{P} }  \, \ottsym{(}  \ottnt{x}  \ottsym{)}  \ottsym{=}   \delta_{ \ottnt{x} } 
    \]
    It is easy to check that $ \epsilon $ and $ \eta $ form an adjunction,
    and are both monoidal natural transformations.
\end{proof}

\begin{corollary} 
    $ \FinVect $, $ \FinSet $, and $ \op{  \FinSet  } $ together form an $ \textsc{LPC} $ model.
\end{corollary}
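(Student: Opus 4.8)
The plan is to verify the four components of \Definition{LPC} in turn, with $\cat{L} = \FinVect$, $\cat{P} = \FinSet$, and $\cat{C} = \op{\FinSet}$. Two of the components are already in hand. Component~(1) is exactly the preceding lemma asserting that $\FinVect$ is a symmetric linearly distributive category with negation, products $\aprod$, and coproducts $\asum$. The producer-side adjunction $\bangfunctor{-} \dashv F_\oc$ demanded by component~(4) is the immediately preceding lemma. Hence the work that remains is to supply components~(2) and~(3), the consumer half of~(4), and the two duality-compatibility isomorphisms.

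For component~(2) I would equip $\cat{P} = \FinSet$ with its cartesian monoidal structure, taking $\mprod$ to be the set-theoretic product $\times$ with unit a chosen singleton. Since $\cat{C} = \op{\cat{P}}$, the duality functors $\star{(-)} : \cat{P} \Rightarrow \cat{C}$ and $\lowerstar{(-)} : \cat{C} \Rightarrow \cat{P}$ are simply the canonical identity-on-objects contravariant functors between a category and its opposite; they are mutually inverse, so the required natural isomorphisms ${\epsilon_*^*}$ and ${\eta^*_*}$ are identities. The monoidal structure $\msum$ on $\cat{C}$ is then transported from $\mprod$ along this duality, which makes $\star{(-)}$ monoidal and $\lowerstar{(-)}$ comonoidal by construction.

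Component~(3) is immediate from the fact that $\FinSet$ is cartesian: every object $X$ carries a unique commutative comonoid structure, with comultiplication the diagonal $X \rightarrow X \times X$ and counit the unique map $X \rightarrow 1$ into the terminal object, the comonoid laws being precisely the universal property of the product. Dually, every object of $\cat{C} = \op{\FinSet}$ is a commutative monoid, and the two families are interchanged by the duality functors by construction, so conditions~(3a) and~(3b) hold.

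The consumer half of component~(4) I would obtain entirely by dualization, as the paper anticipates: taking opposites of the functors and natural transformations of the producer-side adjunction yields symmetric comonoidal functors $\whynotfunctor{-} : \cat{L} \Rightarrow \cat{C}$ and $F_\wn : \cat{C} \Rightarrow \cat{L}$ with $F_\wn \dashv \whynotfunctor{-}$, the comonoidal coherence and triangle identities being the formal duals of those already established. The genuinely new content---and the step I expect to be the main obstacle---is checking the duality-respecting isomorphisms $\dualize{(F_\oc\, P)} \simeq F_\wn\,(\star{P})$ and $\bangfunctor{A} \simeq \whynotfunctor{\dualize{A}}$, since these are where the vector-space negation $\dualize{(-)}$ (the dual space) must be reconciled with the canonical $\cat{P}$/$\cat{C}$ duality. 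Here I would exploit that $\FinVect$ over the finite field $\F$ is self-dual: $\dim \dualize{A} = \dim A$, so $\dualize{A} \cong A$ and the free space on a set is dual to the free space on the same set. For the first isomorphism, both $\dualize{(F_\oc\, P)}$ and $F_\wn\,(\star{P})$ are finite-dimensional spaces of dimension $\mcard{P}$; for the second, the underlying sets of $\bangfunctor{A}$ and $\whynotfunctor{\dualize{A}}$ both have cardinality $\mcard{\F}^{\dim A}$, yielding the required bijection in $\cat{C} = \op{\FinSet}$. The care needed is to confirm that these dimension and cardinality matches assemble into isomorphisms of the correct variance, natural in the relevant argument, after which the corollary follows by \Definition{LPC}.
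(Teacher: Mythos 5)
Your overall strategy is the same as the paper's: the corollary is meant to follow by assembling the two preceding lemmas (linear distributivity with negation for $\FinVect$, and the monoidal adjunction between $\bangfunctor{-}$ and $F_\oc$), taking the cartesian comonoid structure on $\FinSet$ and its formal dual on $\op{\FinSet}$ for components (2) and (3) of \Definition{LPC}, and inferring the consumer-side adjunction from the opposite category, which is exactly what the paper says it does.

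The one step that does not yet constitute a proof is your verification of $\dualize{(F_\oc\,P)} \simeq F_\wn\,(\star{P})$ and $\bangfunctor{A} \simeq \whynotfunctor{\dualize{A}}$ by matching dimensions and cardinalities. Equality of dimension (resp.\ cardinality) yields \emph{some} isomorphism, but these isomorphisms must be natural in $P$ and $A$ and must cohere with the monoidal data, and an isomorphism chosen by counting carries no naturality whatsoever; you acknowledge this caveat but leave it open, and it is precisely the content of the condition. The repair is to name the canonical maps instead of counting: for a finite set $P$ the dual space $\dualize{(\free{P})}$ carries the dual basis $\{\,\overline{\delta_x} \mid x \in P\,\}$, and $\delta_x \mapsto \overline{\delta_x}$ defines an isomorphism $\free{P} \to \dualize{(\free{P})}$ whose naturality in $P$ (with the correct contravariance, since $\star{(-)}$ reverses arrows) is a short direct check on basis vectors; likewise $\bangfunctor{A} \simeq \whynotfunctor{\dualize{A}}$ should come from the double-dual evaluation $A \to \dualize{(\dualize{A})}$, which is natural and an isomorphism in finite dimension. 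With those canonical witnesses in place of the dimension count, the rest of your outline goes through and agrees with the paper.
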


Linear algebra has been considered as a model for linear logic multiple times in 
the literature.
    Ehrhard \cite{Ehrhard05} presents finiteness
     spaces, where the objects are spaces of vectors with finite support. In his model,
    the $ \oc $ operator sends a space $\ottnt{A}$ to the space supported
    by finite multisets over $\ottnt{A}$;
    it takes some effort to show that this comonad respects the finiteness conditions.
    Pratt \cite{Pratt94} proves that finite dimensional vector spaces over a field 
    of characteristic~2 is a Chu space and thus a model of linear logic. 
    Valiron and Zdancewic \cite{ValironZdancewic14} show that the $ \textsc{LPC} $ 
    model of $ \FinVect $ is
    a sound and complete semantic model for an algebraic $\lambda$-calculus.    

\paragraph{Relations.}
Let $ \Rel $ be the category of sets and relations, and let $ \Set $ be the category
of sets and functions. (Notice that the sets in either category here may be infinite, unlike
in the $ \FinVect $ case.) It is easy to see that $ \Rel $ is linearly distributive where
the tensor and the cotensor are both cartesian product, and distributivity is
just associativity. The unit is a singleton set, and
negation on $ \Rel $ is the identity operation.

$ \Set $ is cartesian and its opposite category $ \op{  \Set  } $, cocartesian.
The $ F_\oc $ and $ F_\wn $ functors are the forgetful functors which interpret a
function as a relation. The $ \bangfunctor{ \ottsym{-} } $ functor takes a set to its powerset.
Suppose $ R $ is a relation between $\ottnt{A}$ and $\ottnt{B}$. The function
$ \bangfunctor{  R  }   \ottsym{:}   \bangfunctor{ \ottnt{A} }   \rightarrow   \bangfunctor{ \ottnt{B} } $ is defined as
\begin{align*}
     \bangfunctor{  R  } \ottsym{(}  X  \ottsym{)} &=  \{   \ottnt{y}  \in  \ottnt{B}   \mid   \exists   \ottnt{x}  \in  X  ,  \ottsym{(}  \ottnt{x}  \ottsym{,}  \ottnt{y}  \ottsym{)}  \in   R     \} .
\end{align*}
Then $ \bangfunctor{ \ottsym{-} } $ has monoidal components 
$  m^{  \bangfunctor{ \ottsym{-} }  }  _{  1  }   \ottsym{:}   1_{\mode P}   \rightarrow   \bangfunctor{  1_{\mode L}  } $ and $  m^{  \bangfunctor{ \ottsym{-} }  }  _{  \ottnt{A}  ,  \ottnt{B}  }   \ottsym{:}   \bangfunctor{ \ottnt{A} }  \, \times \,  \bangfunctor{ \ottnt{B} }   \rightarrow   \bangfunctor{ \ottnt{A} \, \times \, \ottnt{B} } $ defined by
\[
      m^{  \bangfunctor{ \ottsym{-} }  }  _{  1_{\mode L}  }  \, \ottsym{(}    \emptyset    \ottsym{)}
        =    \emptyset    
    \qquad
      m^{  \bangfunctor{ \ottsym{-} }  }  _{  \ottnt{A}  ,  \ottnt{B}  }  \, \ottsym{(}   X_{{\mathrm{1}}}   \ottsym{,}   X_{{\mathrm{2}}}   \ottsym{)}
        = X_{{\mathrm{1}}} \, \times \, X_{{\mathrm{2}}}
\]
The dual notion $ \whynotfunctor{ \ottsym{-} } $ is just the inverse.

Melli{\'e}s \cite{Mellies03} discusses a non-model of linear logic based on $ \Rel $,
where the exponential takes a set $X$ to the finite subsets of $X$.
That ``model'' fails because the comonad unit $  \epsilon  _{ \ottnt{A} }   \ottsym{:}  \oc  \ottnt{A}  \rightarrow  \ottnt{A}$ is not natural. 
In the $ \textsc{LPC} $ formulation, $ \epsilon $ is derived from the adjunction,
ensuring naturality.

\paragraph{Boolean Algebras.}

Next we consider an example of the $ \textsc{LPC} $ categories where $ \cat{P} $
and $ \cat{C} $ are related by a non-trivial duality. The relationship
is based on Birkhoff's representation theorem \cite{Birkhoff37}, which 
can be interpreted as a duality between the categories of finite partial orders
and order-preserving maps ($ \cat{P} $) on the one hand, and 
finite distributive lattices with bounded lattice homomorphisms ($ \cat{C} $) 
on the other hand. 

The linear category $ \cat{L} $ is the category of
finite boolean algebras with bounded lattice homomorphisms.
For the monoidal structure, the units are both the singleton lattice $ \{    \emptyset    \} $,
and the tensors $\ottnt{A}  \mprod  \ottnt{B}$ and $\ottnt{A}  \msum  \ottnt{B}$ are the boolean algebra with base set
$\ottnt{A} \, \times \, \ottnt{B}$ and lattice structure as follows:
\[ \begin{aligned}
     \bot  &= \ottsym{(}   \bot   \ottsym{,}   \bot   \ottsym{)} \\
    \ottsym{(}  \ottnt{x_{{\mathrm{1}}}}  \ottsym{,}  \ottnt{y_{{\mathrm{1}}}}  \ottsym{)}  \vee  \ottsym{(}  \ottnt{x_{{\mathrm{2}}}}  \ottsym{,}  \ottnt{y_{{\mathrm{2}}}}  \ottsym{)} &= \ottsym{(}  \ottnt{x_{{\mathrm{1}}}}  \vee  \ottnt{x_{{\mathrm{2}}}}  \ottsym{,}  \ottnt{y_{{\mathrm{1}}}}  \vee  \ottnt{y_{{\mathrm{2}}}}  \ottsym{)} 
\end{aligned} \qquad \begin{aligned}
     \top  &= \ottsym{(}   \top   \ottsym{,}   \top   \ottsym{)} \\
    \ottsym{(}  \ottnt{x_{{\mathrm{1}}}}  \ottsym{,}  \ottnt{y_{{\mathrm{1}}}}  \ottsym{)}  \wedge  \ottsym{(}  \ottnt{x_{{\mathrm{2}}}}  \ottsym{,}  \ottnt{y_{{\mathrm{2}}}}  \ottsym{)} &= \ottsym{(}  \ottnt{x_{{\mathrm{1}}}}  \wedge  \ottnt{x_{{\mathrm{2}}}}  \ottsym{,}  \ottnt{y_{{\mathrm{1}}}}  \wedge  \ottnt{y_{{\mathrm{2}}}}  \ottsym{)}
\end{aligned} \qquad
\begin{aligned}
     \neg  \ottsym{(}  \ottnt{x}  \ottsym{,}  \ottnt{y}  \ottsym{)}  = \ottsym{(}   \neg  \ottnt{x}   \ottsym{,}   \neg  \ottnt{y}   \ottsym{)}
\end{aligned}
\]

Given a partially ordered set $(\ottnt{P}, \le )$, a subset $X \subseteq \ottnt{P}$
is called \emph{lower} if it is downwards closed with respect to $ \le $.
The set of all lower sets of $\ottnt{P}$ forms a lattice with $ \top =\ottnt{P}$,
$ \bot_{\mode L} = \emptyset $, meet as union and join and intersection. Let $ \star{ \ottnt{P} } $
refer to this lattice.

Meanwhile, given a lattice $\ottnt{C}$, an element $\ottnt{x}$ is join-irreducible 
if $\ottnt{x}$ is neither $ \bot_{\mode L} $ nor the join of any two elements less than $\ottnt{x}$.
That is, $\ottnt{x} \neq \ottnt{y}  \vee  \ottnt{z}$ for $\ottnt{y},\ottnt{z} \neq \ottnt{x}$. Let 
$ \lowerstar{ { \ottnt{C} } } $ be the partially ordered set with base set the join-irreducible elements
of $\ottnt{C}$, with the ordering 
$\ottnt{x}  \le  \ottnt{y}$ iff $\ottnt{x}  \ottsym{=}  \ottnt{y}  \wedge  \ottnt{x}$.

The operators $ \star{ \ottsym{(}  \ottsym{-}  \ottsym{)} } $ and $ \lowerstar{ \ottsym{(}  \ottsym{-}  \ottsym{)} } $ extend to functors
that form a duality between $ \cat{P} $ and $ \cat{C} $ \cite{Stanley11}.

The monoidal structure on $ \cat{P} $ is given by the cartesian product with the ordering
$\ottsym{(}  \ottnt{x_{{\mathrm{1}}}}  \ottsym{,}  \ottnt{y_{{\mathrm{1}}}}  \ottsym{)}  \le  \ottsym{(}  \ottnt{x_{{\mathrm{2}}}}  \ottsym{,}  \ottnt{y_{{\mathrm{2}}}}  \ottsym{)}$ iff $\ottnt{x_{{\mathrm{1}}}}  \le  \ottnt{x_{{\mathrm{2}}}}$ and $\ottnt{y_{{\mathrm{1}}}}  \le  \ottnt{y_{{\mathrm{2}}}}$.
The unit is the singleton order $ \{    \emptyset    \} $.
It is easy to check that every poset has a communitive comonoid.

Finite distributive lattices have a monoidal structure 
with the unit the singleton lattice $ \{    \emptyset    \} $
and the tensor $\ottnt{C_{{\mathrm{1}}}}  \msum  \ottnt{C_{{\mathrm{2}}}}$ the lattice where the base set is $\ottnt{C_{{\mathrm{1}}}} \, \times \, \ottnt{C_{{\mathrm{2}}}}$.
For every lattice $\ottnt{C}$ in $ \cat{C} $ there exists a commutative monoid with
components $   e  _{ \ottnt{C} }  ^{\msum}   \ottsym{:}   \bot_{\mode C}   \rightarrow  \ottnt{C}$ and $   d  _{ \ottnt{C} }  ^{\msum}   \ottsym{:}  \ottnt{C}  \msum  \ottnt{C}  \rightarrow  \ottnt{C}$
as follows:
\[ 
       e  _{ \ottnt{C} }  ^{\msum}  \, \ottsym{(}    \emptyset    \ottsym{)} =  \bot 
    \qquad\qquad
       d  _{ \ottnt{C} }  ^{\msum}  \, \ottsym{(}  \ottnt{x}  \ottsym{,}  \ottnt{y}  \ottsym{)} = \ottnt{x}  \wedge  \ottnt{y}
\]

\noindent
The components of the monoid in $ \cat{C} $ and the comonoid in $ \cat{P} $
are interchanged under the Birkhoff duality.

Next we define the symmetric monoidal functors. Define
$ \bangfunctor{ \ottsym{-} }   \ottsym{:}   \cat{L}   \Rightarrow   \cat{P} $ and
$ \whynotfunctor{ \ottsym{-} }   \ottsym{:}   \cat{L}   \Rightarrow   \cat{C} $
to be forgetful functors.
For $ \bangfunctor{ \ottsym{-} } $ in particular, the order induced by the boolean algebra is 
$\ottnt{x}  \le  \ottnt{y}$ iff $\ottnt{x}  \ottsym{=}  \ottnt{y}  \wedge  \ottnt{x}$.

Define $ F_\oc $ and $ F_\wn $ to be the powerset algebra, which takes a structure with base
set $\ottnt{X}$ to the boolean algebra with base set $\mathcal{\ottnt{X}}$,
with top, bottom, join, meet and negation corresponding to $\ottnt{X}$, $ \emptyset $,
union, intersection and complementation respectively. On morphisms, define
\[  F_\oc  \, \ottnt{f} \, \ottsym{(}   X   \ottsym{)} =  F_\wn  \, \ottnt{f} \, \ottsym{(}   X   \ottsym{)} =  \{  \ottnt{f} \, \ottsym{(}  \ottnt{x}  \ottsym{)}  \mid   \ottnt{x}  \in  X   \} . \]

It is easy to check that these functors respect the dualities in that
$ \dualize{ \ottsym{(}   F_\oc  \, \ottnt{P}  \ottsym{)} }  \simeq  F_\wn  \,  \star{ \ottnt{P} } $
and
$ \star{  \bangfunctor{ \ottnt{A} }  }   \simeq   \whynotfunctor{  \dualize{ \ottnt{A} }  } $.
To prove $ F_\oc   \dashv   \bangfunctor{ \ottsym{-} } $, it suffices to show a
bijection of homomorphism sets $\Hom{ F_\oc  \, \ottnt{P},\ottnt{A}} \cong \Hom{\ottnt{P}, \bangfunctor{ \ottnt{A} } }$.
Suppose $\ottnt{f}  \ottsym{:}   F_\oc  \, \ottnt{P}  \rightarrow  \ottnt{A}$ in $ \cat{L} $. Then define $ \ottnt{f} ^\sharp   \ottsym{:}  \ottnt{P}  \rightarrow   \bangfunctor{ \ottnt{A} } $ 
by
\[
     \ottnt{f} ^\sharp  \, \ottsym{(}  \ottnt{x}  \ottsym{)} = \ottnt{f} \, \ottsym{(}    \{   \ottnt{z}  \in  X   \mid  \ottnt{z}  \le  \ottnt{x}  \}    \ottsym{)}
\]
This morphism is in fact order-preserving.
Next, for $\ottnt{g}  \ottsym{:}  \ottnt{P}  \rightarrow   \bangfunctor{ \ottnt{A} } $ define $ \ottnt{g} ^\flat   \ottsym{:}   F_\oc  \, \ottnt{P}  \rightarrow  \ottnt{A}$ as follows:
\[
     \ottnt{g} ^\flat  \,  \ottsym{(}  X  \ottsym{)}  =  \bigvee_{  \ottnt{x}  \in  X  }  \ottnt{g} \, \ottsym{(}  \ottnt{x}  \ottsym{)} 
\]
Again $ \ottnt{g} ^\flat $ is a lattice homomorphism.
It remains to check that  $ \ottsym{(}   \ottnt{f} ^\sharp   \ottsym{)} ^\flat =\ottnt{f}$ and $ \ottsym{(}   \ottnt{g} ^\flat   \ottsym{)} ^\sharp =\ottnt{g}$. 

From these definitions, the unit $  \epsilon  _{ \ottnt{A} }   \ottsym{:}   F_\oc  \,  \bangfunctor{ \ottnt{A} }   \rightarrow  \ottnt{A}$
and counit $  \eta  _{ \ottnt{P} }   \ottsym{:}  \ottnt{P}  \rightarrow   \bangfunctor{  F_\oc  \, \ottnt{P} } $ of the adjunction are 
\[
      \epsilon  _{ \ottnt{A} }  \, \ottsym{(}   X   \ottsym{)} =    \textrm{id}  _{  \bangfunctor{ \ottnt{A} }  }  ^\flat  \, \ottsym{(}   X   \ottsym{)} =  \bigvee_{  \ottnt{x}  \in  X  }    \textrm{id}  _{  \bangfunctor{ \ottnt{A} }  }  \, \ottsym{(}  \ottnt{x}  \ottsym{)}  =  \bigvee  X 
    \qquad\qquad
      \eta  _{ \ottnt{P} }  \, \ottsym{(}  \ottnt{x}  \ottsym{)} =  \ottsym{(}    \textrm{id}  _{  F_\oc  \, \ottnt{P} }   \ottsym{)} ^\sharp  \, \ottsym{(}  \ottnt{x}  \ottsym{)} =  \{  \ottnt{z}  \mid  \ottnt{z}  \le  \ottnt{x}  \} 
\]
To show the adjunction is monoidal, it suffices to prove $ \epsilon $ and $ \eta $
are monoidal natural transformations.

The proof of the comonoidal adjunction $ \whynotfunctor{ \ottsym{-} }   \dashv   F_\wn $ is similar.

\section{Related work}
\label{sec:related}
Girard \cite{Girard87} first introduced linear logic to mix
the constructivity of intuitionistic propositional logic
with the duality of classical logic.
Partly because of this constructivity, there
has been great interest in the semantics of linear logic
in both the classical and intuitionistic fragments.
Consequently, there exist several
categorical frameworks for its semantic models.

One influential framework is Benton \etal's \textit{linear
  category}~\cite{BentonBHP93term}, consisting of
a symmetric monoidal closed category with products and
a linear exponential comonad $ \oc $. 
Schalk \cite{Schalk04} adapted linear categories to the classical case
by requiring that the symmetric monoidal closed category be *-autonomous. 
The coproduct $ \msum $ and coexponential $ \wn $ are then induced from the duality. 

Cockett and Seely \cite{CockettSeely97}, 
seeking to study $ \mprod $ and $ \msum $ as independent structures
unobscured by duality, introduced linearly distributive categories, which make up
the linear category in the $ \textsc{LPC} $ model. The authors extended this motivation
to the exponentials by modeling $ \oc $ and $ \wn $ as linear functors \cite{Cockett99}, 
meaning that $ \wn $ is \emph{not} derived from $ \oc $ and $ \dualize{ \ottsym{(}  \ottsym{-}  \ottsym{)} } $.
The $ \textsc{LPC} $ model reflects that work by allowing $ \oc $ and $ \wn $ 
to have different adjoint decompositions.

Other variations of classical linear logic, notably Girard's
Logic of Unity~\cite{Girard93}, distinguish linear propositions from persistent ones.
In the intuitionistic case, Benton \cite{Benton94mixed}
developed the linear/non-linear logic and categorical model described in \Section{othermodels}.
Barber used this model as the semantics for a term calculus
called $ \textsc{DILL} $~\cite{Barber96}.
A Lafont category \cite{Lafont88} is a canonical instance of an $ \textsc{LNL} $ model
where $\oc  \ottnt{A}$ is the free commutative comonoid generated by $\ottnt{A}$.
This construction automatically admits an adjunction between 
between a linear category $ \cat{L} $ and the 
category of commutative comonoids over $ \cat{L} $.
However, the $ \textsc{LNL} $ and $ \textsc{LPC} $ models have an advantage over Lafont categories
by allowing a much greater range of interpretations for the exponential. 
Lafont's construction excludes traditional models of linear logic like
coherence spaces and the category $ \Rel $.

\medskip 
\noindent\textbf{Acknowledgments.}\quad
The authors thank Beno\^{i}t Valiron, 
Paul-Andr{\'e} Melli{\`e}s, and Marco Gaboardi
for their insights during discussions about this work.
This material is based in part upon work supported by the National Science Foundation Graduate 
Research Fellowship under Grant No. DGE-1321851 and by NSF Grant No. CCF-1421193.


\bibliographystyle{eptcs}
\bibliography{linear}

\end{document}